\newcommand{\nocontentsline}[3]{}
\newcommand{\tocless}[2]{\bgroup\let\addcontentsline=\nocontentsline#1{#2}\egroup}
\newcommand{\IfLabelExistsTF}[3]{\@ifundefined{r@#1}{#3}{#2}}
\renewcommand{\xRightarrow}[2][]{\ext@arrow 03{10}{10}\Rightarrowfill@{#1}{#2}}
\tikzset{
	nat/.style={double,double equal sign distance},
	nat>/.style={nat,-implies},
	<nat/.style={nat,implies-},
	descr/.style={anchor=center,fill=white},
	crossing/.style={preaction={draw=white,-,line width=#1}},
	crossing/.default=2pt,
	extended/.style={shorten >=-#1, shorten <=-#1},
	extended/.default=2pt,
	text decoration line/.style={
		line width=.1ex,solid,
		rounded corners=0,
		round cap-round cap},
	capped line/.style={round cap-round cap},
	diagram/.style={
		font=\small,auto,scale=1.5,-latex,
		baseline=(current bounding box.center)}
}
\newcommand{\customlabel}[4][0]{%
	\protected@write\@auxout{}{\lstring\newlabel{#3}{{#4}{\thepage}{#4}{#3}{}}}%
	\protected@write\@auxout{}{\lstring\newlabel{#3@cref}{{[#2][#1][#1]#4}{\thepage}}}%
}
\theoremstyle{plain}
\newtheorem{theorem}{Theorem}[section]
\newtheorem{corollary}[theorem]{Corollary}
\newtheorem{proposition}[theorem]{Proposition}
\newtheorem{lemma}[theorem]{Lemma}
\theoremstyle{definition}
\newtheorem{definition}[theorem]{Definition}
\newtheorem{remark}[theorem]{Remark}
\newtheorem{example}[theorem]{Example}
\crefname{proof}{proof of}{proofs of}
\Crefname{proof}{Proof of}{Proofs of}
\newcounter{proofatend@statemet}
\newcommand*\fixstatement[2][Proof of]{%
	\expandafter\preto\csname end#2\endcsname{%
		\global\def\proofatend@proofof{#1}%
		\stepcounter{proofatend@statemet}%
		\label{proofatend:\theproofatend@statemet}%
}}
\def\proofatend@toks{1000}
\newcounter{proofatend@stored}
\newcounter{proofatend@printed}
\let\proofatend@omit0
\let\proofatend@inplace1
\let\proofatend@postpone2
\newcommand{\omitproofs}{\global\let\proofatend@mode\proofatend@omit}
\newcommand{\postponeproofs}{\global\let\proofatend@mode\proofatend@postpone}
\newcommand{\keepproofs}{\global\let\proofatend@mode\proofatend@inplace}
\BODY\end{proof}%
		\edef\next{%
			\noexpand\begin{proof}[{\proofatend@proofof~\noexpand\cref{proofatend:\theproofatend@statemet}}]%
			\noexpand\phantomsection%
			\noexpand\label{proofatend:proof-\theproofatend@statemet}%
			\unexpanded\expandafter{\BODY}}%
\NewDocumentCommand{\omittedproofs}{o}{%
	\ifnum\value{proofatend@printed}<\value{proofatend@stored}%
		\IfNoValueF{#1}{#1}%
		\count@=\value{proofatend@printed}%
		\loop%
			\the\toks\numexpr\proofatend@toks+\count@\relax%
			\stepcounter{proofatend@printed}%
			\ifnum\count@<\value{proofatend@stored}%
				\advance\count@\@ne%
		\repeat%
	\fi%
}
\DeclareMathOperator{\supp}{supp}
\newcommand{\defeq}{\triangleq}
\newcommand{\defiff}{\stackrel{\triangle}{\iff}}
\newcommand{\To}{\Rightarrow}
\newcommand{\cat}[1]{\textnormal{\bfseries #1}\xspace}
\newcommand{\Cat}{\cat{Cat}}
\newcommand{\Set}{\cat{Set}}
\newcommand{\kl}{\cat{Kl}}
\newcommand{\Fun}[2]{\left[#1,#2\right]}
\newcommand{\JFun}[3]{{\Fun{#1}{#2}^{#3}}}
\newcommand{\Cpoj}{\ensuremath{\omega\cat{-Cpo}^{\!\vee\!}}}
\newcommand{\Dcpo}{\ensuremath{\cat{DCpo}}}
\newcommand{\Dcpoj}{\ensuremath{\cat{DCpo}^{\!\vee\!}}}
\newcommand{\Sup}{\cat{Sup}}
\newcommand{\Id}{{I\mspace{-1mu}d}}
\newcommand{\UM}[2][\Sigma]{{#2}^{#1}}
\newcommand{\UAM}[2][\Sigma]{{#2}^{#1}_{\scriptstyle \checkmark}}
\newcommand{\LTS}[1][\Sigma]{\UM[#1]{\mathcal{P}}}
\newcommand{\ENA}[1][\Sigma]{\UAM[#1]{\mathcal{P}}}
\newcommand{\CM}{{\mathcal{C\mspace{-2mu}M}}}
\newcommand{\ST}[1][\Sigma]{\UM[#1]{\CM}}
\newcommand{\SA}[1][\Sigma]{\UAM[#1]{\CM}}
\newcommand{\lstr}{\mathit{\thickmuskip=0mu lstr}}
\newcommand{\rstr}{\mathit{\thickmuskip=0mu rstr}}
\newcommand{\dstr}{\mathit{\thickmuskip=0mu dstr}}
\def\liftKl#1{{\mathpalette\liftKl@{#1}}}
\def\liftKl@#1#2{{%
\tikz[baseline=(n.base)]{
		\node[inner sep=1pt,outer sep=0pt] (n) {$\m@th#1#2$};
		\draw[text decoration line, capped line] ($(n.north west)+(.1ex,.0ex)$) -- ($(n.north west)+(.2ex,.1ex)$) -- ($(n.north east)+(-.2ex,.1ex)$) -- ($(n.north east)+(-.1ex,.0ex)$);
}}}
\newcommand*{\nxrightarrow}[2]{\xrightarrow{(#1,\,#2)}}
\newcommand*{\nxRightarrow}[2]{\xRightarrow{(#1,\,#2)}}
\newcommand*{\xrsquigarrow}[1]{\mathrel{%
\begin{tikzpicture}[
	baseline= {( $ (current bounding box.south) + (0,-0.5ex) $ )}]
	\node[inner sep=.5ex,minimum width=2ex] (lbl) {$\scriptstyle #1$};
\path[draw,<-,decorate,
  decoration={zigzag,amplitude=0.7pt,segment length=1.2mm,pre=lineto,pre length=4pt}] 
    (lbl.south east) -- (lbl.south west);
\end{tikzpicture}}%
}
\begin{document}

\keywords{Coalgebras, lax functors, general saturation, timed behavioural equivalence, timed language equivalence, timed probabilistic automata}
\title[Timed behavioural equivalences]{Behavioural equivalences for timed systems}

\author[Tomasz Brengos]{Tomasz Brengos\rsuper{a}}
\address{%
	\lsuper{a}Faculty of Mathematics and Information Science,
	Warsaw University of Technology,
	Koszykowa~75, 00-662 Warszawa, Poland
}
\email{t.brengos@mini.pw.edu.pl}
\thanks{Supported by the grant of Warsaw University of Technology no.~504M for young researchers.}

\author[Marco Peressotti]{Marco Peressotti\rsuper{b}}
\address{%
	\lsuper{b}Department of Mathematics and Computer Science,
	University of Southern Denmark,
	Campusvej~55, DK-5230 Odense M, Denmark
}
\email{peressotti@imada.sdu.dk}
\thanks{Partially supported by the Independent Research Fund Denmark, Natural Sciences, grant DFF-7014-00041.}

\begin{abstract}
  \looseness=-1
  Timed transition systems are behavioural models that include an explicit treatment of time flow and are used to formalise the semantics of several foundational process calculi and automata. 
  Despite their relevance, a general mathematical characterisation of timed transition systems and their behavioural theory is still missing.
  We introduce the first uniform framework for timed behavioural models that encompasses known behavioural equivalences such as timed bisimulations, timed language equivalences as well as their weak and time-abstract counterparts. All these notions of equivalences are naturally organised by their discriminating power in a spectrum. We prove that this result does not depend on the type of the systems under scrutiny: it holds for any generalisation of timed transition system. We instantiate our framework to timed transition systems and their quantitative extensions such as timed probabilistic systems. 
\end{abstract}

\maketitle


\section{Introduction}
\label{sec:introduction}

Since Aczel’s seminal work \cite{am89:final}, the theory of coalgebras has been recognised as a good context for the study of concurrent and reactive systems \cite{rutten:universal}: systems are represented as maps of the form $X\to BX$ for a suitable \emph{behavioural functor} $B$. By changing the underlying category and functor a wide range of cases are covered, from traditional LTSs to systems with I/O, quantitative aspects, probabilistic distribution, and even systems with continuous state. 
Frameworks of this kind provide great returns from a theoretical and a practical point of view, since they prepare the ground for general results and tools which can be readily instantiated to various cases, and  they help us discover connections and similarities between apparently different notions.
Among the several valuable results offered by the coalgebraic approach we mention general accounts of bisimulation \cite{am89:final,staton11}, structural operational semantics \cite{tp97:tmos,klin:tcs2011,mp:qapl14,mp2016:tcs-gsos,ks2013:w-s-gsos}, weak bisimulation \cite{mp2013:weak-arxiv,gp:icalp2014,brengos2014:cmcs,brengos2015:jlamp}, trace equivalence \cite{hasuo07:trace,jacobssilvasokolova2012:cmcs,peressotti:phdthesis,kerstan2013coalgebraic}, minimization \cite{bonchi2014:tcl-mini}, determinisation \cite{bonchi2013:lmcs-det}, up-to techniques \cite{bonchi2014:lics-upto}, encodings \cite{mp:ictcs2016}.

Timed transition systems (TTSs) are behavioural models that include an explicit treatment of time flow and are used to formalise the semantics of several foundational process calculi and automata \cite{wang90,schneider:1995,nicolin1991,hooman:1992,yi:1990,alur:tcs1994-at}.
In \cite{kick:coalgebraic_semantics}, \citeauthor{kick:coalgebraic_semantics} introduced the first coalgebraic characterisation of (strong) timed bisimulation for \citeauthor{wang90}'s Timed CCS \cite{wang90} and similar calculi with explicit time (\eg delay operations, timeouts) and instantaneous actions (\eg process communication). Key to this characterisation is the introduction of the \emph{evolution comonad} $\mathcal{E}$: a comonad whose coalgebras capture state changes due to the passage of time in the sense that $\mathcal{E}$-coalgebras are (isomorphic to) partial left actions of the monoid modelling time (\eg $([0,\infty),+,0)$).
If a behavioural functor $B\colon\Set \to \Set$ admits a cofree comonad $B^\infty$ then, the timed extension of behaviours modelled as $B$-coalgebras is modelled as coalgebras for the product of comonads $\mathcal{E}\times B^\infty$ \cite{kick:coalgebraic_semantics}.
This construction applies only to the subclass of TTSs that separate timed transitions from discrete actions (a transition either models a change due to the passage of time or change due to an instantaneous interaction). Therefore, \cite{kick:coalgebraic_semantics} does not provide a model of Timed CSP \cite{schneider:1995} or Timed Automata \cite{alur:tcs1994-at}.

\begin{figure}
	\centering
	\begin{tikzpicture}[
			auto, font=\footnotesize,
			scale=1,		
			box/.style={text width=#1, align=center,draw,fill=white},
			dot/.style={circle,draw,fill,inner sep=1.5pt,outer sep=1pt},
			box con/.style={-,thin,shorten >=4pt,crossing=1.7pt},
			hasse con/.style={-stealth,thick,crossing=2pt}
		]

		\node[box={3cm}] (b000) at (-4,-1.5) {strong timed bisimulation};
		\node[box={3cm}] (b100) at (-4,-0.5) {weak timed bisimulation};
		\node[box={3cm}] (b010) at (-4, 0.5) {strong time-abstract bisimulation};
		\node[box={3cm}] (b110) at (-4, 1.5) {weak time-abstract bisimulation};
		\node[box={3cm}] (b001) at ( 4,-1.5) {strong timed language equivalence};
		\node[box={3cm}] (b101) at ( 4,-0.5) {weak timed language equivalence};
		\node[box={3cm}] (b011) at ( 4, 0.5) {strong time-abstract language equivalence};
		\node[box={3cm}] (b111) at ( 4, 1.5) {weak time-abstract language equivalence};

		\begin{scope}[
				yshift=.2cm,
			]
		\begin{scope}[
			z={(0.866cm,0.4cm)}, 
			x={(-0.866cm,0.4cm)}, 
			y={(0cm,1cm)},
			rotate around y=-10,
			scale=.75,
		]
			\foreach \x/\xc in {0/-1,1/1}{
				\foreach \y/\yc in {0/-1,1/1}{
					\foreach \z/\zc in {0/-1,1/1}{
					\node[dot] (p\x\y\z) at (\xc,\yc,\zc) {};
			}}}
		\end{scope}
		\end{scope}

		\begin{scope}[
				yshift=-3.2cm,
			]
		\begin{scope}[
				z={(0.866cm,0.5cm)}, 
				x={(-0.866cm,0.5cm)}, 
				y={(0cm,1cm)},
				rotate around y=-10,
				scale=1,
				font=\scriptsize,
				auto,
				-stealth
			]
			\draw[->] (0,0,0) 
				to node [pos=.7,anchor=north east,] 
				{abstracts silent moves}
				(1,0,0);
			\draw[->] (0,0,0) 
				to node [pos=1,anchor=south,]
				{abstracts time}
				(0,1,0);
			\draw[->] (0,0,0) 
				to node [pos=.7,anchor=north west,swap]
				{abstracts branching}
				(0,0,1);
		\end{scope}
		\end{scope}

		\foreach \x in {0,1}{
			\foreach \y in {0,1}{
				\foreach \z/\a in {0/east,1/west}{
			\draw[box con] (b\x\y\z.\a) -- (p\x\y\z);
		}}}
				
		\draw[hasse con] (p100) to (p101);
		\draw[hasse con] (p001) to (p101);
		\draw[hasse con] (p101) to (p111);

		\draw[hasse con] (p000) to (p100);
		\draw[hasse con] (p000) to (p010);
		\draw[hasse con] (p000) to (p001);
		\draw[hasse con] (p100) to (p110);
		\draw[hasse con] (p010) to (p110);
		\draw[hasse con] (p010) to (p011);
		\draw[hasse con] (p001) to (p011);
		\draw[hasse con] (p011) to (p111);
		\draw[hasse con] (p110) to (p111);
		
		\draw[box con] (b010.east) -- (p010); 

	\end{tikzpicture}
	\caption{The spectrum of branching and linear equivalences of timed systems.}
	\label{fig:timed-equivalences-spectrum}
\end{figure}

In this paper, we introduce the first categorical framework for timed behavioural models like TTSs and their rich behavioural theory.
We provide a new definition of behavioural equivalence called  \emph{$q$-bisimulation} building on the theory of  bisimulation and a new extension of saturation \cite{brengos2015:corr,brengos2015:jlamp,brengos2015:lmcs} we introduce in this work to support time and time abstraction. 
The definition is parametrised and these parameters drive the saturation component describing how computations can be observed \eg whether the duration of single steps is observed, their combined duration, or no duration at all.
For the first time, we are able to capture many behavioural equivalences of interest: we show that the notions of timed bisimulation, timed language equivalence, as well as their weak and time-abstract counterparts, all correspond to specific instances of $q$-bisimulation.

All these notions of behavioural equivalence for TTSs are known to have different discriminating power and are naturally organised in the expressiveness spectrum reported in \cref{fig:timed-equivalences-spectrum} as a Hasse diagram (more discriminating notions are at the bottom and less ones at the top).
We present general result for deriving this kind of expressiveness spectra by simply looking at the parameters used to instantiate the abstract definition of $q$-bisimulation.
Furthermore, we show that these spectra are independent from specific computational effects: they hold for (non-deterministic) TTS as well as their quantitative extensions like timed probabilistic systems.

\subsection*{Synopsis and related work}
This work is closely related to research presented in \cite{brengos2014:cmcs,brengos2015:lmcs,brengos2015:jlamp,brengos2015:corr} with the emphasis laid on \cite{brengos2015:corr}. Indeed, \citeauthor{brengos2015:corr}' \cite{brengos2015:corr}, which is highly motivated by \citeauthor{sobocinski:jcss}'s work on relational presheaves (\ie lax functors whose codomain category is the category of sets and relations) and their saturation \cite{sobocinski:jcss}, presents the lax functorial framework as a natural extension of weak bisimulation via saturation studied in \cite{brengos2015:lmcs,brengos2015:jlamp}. The main focus of \cite{brengos2015:corr} is on lax functorial weak bisimulation and reflexive and transitive saturation. We remark that the first author already pointed out in \loccit that timed transition systems and their weak behavioural equivalence can be modelled in the lax functorial setting. This work extends these results in a systematic way by:
\begin{itemize}
  \item presenting the concept of general saturation and the family of behavioural equivalences associated with it (\cref{sec:general-saturation-and-equivalences});
	\item describing the categorical framework for behavioural models with explicit time treatment (\cref{sec:systems-as-functors});
  \item capturing a much wider spectrum of language and behavioural equivalences (\cref{sec:timed-equivalences});
  \item providing new case studies of timed behavioural models like \eg Segala systems and weighted systems (\cref{sec:more-applications}).
\end{itemize}
Preliminaries on the behavioural theory of timed transition systems and the necessary categorical machinery are in \cref{sec:timed-transition-systems,sec:categorical-background}, respectively.
Final remarks are in \cref{sec:conclusion}.
This paper is an extended and improved version of our conference paper \cite{brengos2016:concur}. We included new results on general expressiveness spectra for behavioural equivalences and instantiated our framework on Segala and other quantitative systems.

\section{Timed transition systems and behavioural equivalences}
\label{sec:timed-transition-systems}

The aim of this section is to recall basic notions on timed transition systems (TTS) and their behavioural equivalences known in the classical literature.
Timed transition systems are, in their most general form, labelled transition systems whose transitions are labelled with action symbols and time durations.
They are used as semantics models of timed automata (\cf \cite{alur:tcs1994-at}) and timed processes (\cf \cite[Sec.~2.3]{kick:phdthesis}).

In the sequel, we write $\Sigma$ for set of action symbols and $\mathbb{T}$ for the set of time durations. We reserve the symbol $\tau$ to transitions meant to be unobservable and write $\Sigma_\tau$ for the set $\Sigma + \{\tau\}$ of action symbols extended with $\tau$.
We assume that the set of time durations carries a monoid structure which we denote as $\mathbb{T}$ unless otherwise specified; the prototypical example is the monoid $([0,\infty),+,0)$ of positive real numbers under addition.

\subsection{Timed behavioural equivalences}
Fix a TTS $\alpha\colon X \to \mathcal{P}(\Sigma_\tau \times \mathbb{T} \times X)$. We write $x \nxrightarrow{\sigma}{t} y$ to denote a timed step in $\alpha$ meaning that $y \in \alpha(x)(t,\sigma)$.
We write $x \nxRightarrow{\sigma}{t} y$ for a saturated timed step in $\alpha$ where $\To$ denotes the least relation closed under the following:
\begin{equation*}
	\frac{}{x \nxRightarrow{\tau}{0} x}
	\qquad
	\frac{x \nxRightarrow{\tau}{t_0} x' \quad x' \nxrightarrow{\sigma}{t_1} y' \quad y' \nxRightarrow{\tau}{t_2} y \quad t = t_0+t_1+t_2}{x \nxRightarrow{\sigma}{t} y}
\end{equation*}

\begin{defiC}[\protect\cite{alur:tcs1994-at,larsen:tcs1997}]
	\label{def:nd-ta-behavioural-equivalences}
	For a TTS $\alpha$ and an equivalence relation $R$ on its carrier:
	\begin{itemize}
	\item 
		$R$ is a \emph{timed bisimulation} for $\alpha$ if $x \mathrel{R} y$ and $x\nxrightarrow{\sigma}{t}x'$ implies that there is $y'\in X$ such that $y\nxrightarrow{\sigma}{t} y'$ and $ x' \mathrel{R} y'$; 
	\item 
		$R$ is a \emph{(strong) time-abstract bisimulation} for $\alpha$ if $x \mathrel{R} y$ and $x\nxrightarrow{\sigma}{t}x'$ implies that there are $y'\in X$ and $t'\in [0,\infty)$ such that $y\nxrightarrow{\sigma}{t'} y'$ and $x' \mathrel{R} y'$; 
	\item
		$R$ is a \emph{weak timed bisimulation} for $\alpha$ if $x \mathrel{R} y$ and $x\nxRightarrow{\sigma}{t} x'$ implies that there is $y'\in X$ such that $y\nxRightarrow{\sigma}{t} y'$ and $x' \mathrel{R} y'$;  
	\item
		$R$ is a \emph{weak time-abstract bisimulation} for $\alpha$ if $x  \mathrel{R}  y$ and $x\nxRightarrow{\sigma}{t}x'$ implies that there are $y'\in X$ and $t'\in [0,\infty)$ such that $y\nxRightarrow{\sigma}{t'} y'$ and $x' \mathrel{R} y'R$.  
	\end{itemize}
\end{defiC}

Assume that $\alpha$ has also accepting (timed) steps \ie that it has type $\alpha \colon X \to \mathcal{P}(\Sigma_\tau \times \mathbb{T} \times X + \{\checkmark\})$. We write $x \xrightarrow{t} \checkmark$ to denote that $x$ can make an accepting move in time $t$ and terminate.
We write $x \xRightarrow{t} \checkmark$ for the saturated equivalent \ie any step derivable by means of the rule below.
\[
	\frac{
		x \nxRightarrow{\tau}{t} x' 
		\qquad
		x' \xrightarrow{t'} \checkmark
	}{
		x \xRightarrow{t + t'} \checkmark
	}
\]
A timed word $t_0\sigma_1t_1\dots\sigma_nt_n$ is accepted by a state $x_0$ provided there is a sequence of timed steps $x_0 \nxrightarrow{\sigma_1}{t_0}\ldots \nxrightarrow{\sigma_n}{t_{n-1}} x_n \xrightarrow{t_n} \checkmark$ in $\alpha$.
The timed language accepted by $x_0$ is the set 
\begin{align*}
	\mathrm{tl}_{\alpha}(x_0) \defeq &
	\{
		t_0\sigma_1t_1\dots\sigma_nt_n \in \mathbb{T} \times (\Sigma_\tau\times \mathbb{T})^\ast
	\mid
		x_0 \nxrightarrow{\sigma_1}{t_0}\ldots \nxrightarrow{\sigma_n}{t_{n-1}} x_n \xrightarrow{t_n} \checkmark
	\}
	\\
\intertext{of timed words accepted by $x_0$ and the untimed one is the set}
	\mathrm{utl}_{\alpha}(x_0) \defeq &
	\{
		 	\sigma_1\ldots\sigma_n \in  \Sigma_\tau ^\ast  
	\mid
		 	\exists t_0,\ldots t_n \in \mathbb{T} \text{ s.t. } t_0\sigma_1t_1\dots\sigma_nt_n\in \mathrm{tl}_{\alpha}(x_0)
	\}
\intertext{of untimed words accepted by $x_0$. A weaker notion is readily obtained by allowing saturated steps in the above definitions:}
	\mathrm{wtl}_{\alpha}(x_0) \defeq &
	\{
		t_0\sigma_1t_1\dots\sigma_nt_n \in \mathbb{T} \times (\Sigma_\tau\times \mathbb{T})^\ast
	\mid
		x_0 \nxRightarrow{\sigma_1}{t_0}\ldots \nxRightarrow{\sigma_n}{t_{n-1}} x_n \xRightarrow{t_n} \checkmark
	\}
	\\
	\mathrm{wutl}_{\alpha}(x_0) \defeq &
	\{
		 	\sigma_1\ldots\sigma_n \in  \Sigma_\tau ^\ast  
	\mid
		 	\exists t_0,\ldots t_n \in \mathbb{T} \text{ s.t. } t_0\sigma_1t_1\dots\sigma_nt_n\in \mathrm{tl}_{\alpha}(x_0)
	\}
	\text{.}
\end{align*}
\begin{definition} 
	\label{def:nd-ta-language-equivalences}
	For a TTS $\alpha$ and an equivalence relation $R$ on its carrier:
	\begin{itemize}
	\item 
		$R$ is a \emph{timed language equivalence} for $\alpha$ if $x \mathrel{R} y$ implies that $\mathrm{tl}_{\alpha}(x) = \mathrm{tl}_{\alpha}(y)$; 
	\item 
		$R$ is a \emph{time-abstract language equivalence} for $\alpha$ if $x \mathrel{R} y$ implies that $\mathrm{utl}_{\alpha}(x) = \mathrm{utl}_{\alpha}(y)$; 
	\item
		$R$ is a \emph{weak timed language equivalence} for $\alpha$ if $x \mathrel{R} y$ implies that $\mathrm{wtl}_{\alpha}(x) = \mathrm{wtl}_{\alpha}(y)$; 
	\item
		$R$ is a \emph{weak time-abstract language equivalence} for $\alpha$ if $x \mathrel{R} y$ implies that $\mathrm{wutl}_{\alpha}(x) = \mathrm{wutl}_{\alpha}(y)$. 
	\end{itemize}
\end{definition}

\subsection{Timed automata and their semantics}
\label{sec:timed-basic}

Timed automata are presented as machines akin to non-deterministic automata and equipped with a (finite) set of clocks for recording time flow. Besides consuming an input character, transitions have the side effect of resetting some of these clocks. Moreover, transition activation depends on the values stored in by the automata clocks and are conveniently described by means of \emph{guards} (or clock \emph{constraints}) \ie syntactic expressions generated by the grammar:
\begin{equation}
	\label{eq:clock-grammar}
	\delta \Coloneqq c\leq r\mid r\leq c \mid \neg \delta \mid \delta \wedge \delta
\end{equation}
where $c$ ranges over a given set of clocks $C$ and $r$ is a non-negative rational number (\cf \cite[Def.~3.6]{alur:tcs1994-at}). In the following, we write $\mathcal{G}(C)$ for the set of guards defined by \eqref{eq:clock-grammar} on the (finite) set of clocks $C$.
Then, timed automata are described by directed (multi) graphs whose edges are labelled with an input character, a clock guard, and a set of clocks to be reset to $0$.
\begin{definition}[Timed automaton]
	\label{def:timed-syntax}
	A \emph{timed automaton} or a \emph{timed transition table} is a tuple $\mathcal{A} = (\Sigma,L,C,E)$, where
	\begin{itemize}
	\item $\Sigma$ is a set called \emph{alphabet},
	\item $L$ is a set of \emph{states} or \emph{locations},
	\item $C$ is a set of \emph{clocks},
	\item $E\subseteq L\times \mathcal{G}(C)\times 2^C \times \Sigma \times L$ is a set of \emph{edges}.
	\end{itemize}
\end{definition}
It is important to note that the original definition of timed automaton and its semantics \cite{alur:tcs1994-at} come with an extra component that is missing in the above, namely an initial state. Given the aims of this work, this information can be safely omitted.

\begin{example}
	\label{ex:simple-ta}
	Consider the timed transition table from \cite[Ex.~3.4]{alur:tcs1994-at} depicted aside. The set
	\begin{minipage}[]{.8\textwidth}
		 $C$ of clocks $C$ is $\{c\}$ and the alphabet $\Sigma$ is $\{\sigma,\theta\}$. The edge from $l$ to $l'$ describes a transition that can be performed provided the input character is $\sigma$ and resets $c$ as its side effect. The other transition assumes $c < 2$ input $\theta$ and does not reset $c$.
	\end{minipage}\begin{minipage}[]{.2\textwidth}
		\hfill
		\begin{tikzpicture}[
			font=\footnotesize,>=latex',shorten >=1pt,node distance=1.3cm,auto,
			state/.style={draw,circle,inner sep=4pt}
			]
			\node[state] (s0) {$l$};
			\node[state] (s1) [right=of s0] {$l'$};
			\draw[->,bend left] (s0) to node {$c \leq 0$; $\{c\}$; $\sigma$} (s1);
			\draw[->,bend left] (s1) to node {$c < 2$; $\emptyset$; $\theta$} (s0);
		\end{tikzpicture}
	\end{minipage}
\end{example}

\looseness=-1
Timed automata are abstract devices for recognising timed languages which in turn can be intuitively seen\footnote{Timed words are usually represented as particular words in $\Sigma \times [0,\infty)$ \ie sequences of pairs $(\sigma_i,t_i)$ where $i \leq j \implies t_i \leq t_j$ but these can be equivalently presented by means of time spans where $t_i$ is the time passed since the last transition.} as words over the alphabet $\Sigma \times [0,\infty)$. Then, it is natural to model the semantics of a timed automaton $\mathcal{A} = (\Sigma, L,C,E)$ as a labelled transition system $\alpha\colon S\to \mathcal{P}(\Sigma\times [0,\infty)\times S)$ where the state space $S$ actually is given by extending the set of locations $L$ with all possible assignments of values for its clocks. These assignments are called \emph{clock valuations} (or simply valuations) and are functions $v\colon C \to [0,\infty)$ from the set of clocks to the chosen domain of time---in the following, let $\mathcal{V}$ be the set of all valuations (for $C$). 

Before we formally define the semantics of timed automata, we need to define guard satisfiability and how resets and time flow affect clock valuations.
A valuation $v\colon C \to [0,\infty)$ is said to satisfy a guard $\delta \in \mathcal{G}(C)$, written $v \vDash \delta$, whenever the expression obtained by replacing in $\delta$ each clock with the value specified by $v$ holds. Formally:
\[\begin{array}{ll}
v \vDash c \leq r & \text{ if } v(c) \leq r\text{, }\\
v \vDash r \leq c & \text{ if } r \leq v(c)\text{, }\\
v \vDash \delta \land \delta' & \text{ if } v \vDash \delta \text{ and } v \vDash \delta'\text{, }\\
v \vDash \lnot\delta & \text{ if } v \not\vDash \delta\text{.}
\end{array}\]
For $v\colon C \to [0,\infty)$, $C'\subseteq C$, and $t\in [0,\infty)$ define $[C'\leftarrow 0]v$ and $v+t$ as follows:
\begin{align*}
	[C'\leftarrow 0]v(c) \defeq & {} \begin{cases}
		0 & \text{if } c \in C'\\
		v(c) & \text{otherwise}
	\end{cases}
	\\
	(v+t)(c) \defeq {} & v(c)+t
	\text{.}
\end{align*}

Let $t$ be the time elapsed since the automaton execution started and let $v$ be the evaluation that describes the value held by each clock. If there is an edge $(l,\delta,C',\sigma,l')$ in the transition table $E$ then, the automaton can  perform a transition from $l$ to $l'$ at time $t + t'$ provided that the input symbol is $\sigma$ and that the valuation $v+t'$ satisfies $\delta$. The transition has the effect of consuming the input symbol $\sigma$, changing the current location from $l$ to $l'$, and advancing all clocks of $t'$ units save for those in $C'$ which are reset to $0$. Formally:
\begin{definition}
	\label{def:timed-semantic-lts}
	For a timed automaton  $\mathcal{A} = (\Sigma, L,C,E)$ define its semantic model is the timed transition system $\alpha$ with state-space $L\times \mathcal{V}$ and transitions and transitions as follows
	\[
		(l,v) \xrightarrow{(\sigma,t)} (l',v') \defiff 
		\exists (l,\delta,C',\sigma,l')\in E \text{ s.t. } 
		v+t \vDash \delta \text{ and } 
		v' = [C'\leftarrow 0](v+t) \text{.}
	\]
\end{definition}

\begin{example}
	Recall the timed automaton described in Example~\ref{ex:simple-ta} and consider the LTS modelling its semantics. Traces starting in $(l,v)$ are sequences such that for any consecutive $(\sigma,t_i)$ and $(\theta,t_{i+1})$ we have $t_i + t_{i+1} < 2$.
\end{example}

\subsection{Timed processes and their semantics}
Timed processes calculi are a family of process calculi that include an explicit treatment of time flow as part of the behavioural model for the system under scrutiny; some illustrative examples are Timed CSP \cite{schneider:1995}, Timed CCS \cite{wang90}, ATP \cite{nicolin1991}.
Their semantics is given in terms of timed transition systems (in the sense of this section) however, some calculi restrict to special cases of TTSs to enforce certain assumptions on the system under scrutiny. 
For instance, Timed CCS and ATP processes distinguish between \emph{discrete}  and \emph{timed} transitions and define a TTS as a quadruple $(X,\Sigma_\tau,{\xrightarrow{}},{\xrsquigarrow{}})$ where
\begin{itemize}
	\item $X$ is the set of states;
	\item $\Sigma$ is the set of (action) symbols;
	\item ${\xrightarrow{}} \subseteq {X \times \Sigma_\tau \times X}$ represents ``duration-less'' \emph{discrete transitions}; 
	\item ${\xrsquigarrow{}} \subseteq {X \times [0,\infty)\times X}$ describes ``symbol-less'' \emph{time transitions}
\end{itemize}
and subject to 
\begin{align*}
	&x \xrsquigarrow{t+t'} x'\iff \exists x''\ x \xrsquigarrow{t} x'' \land x'' \xrsquigarrow{t'} x' 
	 & \qquad {(\emph{Continuity})}
	\\
	& x \xrsquigarrow{t} x' \land x \xrsquigarrow{t} x'' \implies x' = x''
	& \qquad {(\emph{Determinacy})}
\intertext{and, depending on the model, some additional constraints like \eg:}
	& x \xrsquigarrow{0} x
	& \qquad \text{(\emph{Zero Delay}\ \cite{larsen:tcs1997})}
	\\
	& x \xrightarrow{\tau} x' \implies x \centernot{\xrsquigarrow{t}}{}
	& \qquad \text{(\emph{Urgency}\ \cite{hooman:1992,yi:1990,nicolin1991})}
	\\
	& x \xrsquigarrow{t} x' \land x \xrightarrow{\sigma} x'' \implies x' \xrightarrow{\sigma} x''
	& \qquad \text{(\emph{Persistency}\ \cite{yi:1990})}
\end{align*}
A complete review of timed process calculi is out of the scope of this work; we refer the interested reader to \cite[Sec.~2.3]{kick:phdthesis}. 

A timed transition systems over the terms of a timed process calculus is then usually given by a SOS specification akin to classical, non-timed calculi.
A prototypical example of syntactic construct used by timed process calculi is the time prefix $(t).P$ that intuitively delays the execution of its continuation $P$ of $t$ units of time. The semantics of time prefixes can be specified by the following SOS rules.
\[
	\frac
		{}
		{(t).x \xrsquigarrow{t} x}
	\qquad
	\frac
		{}
		{(t+t').x \xrsquigarrow{t'} (t).x}
	\qquad
	\frac
		{x \xrsquigarrow{t'} x'}
		{(t).x \xrsquigarrow{t+t'} x'}
\]
Another example is the rule below which specifies that time affects components composed in parallel equally:
\[
	\frac
		{x_1 \xrsquigarrow{t} x'_1 \qquad x_2 \xrsquigarrow{t} x'_2}
		{x_1 \mid x_2  \xrsquigarrow{t} x_1' \mid x_2'}
\]

\section{Categorical background}\label{sec:categorical-background}

We assume that the reader is familiar with the following basic category theory notions: a category, a functor, a monad and an adjunction. Here we briefly recall some of them here and also present other basics needed in this paper.

\subsection{Coalgebras}
Let $\cat{C}$ be a category and $F \colon \cat{C}\rightarrow \cat{C}$ a functor.  An \emph{$F$-coalgebra} is a morphism $\alpha\colon X\to FX$ in $\cat{C}$.  The domain $X$ of $\alpha$ is called \emph{carrier} and the morphism  $\alpha$ is sometimes also called \emph{structure}. A \emph{homomorphism} from an $F$-coalgebra $\alpha\colon X\to FX$ to an $F$-coalgebra $\beta\colon Y\to FY$  is an arrow $f \colon X\rightarrow Y$ in $\cat{C}$ such that $ F(f)\circ \alpha =  \beta \circ f$. The category of all $F$-coalgebras and homomorphisms between them is denoted by $\cat{Coalg}(F)$. Many transition systems can be captured by the notion of coalgebra. The most important from our perspective are listed below. 

Let $\Sigma$ be a fixed set and put $\Sigma_\tau = \Sigma+\{\tau\}$. The label $\tau$ is considered a special label called \emph{silent} or \emph{invisible} label. 

\begin{example}[Labelled transition systems] $\mathcal{P}(\Sigma_\tau\times Id)$-coalgebras are  labelled transition systems over the alphabet $\Sigma_\tau$ \cite{milner:cc,rutten:universal, sangiorgi2011:bis}. Here, $\mathcal{P}$ denotes the powerset functor.  In this paper we also consider labelled transition systems with a monoid structure on labels, \ie coalgebras of the type $\mathcal{P}(M\times Id)$,  or even more generally, as coalgebras of the type $\mathcal{P}(\Sigma_\tau\times M\times Id)$ for a monoid $(M,\cdot,1)$. 
\end{example}

\begin{example}[Non-deterministic automata with $\varepsilon$-moves] $\mathcal{P}(\Sigma_\tau \times Id+1)$-coalgebras are non-deterministic automata with $\varepsilon$-transitions \cite{hasuo07:trace}. Here, $\varepsilon$-moves are labelled with $\tau$ and $1= \{\checkmark\}$ is responsible for specifying which states are final and which are not. To be more precise, given $\varepsilon$-NA $\alpha\colon X\to \mathcal{P}(\Sigma_\tau \times X +1)$ a state $x\in X$ is \emph{final} iff $\checkmark \in \alpha(x)$. 
\end{example}

\subsection{Monads and their Kleisli categories} 

A monad on a category \cat{C} is a triple $(T,\mu,\eta)$ where $T$ is an endofunctor over \cat{C} and $\mu \colon TT \To T$ and $\eta \colon \Id \To T$ are two natural transformations with the property that they make the diagrams below commute:
\[
	\begin{tikzpicture}[diagram]	
		\node (n0) at (0,1) {$T^3$};
		\node (n1) at (1,1) {$T^2$};
		\node (n2) at (0,0) {$T^2$};
		\node (n3) at (1,0) {$T$};
		\draw[nat>] (n0) to node {$\mu T$} (n1);
		\draw[nat>] (n0) to node[swap] {$T\mu$} (n2);
		\draw[nat>] (n1) to node {$\mu$} (n3);
		\draw[nat>] (n2) to node[swap] {$\mu$} (n3);
	\end{tikzpicture}
	\qquad
	\begin{tikzpicture}[diagram]	
		\node (n0) at (0,1) {$T$};
		\node (n1) at (1,1) {$T^2$};
		\node (n2) at (2,1) {$T$};
		\node (n3) at (1,0) {$T$};
		\draw[nat>] (n0) to node {$\eta T$} (n1);
		\draw[nat>] (n0) to node[swap] {$\Id$} (n3);
		\draw[nat>] (n2) to node[swap] {$T\eta$} (n1);
		\draw[nat>] (n1) to node {$\mu$} (n3);
		\draw[nat>] (n2) to node[] {$\Id$} (n3);
	\end{tikzpicture}	
\]
The natural transformations $\mu$ and $\eta$ are called \emph{multiplication} and \emph{unit} of $T$, respectively.

Each monad $(T,\mu,\eta)$ gives rise to a canonical category called \emph{Kleisli category of $T$} and denoted by $\kl(T)$. This category has the same objects of the category $\cat{C}$ underlying $T$; its hom-sets are given as $\kl(T)(X,Y) = \cat{C}(X,TY)$ for any two objects $X$ and $Y$ in $\cat{C}$ and its composition as
\begin{equation*}
	\begin{tikzpicture}[diagram]	

		\node[left] (n0) at (0,0) {$X$};

		\foreach \l/\n [count = \i] in {
			f/TY,
			Tg/TTZ,
			\mu_{Z}/TZ
		}{
			\pgfmathparse{int(\i-1)}
			\draw[->] (n\pgfmathresult.east) --++(.75,0)
					node[below,pos=.5] {$\l$}
					node[right] (n\i) {$\n$};			
		};
		
		\draw[->,rounded corners] (n0) -- +(0,.5) -| node[pos=.25] {$g \circ f$} (n\i);

	\end{tikzpicture}	
\end{equation*}
for any two morphisms $f$ and $g$ with suitable domain and codomain. 
There is an inclusion functor $(-)^\sharp\colon \cat{C} \to \kl(T)$ that takes every object $X$ to itself and every morphism $f\colon X \to Y$ to $\eta_Y \circ f$. This functor admits a right adjoint $U_T$ that takes every object $X$ to $TX$ and every morphism $f\colon X \to Y$ to its \emph{Kleisli extension} $\mu_Y \circ Tf$:
\begin{equation*}
	\begin{tikzpicture}[diagram,xscale=1.2]
		\node (n0) at (0,0) {$\cat{C}$};
		\node (n1) at (1,0) {$\kl(T)$};
		\draw[right hook->,bend left] (n0.north east) to node [above] {$(-)^\sharp$} (n1.north west);
		\draw[->,bend left] (n1.south west) to node [below] {$U_T$} (n0.south east);
		\node[rotate=-90] at ($(n0.east)!.5!(n1.west)$) {$\dashv$};
	\end{tikzpicture}	
\end{equation*}
This adjoint situation identifies the monad $T$ over $\cat{C}$.

A monad $(T,\mu,\eta)$ on a cartesian closed category $\cat{C}$ is called \emph{strong} if there is a natural transformation $\lstr_{X,Y}\colon X\times TY\to T(X\times Y)$ called \emph{tensorial strength} satisfying the strength laws listed in \eg \cite{kock1972strong}.
Existence of the transformation $\lstr$ is not a strong requirement. For instance all monads on $\Set$ are strong. 

\subsubsection*{Powerset monad}
The powerset endofunctor $\mathcal{P}\colon \Set\to \Set$ is a monad whose multiplication and unit are given on their $X$-components by: $\mu_X\colon\mathcal{P}\mathcal{P}X\to \mathcal{P}X; S \mapsto \bigcup S \text{ and }\eta_X\colon X\to \mathcal{P}X; x\mapsto \{x\}$.
The category $\kl(\mathcal{P})$ consists of sets as objects and maps of the form $X\to \mathcal{P}Y$ as morphisms. For $f\colon X\to \mathcal{P}Y$ and $g\colon Y\to \mathcal{P}Z$ the composition $g\circ f\colon X\to \mathcal{P}Z$ is as:
\[
g\circ  f(x) = \bigcup (\mathcal{P}g)(f(x))= \{z \mid z\in g(y) \text{ \& }y\in f(x) \text{ for some }y\in Y\}. 
\]
For any two sets $X,Y$ there is a bijective correspondence between maps $X\to \mathcal{P}Y$ and binary relations between elements of $X$ and $Y$. 
Indeed, for $f\colon X\to \mathcal{P}Y$ we put $R_f\subseteq X\times Y$, $(x,y)\in R_f \iff y\in f(x)$ and for $R\subseteq X\times Y$ we define $f_R\colon X\to \mathcal{P}Y;x\mapsto \{y\mid x R y\}$. It is now easy to see that the category $\kl(\mathcal{P})$ is isomorphic to the category $\cat{Rel}$ of sets as objects, binary relations as morphisms and relation composition as morphism composition.

\subsubsection*{LTS monad}
Labelled transition systems functor $\LTS = \mathcal{P}(\Sigma_\tau\times Id)$ carries a monadic structure 
$(\LTS,\mu,\eta)$ \cite{brengos2015:lmcs}, where the $X$-components of $\eta$ and $\mu$ are:
\[
\eta_X(x) = \{(\tau,x)\} \quad\text{ and }\quad \mu_X(S) = \bigcup_{(\sigma,S')\in S} \{(\sigma,x)\mid (\tau,x)\in S'\}\cup \bigcup_{(\tau,S')\in S} S'.
\]
For $f\colon X\to \LTS Y$ and $g\colon Y\to \LTS Z$ the composition $g\circ f$ in $\kl(\LTS)$ is
\[
(g\circ f)(x)= \{(\sigma,z)\mid x\xrightarrow{\sigma}_f y \xrightarrow{\tau}_g z \text{ or } x\xrightarrow{\tau}_f y \xrightarrow{\sigma}_g z\},
\]
where $x\xrightarrow{\sigma}_f y$ denotes $(\sigma,y)\in f(x)$. 

\begin{remark}
When seen as an object of $\kl(\mathcal{P})$, any set $\Sigma_\tau = \Sigma+\{\tau\}$ can be endowed with a monoid structure $(\Sigma_\tau,m,\tau)$ where the multiplication $m\colon \Sigma_\tau \times \Sigma_\tau \to \Sigma_\tau$ is defined as:
\[
  m(\sigma,\sigma') = \begin{cases}
  \{\sigma\} & \text{if } \sigma' = \tau\\
  \{\sigma'\} & \text{if } \sigma = \tau\\
  \emptyset & \text{otherwise}
  \end{cases}
\]
The $\LTS$ monad is equivalently defined as the composition of the canonical adjunction $(-)^\sharp \dashv U_{\mathcal{P}}$ and the writer monad for $(\Sigma_\tau,m,\tau)$.
\end{remark}

\subsubsection*{$\varepsilon$-NA monad}
As pointed out in \cite[Example 4.5]{brengos2015:lmcs} the $\varepsilon$-NA functor $\ENA = \mathcal{P}(\Sigma_\tau \times Id+1)$ carries a monadic structure $(\ENA,\mu,\eta)$. The composition in the Kleisli category for this monad is given as follows. For two morphisms $f\colon X\to \ENA Y$ and $g\colon Y\to \ENA Z$ their composition $g\circ f$ in $\kl(\ENA)$ is:
\begin{align*}
(g\circ f)(x) = & \{(\sigma,z)\mid x\stackrel{\sigma}{\to}_f y \xrightarrow{\tau}_g z \text{ or }x\xrightarrow{\tau}_f y \xrightarrow{\sigma}_g z\} \cup 
\\ &\{\checkmark \mid \checkmark\in f(x) \text{ or } x\xrightarrow{\tau}_f y \text{ and }  \checkmark\in g(y) \}.
\end{align*}

\subsection{Coalgebras with internal moves}\label{subsection:coalgebras_with_internal} Originally \cite{hasuo07:trace,silva2013:calco}, coalgebras with internal moves were introduced in the context of coalgebraic trace semantics as coalgebras of the type $T(F+Id)$ for a monad $T$ and an endofunctor $F$ on a common category. In \cite{brengos2015:lmcs} Brengos showed that given some mild assumptions on $T$ and $F$ we may introduce a monadic structure on $T(F+Id)$.\footnote{
  If $T$ and $F$ do not meet the required assumptions but $F$ admits a free monad $F^\ast$ and $F$ distributes over $T$ then, $T(F+Id)$ can be embedded into $TF^{*}$ which in turn can be given a monadic structure. 
  When both constructions are available, they are equivalent in the sense that they yield the same notion of weak bisimulation \cite{brengos2015:lmcs}.
} The LTS monad and $\varepsilon$-NA monad (as well as all examples in \cref{sec:more-applications}) arise by the application this construction.

The trick of modelling  the silent steps via a monad allows us \emph{not} to specify the internal moves explicitly. Instead of considering $T(F+Id)$-coalgebras we consider $T'$-coalgebras for a monad $T'$. Unless otherwise stated, we assume that all types of coalgebras considered here carry a monadic structure.

To give a $T$-coalgebra is to give an endomorphism in $\kl(T)$. We use this observation and present our results in as general setting as possible. Hence, we will replace $\kl(T)$ with an arbitrary  category $\cat{K}$ and work in the context of endomorphisms of $\cat{K}$ bearing in mind our prototypical example of $\cat{K}=\kl(T)$. 

\subsection{Order enriched categories}

An \emph{order enriched} category is a category whose hom-sets are additionally equipped with a partial order structure and whose composition preserves the structure in the following sense:
\[
	f \leq f' \implies g\circ f \leq g\circ f' \land f \circ h \leq f' \circ h
	\text{.}
\]
In this paper, the ordering will naturally stem from the type of computational effects under scrutiny and correspond to a notion of simulation between systems.
For instance, in the case of non-deterministic systems, the ordering is given by pointwise extension of the inclusion order imposed by $\mathcal{P}$.
In this setting, adding transitions to a system respects the ordering.

For some results, we will require that our categories admit arbitrary non-empty joins and that they are preserved by composition (on the left or on the right) with morphisms from a given subcategory---our prototypical scenario is a Kleisli category and its underlying category. We refer to this property as left and right distributivity \cite[Def.~2.10 and 3.2]{brengos2015:jlamp}.
\begin{definition}
	Let $\cat{K}$ be order enriched and let $J\colon \cat{J} \to \cat{K}$ exhibit $\cat{J}$ as a subcategory of $\cat{K}$.
	The category $\cat{K}$ is called \emph{$J$-right distributive} provided that it the following holds for any non-empty family of morphisms $\{g_i\}_{i\in I}$ and any morphism $f \in \cat{J}$ with suitable domain and codomain:
	\begin{equation}
		\tag{$J$-RD} \label{law:RD} 
		\left (\bigvee_i g_i\right ) \circ f = \bigvee_i g_i \circ f
		\text{.}
	\end{equation}
	The category $\cat{K}$ is called \emph{$J$-left distributive} provided that it the following holds for any non-empty family of morphisms $\{g_i\}_{i\in I}$ and any morphism $f \in \cat{J}$ with suitable domain and codomain:
	\begin{equation}
		\tag{$J$-LD} \label{law:J-LD} 
		f\circ \left (\bigvee_i g_i\right)  = \bigvee_i f\circ g_i
		\text{.}
	\end{equation}
\end{definition}
We call any $Id$-left distributive (resp. $Id$-right distributive) category simply left distributive (resp. right distributive).

Objects in the image of the powerset monad $\mathcal{P}$ are naturally endowed with an order structure given by subset inclusion. This order structure is extended to hom-sets of $\kl(\mathcal{P})$ in a pointwise manner. Formally, the order structure of any hom-set $\kl(\mathcal{P})(X,Y)$ is given on any $f$ and $g$ as:
\[
	f \leq g \defiff \forall x f(x) \subseteq g(x)
\]
and suprema of any non-empty family $\{f_i\}_{i \in I}$ as:
\[
	\bigvee_{i \in I} f_i \defeq \lambda x.\bigcup_{i \in I} f_i(x)
	\text{.}
\]
Composition in $\kl(\mathcal{P})$ is monotone and preserves suprema of non-empty families (see \eg \cite{brengos2015:jlamp,brengos2016:concur}).
It follows that for any $J\colon \cat{J} \to \kl(\mathcal{P})$ that exhibits $\cat{J}$ as a wide subcategory of $\kl(\mathcal{P})$ (\eg the inclusion $(-)^\sharp\colon\Set\to\kl(\mathcal{P})$), the category $\kl(\mathcal{P})$ is $J$-distributive.

\subsection{Saturation-based behavioural equivalences}
\label{sec:special-saturation-equivalences}
The notion of strong bisimulation has been well captured coalgebraically \cite{rutten:universal,staton11}. 
Weak bisimulation can be captured coalgebraically as the combination of kernel bisimulation and saturation \cite{brengos2015:jlamp}. We briefly recall the relevant definitions and refer the interested reader to \loccit for further details.

Following \cite{brengos2015:jlamp}, we consider a formulation of coalgebraic bisimulation called \emph{kernel bisimulation} \ie ``a relation which is the kernel of a compatible refinement system'' \cite{staton11}. Formally, a kernel bisimulation (herein just \emph{bisimulation}) for a coalgebra $\alpha\colon X\to TX$ is a relation $R\rightrightarrows X$ (\ie a jointly monic span) in $\cat{C}$ provided that there is a coalgebra $\beta\colon Y\to TY$ (the refinement system) and an arrow $f\colon X\to Y$ (called behavioural morphism) for which $R \rightrightarrows X$ is the kernel pair in $\cat{C}$ and subject to the coalgebra homomorphism compatibility condition:
\[
	Tf\circ \alpha = \beta \circ f
	\text{.}
\]
Since this identity can be restated in terms of composition in $\kl(T)$ as \[
	f^\sharp \circ \alpha = \beta \circ f^\sharp\text{,}
\]
we can generalize the definition of bisimulation to the setting of endomorphisms as follows.
Let $J\colon \cat{J} \to \cat{K}$ exhibit a wide subcategory of $\cat{K}$. 
For an endomorphism $\alpha\colon X\to X\in \cat{K}$ we say that a relation on $X$ in $\cat{J}$ is a \emph{(strong) bisimulation} for $\alpha$ if it is a kernel pair of an arrow $f\colon X\to Y \in \cat{J}$ for which there is $\beta\colon Y\to Y\in \cat{K}$ satisfying 
\[
	J(f)\circ \alpha = \beta \circ J(f)
	\text{.} 
\]
If we take the inclusion $(-)^\sharp\colon\Set \to \kl(\LTS)$ as $J$ then the notions of bisimulation for coalgebras and on endomorphisms coincide. 
If we take the inclusion $(-)^\sharp\colon\kl(\mathcal{P}) \to \kl(\ENA)$ as $J$ then the above captures (finite) trace equivalence.
Taking the source category of behavioural morphisms to be different from $\Set$, say $\cat{Rel} \cong \kl(\mathcal{P})$, is akin to the classical approach towards modelling coalgebraic finite trace equivalence in terms of bisimulation in the Kleisli categories for a monadic part $T$ of the type functor $TF$ \cite{hasuo07:trace,jacobssilvasokolova2012:cmcs} where behavioural morphisms have underlying maps from $\kl(T)$.

In \cite{brengos2015:lmcs,brengos2015:jlamp} we presented a common framework for defining weak bisimulation and weak trace semantics via saturation for coalgebras with internal moves which encompasses several well known instances of this notion for systems among which we find labelled transition systems and fully probabilistic systems. The basic ingredient of this setting was the notion of the aforementioned \emph{saturation} for order enriched categories. Henceforth, assume that $\cat{K}$ is order enriched. An endomorphism $\alpha$ in $\cat{K}$ is called \emph{saturated} whenever it holds that:
\[
	id\leq \alpha \qquad \alpha \circ \alpha \leq \alpha
	\text{.}
\]
Under mild conditions (see \cite{brengos2015:jlamp}), every endomorphism can be canonically saturated in the sense that assigning endomorphisms to their saturation identifies a reflection between certain categories of endomorphisms.
Define $\cat{End}_J^{\leq}({\cat{K}})$ as the category whose objects are endomorphisms in $\cat{K}$ and whose morphisms are arrows in the subcategory $J$ that form a lax commuting square between endomorphisms. Namely, a morphism from $\alpha\colon X\to X$ to $\beta\colon Y\to Y$ is any morphism $f$ such that:
\[
 	J(f)\circ \alpha \leq \beta\circ J(f)
 	\text{.}
\]
Define $\cat{End}_J^{\ast\leq}(\cat{K})$ as the full subcategory of $\cat{End}_J^{\leq}({\cat{K}})$ of saturated endomorphisms.
The category $\cat{K}$ is said to admit saturation (\wrt $J$) whenever it holds that:
\begin{equation}
	\begin{tikzpicture}[diagram]
		\node (n0) {$\cat{End}_J^{\leq}(\cat{K})$};
		\node[right=6ex of n0] (n1) {$\cat{End}_J^{\ast\leq}(\cat{K})$};
		\draw[->,bend left] (n0.north east) to node [above] {$(-)^\ast$} (n1.north west);
		\draw[left hook->,bend left] (n1.south west) to (n0.south east);
		\node[rotate=-90] at ($(n0.east)!.5!(n1.west)$) {$\dashv$};
	\end{tikzpicture}
	\label{eq:special-saturation-adjunction}
\end{equation}
In this setting we refer to $\alpha^\ast$ as \emph{the} saturation of $\alpha$.

A relation $R\rightrightarrows X$ in $\cat{J}$ is a \emph{weak bisimulation} on an endomorphism $\alpha\colon X\to X$ in $\cat{K}$ if it is a (strong) bisimulation for the saturation $\alpha^\ast\colon X\to X$ of $\alpha$ in $\cat{K}$. 
As will be witnessed in sections to come, different weak equivalences for timed systems will be defined in an analogous manner.

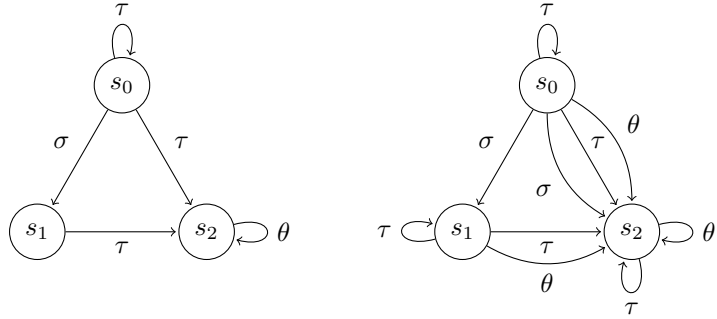
\begin{figure}
\centering
\begin{tikzpicture}[
  shorten >=1pt,auto,
  font=\small,
  state/.style={draw, circle},
  baseline=(current bounding box.north)
]
\node[state] (q_0) at ( 90:1.3) {$s_0$};
\node[state] (q_1) at (210:1.3) {$s_1$};
\node[state] (q_2) at (330:1.3) {$s_2$};
\draw[->] (q_0) to[loop above] node {$\tau$} (q_0);
\draw[->] (q_0) to node[swap] {$\sigma$} (q_1);
\draw[->] (q_0) to node {$\tau$} (q_2);
\draw[->] (q_1) to node[swap] {$\tau$} (q_2);
\draw[->] (q_2) to[loop right] node {$\theta$} (q_2);
\end{tikzpicture}
\qquad
\begin{tikzpicture}[
  shorten >=1pt,auto,
  font=\small,
  state/.style={draw, circle},
  baseline=(current bounding box.north)
]
\node[state] (q_0) at ( 90:1.3) {$s_0$};
\node[state] (q_1) at (210:1.3) {$s_1$};
\node[state] (q_2) at (330:1.3) {$s_2$};
\draw[->] (q_0) to[loop above] node {$\tau$} (q_0);
\draw[->] (q_0) to node[swap] {$\sigma$} (q_1);
\draw[->] (q_0) to node {\!\!$\tau$} (q_2);
\draw[->] (q_0) to[bend right] node[swap] {$\sigma$} (q_2);
\draw[->] (q_0) to[bend left] node {$\theta$} (q_2);
\draw[->] (q_1) to[loop left] node {$\tau$} (q_1);
\draw[->] (q_1) to node[swap] {$\tau$} (q_2);
\draw[->] (q_1) to[bend right] node[swap] {$\theta$} (q_2);
\draw[->] (q_2) to[loop right] node {$\theta$} (q_2);
\draw[->] (q_2) to[loop below] node {$\tau$} (q_2);
\end{tikzpicture}
\caption{A labelled transition system (left) and its saturation (right).}
\label{fig:lts-saturation-example}
\end{figure}
\begin{example}
Let $\alpha\colon X\to \LTS X$ be a labelled transition system reported in \cref{fig:lts-saturation-example} on the left. In this case, if we put \cat{K} to be Kleisli category for the LTS monad then the saturated map $\alpha^\ast$ is the LTS reported in \cref{fig:lts-saturation-example} on the right. Moreover, in general, if $\cat{K} = \kl(\LTS)$ and \cat{J} is put to be \cat{Set} then weak bisimulations coincide with Milner’s weak bisimulations for labelled transition systems \cite{milner:cc}.
\end{example}

\section{General saturation and behavioural equivalences}
\label{sec:general-saturation-and-equivalences}

The theory of saturation from the previous section offers a mechanism to compare system behaviours abstracting over certain patterns of computational steps. In this section we extend the theory of saturation to computations that may depend on aspects of their environment such as time; we refer to this conservative extension of (special) saturation as \emph{general saturation}. General saturation provides two orthogonal abstraction dimensions that may be arbitrarily combined: 
\begin{itemize}
	\item 
		General saturation allows to abstract from effects of the computation like unobservable steps (like special saturation);
	\item 
		General saturation allows to abstract from aspects of the computation environment like time flow.
\end{itemize}
As a consequence, a new orthogonal abstraction dimension is added to the spectrum of saturation behavioural equivalences as shown \eg in the spectrum of equivalences for timed systems in \cref{fig:timed-equivalences-spectrum}. A detailed construction of the spectrum is postponed to \cref{sec:timed-equivalences}.

In the sequel let $J\colon \cat{J} \to \cat{K}$ be a functor that exhibits $\cat{J}$ as a wide subcategory of $\cat{K}$ and let $\cat{K}$ be order enriched. 
As we will discuss in \cref{sec:timed-equivalences}, our prototypical example of $J\colon \cat{J} \to \cat{K}$ is the inclusion $(-)^\sharp\colon\Set \to \kl(\LTS)$  when interested in bisimulations for TTSs, $(-)^\sharp\colon\Set \to \kl(\ENA)$ when interested in bisimulations for TTSs with accepting moves, and $(-)^\sharp\colon\kl(\mathcal{P}) \to \kl(\ENA)$ when interested in language equivalences for TTSs.

\subsection{Lax functors}
A \emph{lax functor} from a category $\cat{D}$ to $\cat{K}$ is an assignment $\pi$ for objects and morphisms in $\cat{D}$ with the property that:
\begin{itemize}
	\item $id_{\pi D}\leq \pi(id_D)$ for any object $D$ in $\cat{D}$,
	\item $\pi(d_1)\circ \pi(d_2) \leq \pi(d_1\circ d_2)$ for any $d_1$ and $d_2$ in $\cat{D}$ with suitable domain and codomain.
\end{itemize}
Let $\pi$ and $\pi'$ be two lax functors from $\cat{D}$ to $\cat{K}$. A family $f=\{f_D\colon \pi D\to\pi'D\}_{D\in \cat{D}}$ of morphisms in $\cat{K}$ is called a \emph{lax natural transformation} from $\pi$ to $\pi'$ if the diagram below holds for any $d\colon D\to D'$ in $\cat{D}$.
\begin{equation*}
	\begin{tikzpicture}[diagram,xscale=1.2]
		\node (n0) at (0,1) {$\pi  D $};
		\node (n1) at (1,1) {$\pi' D $};
		\node (n2) at (0,0) {$\pi  D'$};
		\node (n3) at (1,0) {$\pi' D'$};
		\draw[->] (n0) to node {$f_{D}$} (n1);
		\draw[->] (n0) to node[swap] {$\pi(d)$} (n2);
		\draw[->] (n1) to node {$\pi'(d)$} (n3);
		\draw[->] (n2) to node[swap] {$f_{D'}$} (n3);
		\node[rotate=45] at ($(n0)!.5!(n3)$) {$\geq$};
	\end{tikzpicture}
\end{equation*}
\emph{Oplax functors} and \emph{oplax transformations} are defined by reversing the order in the definitions above. 
Note that in the more general 2-categorical setting an (op)lax functor and an (op)lax natural transformation are assumed to additionally satisfy extra coherence conditions \cite{lack:09}. In the setting of order enriched categories these conditions are vacuously true, hence we do not list them here.
Let $\cat{D}$ be a small category, we write $[\cat{D},\cat{K}]^J$ for the order enriched category defined by the following data:
\begin{itemize}
	\item lax functors from $\cat{D}$ to $\cat{K}$ as objects;
	\item oplax natural transformations with components laying in the image of $J$ (hereafter just ``from $J$'') as morphisms;
	\item the pointwise extension of the order enrichment of $\cat{K}$.
\end{itemize}
In particular, for any two morphism $f$ and $f'$ in $[\cat{D},\cat{K}]^J(\pi, \pi')$, it holds that $f\leq f'$ whenever $f_{D} \leq f'_D$ for any object $D$ of $\cat{D}$. Any functor $q\colon \cat{D}\to \cat{E}$ between small categories induces the \emph{change-of-base functor}
\[
	[q,\cat{K}]^J\colon [\cat{E},\cat{K}]^J\to [\cat{D},\cat{K}]^J
\]
that takes any object $\pi \in [\cat{E},\cat{K}]^J$ to 
$[q,\cat{K}]^J(\pi)=\pi\circ q$ and oplax transformation $f=\{f_E\colon \pi(E)\to \pi'(E)\}_{E\in \cat{E}}$ between $\pi,\pi' \in  [\cat{E},\cat{K}]^J$ to $[q,\cat{K}]^J(f)_D = f_{q(D)}$.

To keep the paper more succinct, we only focus on $\cat{D}$  being a monoid category (\ie a one-object category). Let $M=(M,\cdot,1)$ be a monoid and write $\cat{M}$ for the corresponding category.  
In this scenario, any lax functor $\pi \in [\cat{M},\cat{K}]^J$ is equivalent to a family $\{\pi_m\colon \pi(\ast)\to \pi(\ast)\}_{m\in M}$ of endomorphisms in $\cat{K}$ over a common object $\pi(\ast)$ and with the following property:
\begin{equation*}
 id_{\pi(\ast)} \leq \pi_1 \qquad \text{and} \qquad \pi_{m} \circ \pi_{n}\leq \pi_{m\cdot n}\text{.} 
\end{equation*}
An oplax transformation from $\pi$ to $\pi'$, both lax functors in $[\cat{M},\cat{K}]^J$, is an arrow $f$ in $\cat{J}$ such that the diagram below holds for any $m \in M$.
\begin{equation*}
	\begin{tikzpicture}[diagram,xscale=1.2]
		\node (n0) at (0,1) {$\pi (\ast)$};
		\node (n1) at (1,1) {$\pi'(\ast)$};
		\node (n2) at (0,0) {$\pi (\ast)$};
		\node (n3) at (1,0) {$\pi'(\ast)$};
		\draw[->] (n0) to node {$J(f)$} (n1);
		\draw[->] (n0) to node[swap] {$\pi(m)$} (n2);
		\draw[->] (n1) to node {$\pi'(m)$} (n3);
		\draw[->] (n2) to node[swap] {$J(f)$} (n3);
		\node[rotate=45] at ($(n0)!.5!(n3)$) {$\leq$};
	\end{tikzpicture}
\end{equation*}
The curious reader is referred to \cite{brengos2015:corr} and \cref{sec:timed-equivalences} for examples of these categories.

\subsection{Saturation}
\label{sec:general-saturation}

Building on \citeauthor{sobocinski:jcss}'s work on relational presheaves and weak bisimulation for LTSs, \cite{sobocinski:jcss}, \citeauthor{brengos2015:corr} rediscovered (special) saturation in the setting of lax functors in terms of the change-of-base functor induced by final monoid homomorphisms (seen as functors between index categories) \cite{brengos2015:corr}. This section pushes this approach further by considering arbitrary monoid homomorphisms. The main advantage of this generalisation is that different homomorphisms result in different aspects being abstracted by the saturation process (\eg internal moves, time durations).

In the sequel, let $q\colon M \to N$ be any monoid homomorphism and regard it as a functor between the associated categories.

\begin{definition}
	An order enriched category $\cat{K}$ is said to admit \emph{$q$-saturation} (\wrt a wide subcategory $J\colon \cat{J} \to \cat{K}$) whenever the change-of-base functor $[q,\cat{K}]^J\colon [\cat{N},\cat{K}]^J\to [\cat{M},\cat{K}]^J$ admits a left strict $2$-adjoint (denoted as $\Sigma_q^{J}$ or simply as $\Sigma_q$).
	\begin{equation}
		\begin{tikzpicture}[diagram]
			\node (n0) {$[\cat{M},\cat{K}]^J$};
			\node[right=6ex of n0] (n1) {$[\cat{N},\cat{K}]^J$};
			\draw[->,bend left] (n0.north east) to node {$\Sigma_q^J$} (n1.north west);
			\draw[->,bend left] (n1.south west) to node{$[q,\cat{K}]^J$} (n0.south east);
			\node[rotate=-90] at ($(n0.east)!.5!(n1.west)$) {$\dashv$};
		\end{tikzpicture}
		\label{eq:general-saturation-adjunction}
	\end{equation}
\end{definition}

This is indeed a generalisation of special saturation from \cite{brengos2015:lmcs,brengos2015:jlamp}: if we take as $q$ the unique morphism from the monoid $(\mathbb{N},+,0)$ of natural numbers under addition into the trivial one object monoid then, 
$[\mathbb{N},\cat{K}]^J \cong \cat{End}_J^{\leq}(\cat{K})$, 
$[\cat{1},\cat{K}]^J \cong \cat{End}_J^{\ast\leq}(\cat{K})$, 
and \eqref{eq:special-saturation-adjunction} becomes \eqref{eq:general-saturation-adjunction}.

In order to prove $q$-saturation admittance below, we work with a stronger type of order enrichment on $\cat{K}$\footnote{%
	In the conference version of the paper \cite{brengos2016:concur} we assumed the category $\cat{K}$ to admit finite joins, be $\mathbf{DCpo}$-enriched (\ie each hom-set is a complete order with directed suprema being preserved by the composition) and left distributive on finite joins. However, any such category admits arbitrary non-empty suprema and preserves them by the composition on the left. Hence, we decided to simplify our assumptions. All Kleisli categories considered in \cite{brengos2016:concur} and this paper satisfy these assumptions.
}. To be more precise we assume that $\cat{K}$ is left distributive order enriched category with arbitrary non-empty joins. 
Under these assumption $\cat{K}$ admits $q$-saturation \wrt $J$ whenever $q$ is a surjective monoid homomorphism \ie whenever the corresponding functor is full. The following theorem provides a direct construction for $\Sigma_q$.
\begin{theorem}
	\label{thm:general-saturation}
	If $q\colon \cat{M}\to \cat{N}$ is a full functor and $\cat{K}$ a left distributive category with arbitrary non-empty joins then, $\cat{K}$ admits $q$-saturation with respect to any $J$.
	The functor $\Sigma_q\colon [\cat{M},\cat{K}]^J\to [\cat{N},\cat{K}]^J$ acts as the identity on morphisms and takes any $\pi\in [\cat{M},\cat{K}]^J$ to the lax functor given on the only object as
	\begin{align*}
		\Sigma_q(\pi)(\ast) = {} & \pi(\ast)\\
	\intertext{and on any morphism $n \in N$ as}
		\Sigma_q(\pi)_n = {} & \bigvee_{i < \omega} \Pi_{n,i}\\
	\intertext{where $\Pi_{n,i}$ is inductively defined as}
		\Pi_{n,0} = {} & \bigvee \{\pi_m \mid q(m) = n\}\\
		\Pi_{n,i+1} = & {} \bigvee \left\{\Pi_{n_1,i}\circ \ldots\circ  \Pi_{n_l,i}\,\middle|\,n_1\cdot\dots\cdot n_l = n\right\}\text{.}
	\end{align*}
\end{theorem}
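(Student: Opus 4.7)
The plan is to verify that $\Sigma_q$, acting as described on objects and as the identity on morphisms, is a strict 2-left-adjoint to $[q,\cat{K}]^J$, with unit given componentwise by the identity in $\cat{J}$. This decomposes into three tasks: (a) $\Sigma_q(\pi)$ lies in $[\cat{N},\cat{K}]^J$; (b) the identity of $\pi(\ast)$ is an oplax transformation $\pi \to [q,\cat{K}]^J(\Sigma_q\pi) = \Sigma_q(\pi)\circ q$; (c) for every oplax transformation $f\colon \pi \to \pi'\circ q$ there is a unique oplax transformation $\bar f\colon \Sigma_q\pi \to \pi'$ with $J(\bar f) = J(f)$. Strictness with respect to the pointwise order on hom-sets is then automatic, since $\Sigma_q$ does not act on morphisms.

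For (a), first note that fullness of $q$ ensures $\{m \mid q(m)=n\}$ is non-empty for every $n\in N$, so the clauses defining $\Pi_{n,0}$ and thence $\Pi_{n,i+1}$ are non-empty joins. Taking $l=1$ in the inductive clause yields $\Pi_{n,i}\leq\Pi_{n,i+1}$, so $\{\Pi_{n,i}\}_i$ is an ascending chain and $\Sigma_q(\pi)_n$ is well-defined. The unit axiom follows from $id\leq\pi_1\leq\Pi_{1,0}\leq\Sigma_q(\pi)_1$, using lax-functoriality of $\pi$ and $q(1)=1$. For multiplicativity, the key estimate is
\[
\Pi_{n,i}\circ\Pi_{n',i}\;\leq\;\Pi_{n\cdot n',i+1}
\qquad\text{(take $l=2$ in the definition of $\Pi_{n\cdot n',i+1}$).}
\]
Combined with monotonicity of composition along the chain and distribution of composition over non-empty joins, this gives $\Sigma_q(\pi)_n\circ\Sigma_q(\pi)_{n'}\leq \bigvee_i \Pi_{n,i}\circ\Pi_{n',i}\leq \Sigma_q(\pi)_{n\cdot n'}$.

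For (b), the oplax condition reduces to $\pi_m \leq \Sigma_q(\pi)_{q(m)}$ for every $m\in M$, which holds because $\pi_m$ sits in the family whose join produces $\Pi_{q(m),0}$. For (c), since the unit is the identity the triangle $J(\bar f) = J(f)$ forces $\bar f = f$, so only existence remains: I must verify that $f$ is oplax from $\Sigma_q(\pi)$ to $\pi'$, i.e.\ $J(f)\circ\Sigma_q(\pi)_n \leq \pi'_n \circ J(f)$. I would prove $J(f)\circ\Pi_{n,i}\leq \pi'_n\circ J(f)$ by induction on $i$. The base case distributes the join and reduces to the oplax property of $f\colon\pi\to\pi'\circ q$ together with $q(m)=n$. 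The inductive step uses the hypothesis, monotonicity, and lax functoriality of $\pi'$ (so that $\pi'_{n_1}\circ\cdots\circ\pi'_{n_l} \leq \pi'_{n_1\cdots n_l}$ whenever $n_1\cdots n_l=n$). Taking the supremum in $i$ via left distributivity then yields the claim for $\Sigma_q(\pi)_n$.

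The main obstacle is the multiplicativity step within (a): care is needed to justify interchanging composition with the two joins defining $\Sigma_q(\pi)_n$ and $\Sigma_q(\pi)_{n'}$ on both sides of the composite. The plan exploits that the chains $\{\Pi_{n,i}\}_i$ and $\{\Pi_{n',i}\}_i$ are ascending, so that the joint join can be replaced by a diagonal one $\bigvee_i \Pi_{n,i}\circ\Pi_{n',i}$; each diagonal term is then dominated by $\Pi_{n\cdot n',i+1}$, reducing the remaining manipulations to the single-sided distributivity explicitly assumed on $\cat{K}$.
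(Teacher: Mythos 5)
Your proposal is correct and follows essentially the same route as the paper's proof: the same verification that $\Sigma_q(\pi)$ is a well-defined lax functor (non-emptiness of the joins from fullness of $q$, the $l=1$ and $l=2$ instances of the inductive clause giving the ascending chain and the bound $\Pi_{n,i}\circ\Pi_{n',i}\leq\Pi_{n\cdot n',i+1}$), and the same induction on $i$ transferring the oplax condition from $f\colon\pi\to\pi'\circ q$ to $f\colon\Sigma_q(\pi)\to\pi'$, which is precisely the paper's chain of equivalences establishing the hom-poset isomorphism. Your packaging via an identity unit and its universal property is only a cosmetic reorganisation, and even the delicate interchange of composition with the two joins in the multiplicativity step is handled the same way as in the paper, namely by combining the assumed distributivity with the directedness of the families $\{\Pi_{n,i}\}_{i<\omega}$.
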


\begin{proof}
	At first we show that $\Sigma_q(\pi)$ is a well defined lax functor in $[\cat{N},\cat{K}]^J$. Note that the family $\Pi_n = \{\Pi_{n,i}\}_{i < \omega}$ is an ascending family of sets, that $\Pi_n$ is, by surjectivity of $q$, a  non-empty  directed set, and that $\Pi_n\circ \Pi_{n'}\subseteq \Pi_{n\cdot n'}$ for any $n,n\in N$. As a consequence, it holds that:
	\begin{gather*}
		id \leq \bigvee \Pi_1 \leq \Sigma_q(\pi)_1
		\\
		\Sigma_q(\pi)_{n} \circ \Sigma_q(\pi)_{n'} = 
		\bigvee \Pi_{n} \circ \bigvee \Pi_{n'} \stackrel{(\dagger)}{=} 
		\bigvee \Pi_{n}\circ \Pi_{n'} \stackrel{(\ddagger)}{\leq} 
		\bigvee \Pi_{n\cdot n'} = 
		\Sigma_{q}(\pi)_{n\cdot n'}
	\end{gather*}
	where $(\dagger)$ follows from left distributivity of $\cat{K}$ and 
	$(\ddagger)$ from definition of lax functor and construction of $\Pi_{n}$,$\Pi_{n'}$,$\Pi_{n\cdot n'}$.
	This proves the first statement. Now, we will verify that any oplax transformation in $[\cat{M},\cat{K}]^J$ is mapped onto an oplax transformation in $[\cat{N},\cat{K}]^J$. 
	Let $f\colon \pi \to \pi'$ be a morphism in $[\cat{M},\cat{K}]^J$ or, equivalently, a morphism $f$ in $\cat{J}$ with the property that $J(f)\circ \pi_m  \leq  \pi'_m\circ J(f)$ for any $m$ in $\cat{M}$.
	It follows from $J$-left distributivity and Scott continuity of the composition that:
	\begin{equation*}
	J(f)\circ \Sigma_{q}(\pi)_n = J(f)\circ \bigvee \Pi_n = \bigvee J(f)\circ \Pi_n \leq \bigvee \Pi'_n \circ J(f)  = \Sigma_{q}(\pi')_n \circ J(f)
		\text{.}
	\end{equation*}
	To complete the proof we have to show that for any lax functor $\pi\in [\cat{M},\cat{K}]^J$ and $\pi'\in [\cat{N},\cat{K}]^J$ the poset $[\cat{M},\cat{K}]^J(\pi, [q,\cat{K}](\pi'))$ is isomorphic to $[\cat{N},\cat{K}]^J(\Sigma_q(\pi),\pi')$.  Indeed:
	\begin{align*}
		& 
		\forall m \in M\  J(f)\circ \pi_m \leq \pi'_{q(m)}\circ J(f) \stackrel{(\dagger)}{\iff}
		\\& 
		\forall m,m_1\ldots m_k\in M\  J(f)\circ(\pi_{m_1} \vee \ldots \vee \pi_{m_k}) \leq  \pi'_{q(m)}\circ J(f) \land \forall i \  q(m_i)=q(m) \iff 
		\\&
		\forall m\in M\  J(f)\circ \bigvee \Pi_{q(m),0} \leq \pi'_{q(m)}\circ J(f) \stackrel{(\ddagger)}{\iff}
		\\&
		\forall m\in M\  \forall n < \omega, J(f)\circ \bigvee \Pi_{q(m),n} \leq \pi'_{q(m)}\circ J(f) \iff 
		\\& 
		\forall m \in M\  J(f)\circ \bigvee \Pi_{q(m)} \leq \pi'_{q(m)}\circ J(f)\iff 
		\forall m\in M\  J(f)\circ \Sigma_q(\pi)_{q(m)} \leq \pi'_{q(m)}\circ J(f)
		\text{.}
	\end{align*}
	The equivalence $(\dagger)$ follows by $J$-left distributivity of $\cat{K}$ and $(\ddagger)$ by induction on $n < \omega$.
\end{proof}

If $\cat{K}$ is also right distributive (hence $J$-right distributive for any $J$) then, the formula for $\Sigma_q$ simplifies considerably as stated by \cref{thm:simplified-formula-for-sigma} below.

\begin{theorem}
	\label{thm:simplified-formula-for-sigma}
	If the hom-posets of $\cat{K}$ admit arbitrary non-empty suprema and composition preserves them then, for any $\pi\in [\cat{M},\cat{K}]^J$ it holds that:
	\begin{equation*}
		\Sigma_q(\pi)_n = \bigvee\left\{ \pi_m \,\middle|\, q(m) = n \right\}
		\text{.}
	\end{equation*}
\end{theorem}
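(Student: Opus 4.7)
The plan is to combine the explicit formula for $\Sigma_q(\pi)_n = \bigvee_{i<\omega} \Pi_{n,i}$ from \cref{thm:general-saturation} with the upgraded assumption—composition now preserves non-empty suprema on \emph{both} sides—to collapse the iterative hierarchy to its base case $\Pi_{n,0} = \bigvee\{\pi_m \mid q(m)=n\}$.

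First, note that the inequality $\Pi_{n,0} \leq \bigvee_{i<\omega}\Pi_{n,i}$ is immediate since $\{\Pi_{n,i}\}_{i < \omega}$ is ascending (this was already observed in the proof of \cref{thm:general-saturation}). So the substantive direction is $\bigvee_{i<\omega}\Pi_{n,i} \leq \Pi_{n,0}$, and it suffices to show by induction on $i$ that $\Pi_{n,i} \leq \Pi_{n,0}$ for every $n \in N$. The base case $i = 0$ is trivial.

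For the inductive step, fix a decomposition $n = n_1 \cdot \ldots \cdot n_l$ in $N$. By the induction hypothesis $\Pi_{n_j,i} \leq \Pi_{n_j,0}$ and monotonicity of composition, we have $\Pi_{n_1,i}\circ\ldots\circ\Pi_{n_l,i} \leq \Pi_{n_1,0}\circ\ldots\circ\Pi_{n_l,0}$. The key step is to rewrite the right-hand side using the newly available right distributivity together with left distributivity:
\[
\Pi_{n_1,0}\circ\ldots\circ\Pi_{n_l,0} \;=\; \bigvee\left\{\pi_{m_1}\circ\ldots\circ\pi_{m_l}\,\middle|\,q(m_j) = n_j \text{ for all } j\right\}\text{.}
\]
Now lax functoriality of $\pi$ yields $\pi_{m_1}\circ\ldots\circ\pi_{m_l} \leq \pi_{m_1\cdot\ldots\cdot m_l}$, and since $q$ is a monoid homomorphism we have $q(m_1\cdot\ldots\cdot m_l) = n_1\cdot\ldots\cdot n_l = n$. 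Therefore each summand lies in the set $\{\pi_m \mid q(m) = n\}$ and is bounded by $\Pi_{n,0}$; taking the join over all tuples $(m_1,\dots,m_l)$ and then over all decompositions of $n$ gives $\Pi_{n,i+1} \leq \Pi_{n,0}$, closing the induction.

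The only mildly delicate step is the distributive rewriting of an $l$-fold composition of joins as a single join of $l$-fold compositions. This follows from iterated application of both \eqref{law:RD} and \eqref{law:J-LD} (taking $J = \Id$), so it is a routine calculation rather than a real obstacle; the interesting content is just the observation that the new right-distributivity assumption is precisely what allows the saturation process—which in the general case needs countably many layers—to terminate after one step.
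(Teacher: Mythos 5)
Your proof is correct and follows essentially the same route as the paper, which simply asserts the collapse ``$\Pi_n = \Pi_{n,0}$'' and reads off the formula; your induction on $i$, using right distributivity to push the joins out of the $l$-fold composite and lax functoriality to bound each $\pi_{m_1}\circ\cdots\circ\pi_{m_l}$ by $\pi_{m_1\cdots m_l}$, is exactly the justification the paper leaves implicit.
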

\begin{proof}
	Observe that $\Pi_n = \Pi_{n,0}$ and hence $\Sigma_q(\pi)_n  = \bigvee \Pi_n  = \bigvee \Pi_{n,0} = \bigvee_{m: q(m) = n} \pi_m$. 
\end{proof}

Saturation is idempotent. The claim readily follows from the fact that the underlying 2-adjunction is a coreflection and splits the identity on the subcategory.
\begin{proposition}
	\label{thm:composition-comonad}
	The functor $\Sigma_q\circ [q,\cat{K}]^J$ is the identity on $[\cat{N},\cat{K}]^J$.
\end{proposition}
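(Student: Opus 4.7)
The plan is to verify the identity separately on objects and on morphisms, exploiting the explicit formula for $\Sigma_q$ from \cref{thm:general-saturation}. The morphism case is immediate: in the monoid-indexed setting of this section, an oplax transformation reduces to a single arrow $f \in \cat{J}$, and both functors preserve it unchanged — $\Sigma_q$ acts as the identity on morphisms by construction, while $[q,\cat{K}]^J$ does so because $\cat{M}$ and $\cat{N}$ each have a single object, so the reindexing $f_{q(\ast)} = f_\ast$ is trivial.

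For objects, fix $\pi' \in [\cat{N},\cat{K}]^J$ and let $\pi \defeq [q,\cat{K}]^J(\pi') = \pi' \circ q$, so that $\pi_m = \pi'_{q(m)}$ for every $m \in M$. The main claim to establish is that the iterates $\Pi_{n,i}$ from \cref{thm:general-saturation} applied to $\pi$ all collapse to $\pi'_n$, independently of $i$. I would prove this by induction on $i < \omega$. For the base case, $\Pi_{n,0} = \bigvee\{\pi'_{q(m)} \mid q(m) = n\}$; surjectivity of $q$ (equivalent to fullness of the functor) guarantees that this join is non-empty, and every term equals $\pi'_n$, so $\Pi_{n,0} = \pi'_n$.

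For the inductive step, assume $\Pi_{n',i} = \pi'_{n'}$ for every $n' \in N$. Then
\[
	\Pi_{n,i+1} = \bigvee \left\{\pi'_{n_1} \circ \cdots \circ \pi'_{n_l} \,\middle|\, n_1 \cdots n_l = n \right\}\text{.}
\]
Lax functoriality of $\pi'$ bounds every factor-by-factor composite above by $\pi'_{n_1 \cdots n_l} = \pi'_n$, giving $\Pi_{n,i+1} \leq \pi'_n$. The reverse inequality is witnessed by the trivial length-one decomposition $n_1 = n$, which contributes $\pi'_n$ directly to the join. Hence $\Pi_{n,i+1} = \pi'_n$.

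Taking the supremum in $i$ then yields $\Sigma_q(\pi)_n = \bigvee_{i<\omega} \pi'_n = \pi'_n$ for every $n$, so $\Sigma_q \circ [q,\cat{K}]^J(\pi') = \pi'$ as lax functors. There is no real obstacle: the argument is essentially a one-line calculation wrapped around the standard remark that saturating an already-lax functor cannot add anything new. The only delicate point is to recognise that surjectivity of $q$ is exactly what is needed to make $\Pi_{n,0}$ non-empty, so that the base case does not degenerate.
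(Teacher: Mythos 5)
Your proof is correct and follows essentially the same route as the paper: set $\pi = \pi' \circ q$, observe that $\Pi_{n,0}$ collapses to $\pi'_n$ by surjectivity of $q$, and use lax functoriality of $\pi'$ to show by induction that the higher iterates $\Pi_{n,i}$ contribute nothing beyond $\pi'_n$. Your version is marginally more explicit about the reverse inequality (via the length-one decomposition) and about the action on morphisms, but the substance is identical.
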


\begin{proof}
It is enough to show that for any $\pi'\in [\cat{N},\cat{K}]^J$ we have $\Sigma_q(\pi'\circ q) = \pi'$. Take $\pi = \pi'\circ q$ and note that $
\Pi_{w,0} = \{\pi_{m_1} \vee \ldots \vee \pi_{m_k} \mid q(m_i) = w\} = \{\pi'_{w} \vee \ldots \vee \pi'_{w} \mid q(m_i) = w\} =\{\pi'_w\}$ for $w\in N$.
Moreover, since $\pi'$ is a lax functor we can easily prove by induction that $t\leq \pi'_w$ holds for any $t\in \Pi_{w,n}$. Hence, $\Sigma_q(\pi'\circ q)_w= \bigvee \Pi_w = \pi'_w$ which proves the assertion.
\end{proof}

Saturation is compositional in the sense that $q_2 \circ q_1$-saturation can be staged in terms of $q_1$-saturation and $q_2$-saturation.
\begin{proposition}
	\label{thm:general-saturation-compsition}
	The functor $\Sigma_{q_2}\circ \Sigma_{q_1}$ is (naturally isomorphic to) $\Sigma_{q_2\circ q_1}$.
\end{proposition}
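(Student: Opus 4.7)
The plan is to derive the statement as a formal consequence of the uniqueness of left adjoints, rather than by unfolding the explicit formula from \cref{thm:general-saturation}.

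First I would verify that change-of-base is strictly functorial on composition of monoid homomorphisms, that is, $[q_2 \circ q_1, \cat{K}]^J = [q_1, \cat{K}]^J \circ [q_2, \cat{K}]^J$ as functors $[\cat{P},\cat{K}]^J \to [\cat{M},\cat{K}]^J$. This is a direct check on objects and morphisms: on any lax functor $\pi \in [\cat{P},\cat{K}]^J$ both sides produce $\pi \circ q_2 \circ q_1$, and on any oplax transformation $f$ the $m$-th component is $f_{q_2(q_1(m))}$ in both cases. Fullness of $q_1$ and $q_2$ is not needed here.

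Next I would invoke the assumption that $\cat{K}$ admits $q_1$- and $q_2$-saturation (which by \cref{thm:general-saturation} follows from $q_1$ and $q_2$ being full, together with the distributivity hypotheses on $\cat{K}$). This gives us the 2-adjunctions $\Sigma_{q_1} \dashv [q_1,\cat{K}]^J$ and $\Sigma_{q_2} \dashv [q_2,\cat{K}]^J$. Composing adjunctions yields $\Sigma_{q_2} \circ \Sigma_{q_1} \dashv [q_1,\cat{K}]^J \circ [q_2,\cat{K}]^J$. Using the functoriality identity from the previous step, the right adjoint rewrites as $[q_2 \circ q_1,\cat{K}]^J$, so $\Sigma_{q_2} \circ \Sigma_{q_1}$ is a left adjoint to the same functor as $\Sigma_{q_2 \circ q_1}$. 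By the uniqueness of left 2-adjoints up to natural isomorphism, we conclude $\Sigma_{q_2} \circ \Sigma_{q_1} \cong \Sigma_{q_2 \circ q_1}$.

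I do not anticipate a serious obstacle here: the only slightly delicate point is making sure that the composition $q_2 \circ q_1$ is still full (so that \cref{thm:general-saturation} applies and $\Sigma_{q_2 \circ q_1}$ is defined to begin with), which is immediate since the composition of surjective monoid homomorphisms is surjective. An alternative, more hands-on route would be to compute directly via \cref{thm:general-saturation} that the countable join defining $\Sigma_{q_2 \circ q_1}(\pi)_p$ ranges over $\{\pi_m \mid q_2(q_1(m)) = p\}$-iterates and matches the double iteration giving $\Sigma_{q_2}(\Sigma_{q_1}(\pi))_p$; but this would require juggling the inductive families $\Pi_{n,i}$ and $\Pi_{p,j}'$ together with \eqref{law:J-LD}, whereas the abstract adjoint argument avoids all bookkeeping.
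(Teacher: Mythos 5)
Your argument is exactly the one the paper gives: the proof in the text consists precisely of observing that $\Sigma_{q_1}$ and $\Sigma_{q_2}$ are left adjoints and that $[q_1,\cat{K}]^J\circ [q_2,\cat{K}]^J = [q_2\circ q_1,\cat{K}]^J$, and then concluding by uniqueness of left adjoints. Your write-up just spells out these steps (plus the harmless extra check that $q_2\circ q_1$ is surjective), so it is correct and takes the same route.
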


\begin{proof}
	This statement follows directly from the fact that $\Sigma_{q_1}$ and $\Sigma_{q_2}$ are left adjoints and that $[q_1,\cat{K}]^J\circ [q_2,\cat{K}]^J = [q_2\circ q_1,\cat{K}]^J$.
\end{proof}

Natural transformations are indeed oplax transformations hence a special class of morphisms between lax functors. This class of morphisms is preserved by saturation whenever composition with morphisms from $J$ preserves non-empty joins.

\begin{proposition}
	\label{thm:strong-natural}
	Assume that $\cat{K}$ is $J$-left and $J$-right distributive. For any $f\colon \pi \to \pi'$ in $[\cat{N},\cat{K}]^J$, if $f$ is a natural transformation then, so is $f\colon \Sigma_q(\pi) \to \Sigma_q(\pi')$.
\end{proposition}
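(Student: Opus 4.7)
The plan is to show that the equality $J(f)\circ \Sigma_q(\pi)_n = \Sigma_q(\pi')_n \circ J(f)$ holds for every $n \in N$, starting from the naturality hypothesis $J(f)\circ \pi_m = \pi'_m\circ J(f)$ for every $m \in M$. Since in general we only have $J$-left and $J$-right distributivity (not the stronger assumption of \cref{thm:simplified-formula-for-sigma}), I will use the explicit inductive formula for $\Sigma_q(\pi)_n$ given in \cref{thm:general-saturation} and prove the naturality square layer by layer.

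First, I would denote by $\Pi'_{n,i}$ the analogues of $\Pi_{n,i}$ constructed from $\pi'$ and prove by induction on $i < \omega$ that
\[
    J(f) \circ \Pi_{n,i} = \Pi'_{n,i} \circ J(f)
\]
for every $n \in N$. The base case $i = 0$ is an immediate three-step calculation: push $J(f)$ inside the join defining $\Pi_{n,0}$ using $J$-left distributivity, swap each $J(f)\circ \pi_m$ with $\pi'_m \circ J(f)$ using the naturality hypothesis, and then pull $J(f)$ out on the right using $J$-right distributivity.

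For the inductive step, expand
\[
    J(f)\circ \Pi_{n,i+1} = J(f)\circ \bigvee \bigl\{\Pi_{n_1,i}\circ \dots \circ \Pi_{n_l,i} \,\bigm|\, n_1\cdots n_l = n\bigr\}
\]
and move $J(f)$ inside the outer join by $J$-left distributivity. For a fixed decomposition $n_1\cdots n_l = n$, apply the inductive hypothesis $l$ times, moving $J(f)$ one position to the right at each step (from the leftmost $\Pi_{n_1,i}$ through to $\Pi_{n_l,i}$), until it sits on the far right of the composite and all $\Pi_{n_j,i}$ are replaced by $\Pi'_{n_j,i}$. Finally, use $J$-right distributivity to pull $J(f)$ out of the join, obtaining $\Pi'_{n,i+1}\circ J(f)$.

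Once this lemma is in place, the main equation is a one-line computation:
\[
    J(f)\circ \Sigma_q(\pi)_n
    = J(f)\circ \bigvee_{i<\omega} \Pi_{n,i}
    = \bigvee_{i<\omega} J(f)\circ \Pi_{n,i}
    = \bigvee_{i<\omega} \Pi'_{n,i} \circ J(f)
    = \Sigma_q(\pi')_n\circ J(f),
\]
where the outer two equalities use $J$-left and $J$-right distributivity respectively. The only non-routine point is the inductive step above, where care is needed to iterate the hypothesis across each factor of the composite; the rest is bookkeeping with the distributivity laws.
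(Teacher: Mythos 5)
Your proposal is correct and follows essentially the same route as the paper's own proof: both establish $J(f)\circ \Pi_{n,i} = \Pi'_{n,i}\circ J(f)$ by induction on $i$, using $J$-left distributivity to push $J(f)$ into the joins, the naturality hypothesis to swap it past each $\pi_m$, and $J$-right distributivity to pull it back out, before taking the supremum over $i$. The only difference is that you spell out the inductive step (walking $J(f)$ across each factor of the composites in $\Pi_{n,i+1}$) which the paper leaves implicit.
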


\begin{proof}
	Recall that to be (ordinary) natural transformation means that 
	$f\circ \pi_m  = \pi'_m \circ f$ and $f\circ \Sigma_q(\pi)_n  = \Sigma_q(\pi')_n \circ f$ for any $m \in M$ and $n\in N$.
	Observe that:
	\begin{equation*}
		f\circ (\pi_{m_1}\vee \ldots \vee \pi_{m_k} ) \stackrel{J\text{-LD}}{=} f\circ \pi_{m_1}\vee \ldots \vee f\circ \pi_{m_k}  =	\pi'_{m_1}\circ f\vee \ldots \vee \pi'_{m_k} \circ f \stackrel{J\text{-RD}}{=}  (\pi'_{m_1}\vee \ldots \vee \pi'_{m_k} )\circ f
		\text{.}
	\end{equation*}	
	Hence, for any $n \in N$ it holds that $J(f)\circ \Pi_{n,0} = \Pi'_{n,0}\circ J(f)$ and, by induction on $i$, $J(f)\circ \Pi_{n,i} = \Pi_{n,i}'\circ J(f)$. We conclude that $J(f)\circ \Pi_{n} = \Pi_{n}'\circ J(f)$ which proves the thesis.
\end{proof}

Let $I\colon \cat{I} \to \cat{J}$ exhibit $\cat{I}$ as a wide subcategory of $\cat{J}$ and observe that hom-posets of $[\cat{M},\cat{K}]^{J\circ I}$ are included in those of $[\cat{M},\cat{K}]^{J}$. As a consequence, there is an obvious inclusion 2-functor $\Upsilon^{I}_{\cat{M}}\colon [\cat{M},\cat{K}]^{J\circ I} \to [\cat{M},\cat{K}]^J$. 
Saturation commutes with the ``change of hom-posets'' induced by $I$.

\begin{proposition}
	\label{thm:change-of-subcat}
	The following is a morphism of strict 2-adjunctions.
	\begin{equation*}
		\begin{tikzpicture}[diagram]
			\node                   (n0) {$[\cat{N},\cat{K}]^{J\circ I}$};
			\node[below= 6ex of n0] (n1) {$[\cat{M},\cat{K}]^{J\circ I}$};
			\node[right=13ex of n0] (n2) {$[\cat{N},\cat{K}]^{J}$};
			\node[below= 6ex of n2] (n3) {$[\cat{M},\cat{K}]^{J}$};
			
			\def\d{1ex}
			\def\s{-1.0ex}
			
			\draw[->,bend left] ([xshift=\s+\d]n0.south) to node (l0) {$[q,\cat{K}]^{J\circ I}$} ([xshift=\s+\d]n1.north);
			\draw[->,bend left] ([xshift=\s-\d]n1.north) to node (l1) {$\Sigma_q^{J\circ I}$} ([xshift=\s-\d]n0.south);
			\node[] at ($(l0.west)!.5!(l1.east)$) {$\dashv$};

			\def\d{1ex}
			\def\s{-.3ex}
			
			\draw[->,bend left] ([xshift=\s+\d]n2.south) to node (l2) {$[q,\cat{K}]^J$} ([xshift=\s+\d]n3.north);
			\draw[->,bend left] ([xshift=\s-\d]n3.north) to node (l3) {$\Sigma_q^J$} ([xshift=\s-\d]n2.south);
			\node[] at ($(l2.west)!.5!(l3.east)$) {$\dashv$};
			
			\draw[->] (n0) to node       (l4) {$\Upsilon^{I}_{\cat{N}}$} (n2);
			\draw[->] (n1) to node[swap] (l5) {$\Upsilon^{I}_{\cat{M}}$} (n3);

				\node[] at ($(l4.south)!.5!(l5.north)$) {$\cong$};
		\end{tikzpicture}
	\end{equation*}
\end{proposition}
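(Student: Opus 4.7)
The plan is to show that both squares in the diagram---the one formed by the right adjoints and the one formed by the left adjoints---commute strictly on the nose, so that the natural isomorphism $\cong$ indicated in the statement may in fact be taken as an identity, and that the units and counits of the two adjunctions agree along the inclusion $2$-functors $\Upsilon^I_{\cat{M}}$ and $\Upsilon^I_{\cat{N}}$. This is precisely the data required to exhibit a morphism of strict $2$-adjunctions.

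First I would verify commutativity of the square of right adjoints. By construction $[q,\cat{K}]^{J\circ I}$ and $[q,\cat{K}]^J$ act by precomposition with the functor $q\colon \cat{M}\to\cat{N}$, both on objects and on morphisms, while $\Upsilon^I_{\cat{M}}$ and $\Upsilon^I_{\cat{N}}$ are the identity on objects (lax functors) and include hom-posets (since an oplax transformation with components in $J\circ I$ has, \emph{a fortiori}, components in $J$). Hence both composites $\Upsilon^I_{\cat{M}}\circ [q,\cat{K}]^{J\circ I}$ and $[q,\cat{K}]^J\circ \Upsilon^I_{\cat{N}}$ send any $\pi'\in [\cat{N},\cat{K}]^{J\circ I}$ to $\pi'\circ q$ and any oplax transformation to itself; they coincide as $2$-functors.

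Next I would verify commutativity of the square of left adjoints using the explicit formula in \cref{thm:general-saturation}. On objects, $\Sigma_q(\pi)$ is built from the joins $\Pi_{n,i}$ of composites of the $\pi_m$'s, a recipe that depends only on $\pi$ and $q$ and not on the subcategory of $\cat{K}$ in which the components of oplax transformations are required to lie. On morphisms, $\Sigma_q$ acts as the identity. Therefore $\Upsilon^I_{\cat{N}}\circ \Sigma_q^{J\circ I}$ and $\Sigma_q^J\circ \Upsilon^I_{\cat{M}}$ agree on every object and every morphism. The only thing one must double-check is that $\Sigma_q^{J\circ I}$ indeed lands in $[\cat{N},\cat{K}]^{J\circ I}$, that is, that the identity morphism on an oplax transformation $f$ with components from $J\circ I$ satisfies the lax-square condition against $\Sigma_q(\pi)$ and $\Sigma_q(\pi')$; but this is exactly the computation already carried out in the proof of \cref{thm:general-saturation}, which used only $J$-left distributivity of $\cat{K}$ and therefore transfers verbatim to $J\circ I$.

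Finally, since both squares commute strictly, the triangle identities for the $J\circ I$-adjunction are transported by $\Upsilon^I_{\cat{M}}$ and $\Upsilon^I_{\cat{N}}$ to the triangle identities for the $J$-adjunction, and the units and counits are preserved on the nose. The main conceptual step is simply the observation that the construction of $\Sigma_q$ from \cref{thm:general-saturation} is ``oblivious'' to the ambient subcategory of $2$-cells; once this is noted, no calculation of real substance remains.
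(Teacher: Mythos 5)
Your argument is correct. The paper's own proof consists of the same first observation you make---that $[q,\cat{K}]^{J\circ I}$ is the restriction of $[q,\cat{K}]^{J}$ along the inclusions $\Upsilon^{I}_{\cat{M}}$ and $\Upsilon^{I}_{\cat{N}}$, so the square of right adjoints commutes on the nose---but then dispatches the rest in one line by recalling from \cref{thm:composition-comonad} that both adjunctions are coreflections, which forces the mate of the (identity) right-adjoint square to be invertible. You instead verify the left-adjoint square directly from the explicit formula of \cref{thm:general-saturation}, observing that the construction of $\Sigma_q(\pi)$ through the joins $\Pi_{n,i}$ depends only on $\pi$, $q$, and the order structure of $\cat{K}$, not on the chosen subcategory of components, and that the well-definedness argument (which uses only left distributivity of $\cat{K}$) transfers verbatim from $J$ to $J\circ I$. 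Your route is more elementary and self-contained: it makes the strict commutation of both squares, and hence the agreement of units and counits, completely explicit, at the price of re-opening the construction of $\Sigma_q$. The paper's route is shorter and purely formal, but leaves the reader to unwind why the coreflection property suffices. Both are sound.
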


\begin{proof}
	Observe that $[q,\cat{K}]^{J\circ I}$ is the restriction of $[q,\cat{K}]^{J}$ to the wide subcategories identified by $\Upsilon^{I}_{\cat{M}}$ and $\Upsilon^{I}_{\cat{N}}$ and recall from \cref{thm:composition-comonad} that both adjunctions are coreflections.
\end{proof}

\begin{theorem}
	If $\cat{K}$ admits $q$-saturation with respect to $J$ then, it admits $q$-saturation with respect to $J \circ I$.
\end{theorem}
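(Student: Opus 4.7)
The plan is to construct $\Sigma_q^{J\circ I}$ as a restriction of the already given $\Sigma_q^J$ along the inclusion 2-functors $\Upsilon^I_{\cat M}$ and $\Upsilon^I_{\cat N}$, and then to verify that the restricted pair still forms an adjunction. The first observation is that $[\cat{M},\cat{K}]^J$ and $[\cat{M},\cat{K}]^{J\circ I}$ share the same objects, namely lax functors $\cat M \to \cat K$ (the data of a lax functor depends only on the target category $\cat K$ and not on which wide subcategory of morphisms transformations are drawn from); the same holds for $\cat N$. So $\Upsilon^I_{\cat M}$ and $\Upsilon^I_{\cat N}$ are identities on objects and proper order-preserving inclusions on hom-posets.

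Next, I would show that $[q,\cat K]^J$ restricts to a functor $[q,\cat K]^{J\circ I}\colon [\cat N,\cat K]^{J\circ I}\to[\cat M,\cat K]^{J\circ I}$: the change-of-base functor acts by reindexing the unique component of an oplax transformation (the component at $m\in M$ is the component at $q(m)\in N$ of the original), so membership of the underlying $\cat J$-morphism in the wide subcategory $\cat I$ is manifestly preserved. Having this, I set $\Sigma_q^{J\circ I}(\pi) := \Sigma_q^J(\pi)$ on objects, which is legitimate because the codomain categories have the same objects as their $J$-variants.

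The heart of the argument is to show that the hom-poset bijection supplied by the assumed adjunction
\[
	[\cat N,\cat K]^J(\Sigma_q^J(\pi),\pi')\cong[\cat M,\cat K]^J(\pi,\pi'\circ q)
\]
restricts to a bijection of the $J\circ I$ sub-posets. By \cref{thm:composition-comonad}, the given adjunction is a coreflection, so its counit is the identity; the triangle identities then constrain the unit $\eta_\pi$ to act as the identity on the underlying carrier object, and in particular its single component is an identity morphism of $\cat K$, which trivially lies in $J\circ I$. The adjunct of a morphism $f\colon \Sigma_q^J(\pi)\to\pi'$ is obtained by pre-composing with $\eta_\pi$ (and vice versa by post-composing with the identity counit), so the transpose of a morphism whose single component lies in $\cat I$ still has its component in $\cat I$. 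This simultaneously shows that $\Sigma_q^J$ sends morphisms of $[\cat M,\cat K]^{J\circ I}$ to morphisms of $[\cat N,\cat K]^{J\circ I}$ and that the bijection restricts as desired.

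The main obstacle, as in any ``restriction of an adjunction'' argument, is verifying that the unit lives inside the smaller subcategory; here the coreflection property of \cref{thm:composition-comonad} collapses this to the trivial observation that identities belong to any wide subcategory. Once this is in place, naturality of the bijection transfers automatically from the $J$-adjunction, and the proof is complete.
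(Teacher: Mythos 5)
Your proof is correct, but it reaches the conclusion by a genuinely different route than the paper. The paper's proof is a one-line appeal to \cref{thm:general-saturation,thm:change-of-subcat}: the explicit formula for $\Sigma_q$ in \cref{thm:general-saturation} makes no reference to the wide subcategory at all, and its proof (including the hom-poset isomorphism, which is literally the identity on underlying $\cat{J}$-morphisms) goes through verbatim for any wide subcategory, so $q$-saturation with respect to $J\circ I$ exists by re-running the same construction; \cref{thm:change-of-subcat} then records the compatibility of the two adjunctions. You instead restrict the already-given $J$-adjunction along the inclusions $\Upsilon^{I}_{\cat{M}}$ and $\Upsilon^{I}_{\cat{N}}$, using the coreflection property of \cref{thm:composition-comonad} to argue that the unit has identity components, so that transposition in either direction preserves membership of the component in the image of $J\circ I$. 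This adjunction-restriction argument treats the hypothesis ``$\cat{K}$ admits $q$-saturation with respect to $J$'' more as a black box, which is arguably closer to the letter of the statement, whereas the paper's proof really re-invokes the standing distributivity assumptions. One small imprecision to fix: the triangle identity together with an identity counit only yields $\Sigma_q^J(\eta_\pi)=\mathit{id}$; to conclude that $\eta_\pi$ itself has identity component you additionally need that $\Sigma_q^J$ is faithful --- indeed that it acts as the identity on underlying morphisms, which is part of the statement of \cref{thm:general-saturation} --- so the gap closes immediately, but it should be cited rather than attributed to the triangle identities alone.
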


\begin{proof}
	Follows directly from \cref{thm:general-saturation,thm:change-of-subcat}.
\end{proof}

\subsection{Behavioural equivalences}
\label{sec:general-saturation-equivalences}

In this section we extend the theory of saturation-based behavioural equivalences from special to general saturation.

Recall from \cref{sec:special-saturation-equivalences} that saturation-based equivalences are defined as kernel pairs of \emph{behavioural morphisms} \ie endomorphism whose domain is the saturation of the system under scrutiny. 
\Cref{def:q-behavioural-equivalence} below is a generalisation of this notion where special saturation is replaced with general saturation. As in the previous section, we assume that $\cat{K}$ is a left distributive, order enriched category with arbitrary non-empty joins, that $J$ exhibits a wide subcategory $\cat{J}$ of $\cat{K}$, and that $q\colon M \to N$ is a surjective homomorphism of monoids.

\begin{definition}
	\label{def:q-behavioural-equivalence}
	A \emph{$q$-behavioural morphism} for $\pi\in \JFun{\cat{M}}{\cat{K}}{J}$ is any morphism $f$ from $J$ that carries a (strict) natural transformation with domain $\Sigma_q^J(\pi) \in \JFun{\cat{N}}{\cat{K}}{J}$ \ie any $f$ with the property that there is $\pi'\in \JFun{\cat{N}}{\cat{K}}{J}$ such that the diagram below commutes for any $n \in \cat{N}$:
  \begin{equation*}
  	\begin{tikzpicture}[diagram,xscale=1.4,yscale=1.1]
  		\node (n0) at (0,1) {$\pi (\ast)$};
  		\node (n1) at (1,1) {$\pi'(\ast)$};
  		\node (n2) at (0,0) {$\pi (\ast)$};
  		\node (n3) at (1,0) {$\pi'(\ast)$};
  		\draw[->] (n0) to node {$J(f)$} (n1);
  		\draw[->] (n0) to node[swap] {$\Sigma_q^J(\pi)_n$} (n2);
  		\draw[->] (n1) to node {$\pi'_n$} (n3);
  		\draw[->] (n2) to node[swap] {$J(f)$} (n3);
  	\end{tikzpicture}
  \end{equation*}
	A \emph{$q$-bisimulation} for $\pi$ is a relation $R\rightrightarrows \pi(\ast)$ from $J$ that is also the kernel pair of some $q$-behavioural morphism for $\pi$.
\end{definition}

If $\cat{K}$ is the category of $T$-coalgebras then, the notion of $id$-bisimulation coincides with that of kernel bisimulation for $T$-coalgebras. We adopt the term ``refinement system'', used to denote to the codomain of the behavioural morphism associated with a kernel bisimulation, to denote the codomain ($\pi'$ in \cref{def:q-behavioural-equivalence}) of a $q$-behavioural morphism.

It follows from the idempotency property of saturation that $q$-bisimulation can be equivalently stated in terms of refinement systems from the original category (\ie from $\JFun{\cat{M}}{\cat{K}}{J}$ instead of $\JFun{\cat{N}}{\cat{K}}{J}$): it suffices to take their saturation.
\begin{proposition}
	A morphism $f$ from $J$ is a $q$-behavioural morphism for $\pi \in [\cat{M},\cat{K}]^J$ if and only if there is $\pi'\in [\cat{M},\cat{K}]^J$ such that the diagram below commutes for any $m \in \cat{M}'$.
    \begin{equation*}
    	\begin{tikzpicture}[diagram,xscale=1.4,yscale=1.1]
    		\node (n0) at (0,1) {$\pi (\ast)$};
    		\node (n1) at (1,1) {$\pi'(\ast)$};
    		\node (n2) at (0,0) {$\pi (\ast)$};
    		\node (n3) at (1,0) {$\pi'(\ast)$};
    		\draw[->] (n0) to node {$J(f)$} (n1);
    		\draw[->] (n0) to node[swap] {$\Sigma_q^J(\pi)_m$} (n2);
    		\draw[->] (n1) to node {$\Sigma_q(\pi')_m$} (n3);
    		\draw[->] (n2) to node[swap] {$J(f)$} (n3);
    	\end{tikzpicture}
    \end{equation*}
\end{proposition}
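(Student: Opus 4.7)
The plan is to prove this equivalence purely by transporting refinement systems back and forth between $[\cat{N},\cat{K}]^J$ and $[\cat{M},\cat{K}]^J$ using the coreflection established in \cref{thm:composition-comonad}. Since both versions of the commuting square are indexed by $N$, the only thing that really varies between the two formulations is where the refinement system lives; the idempotency $\Sigma_q \circ [q,\cat{K}]^J = \mathrm{Id}_{[\cat{N},\cat{K}]^J}$ will allow us to pass between the two in a lossless way.

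For the forward implication, I would start from a refinement system $\pi' \in [\cat{N},\cat{K}]^J$ witnessing that $f$ is a $q$-behavioural morphism in the sense of \cref{def:q-behavioural-equivalence}, and then set $\tilde{\pi} \defeq [q,\cat{K}]^J(\pi') = \pi' \circ q$, which is an object of $[\cat{M},\cat{K}]^J$. \Cref{thm:composition-comonad} gives $\Sigma_q(\tilde{\pi}) = \pi'$, so for every $n \in N$ the commuting square
\[
	J(f) \circ \Sigma_q^J(\pi)_n = \pi'_n \circ J(f) = \Sigma_q(\tilde{\pi})_n \circ J(f)
\]
is exactly of the required new form, with $\tilde{\pi}$ playing the role of the refinement system.

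For the reverse implication I would start from a $\pi' \in [\cat{M},\cat{K}]^J$ making the new square commute for every $n \in N$ and simply take $\tilde{\pi}' \defeq \Sigma_q^J(\pi') \in [\cat{N},\cat{K}]^J$. Then the hypothesis rewrites as
\[
	J(f) \circ \Sigma_q^J(\pi)_n = \tilde{\pi}'_n \circ J(f)
\]
for every $n \in N$, which is exactly the condition of \cref{def:q-behavioural-equivalence} with $\tilde{\pi}'$ as refinement system. I do not expect any serious obstacle here: both directions are essentially bookkeeping around \cref{thm:composition-comonad}. The only point worth flagging is that the index ``$m \in \cat{M}'$'' in the displayed diagram is read as ranging over $N$, since both $\Sigma_q^J(\pi)$ and $\Sigma_q(\pi')$ live in $[\cat{N},\cat{K}]^J$; once that is acknowledged, the argument is a two-line application of the coreflection.
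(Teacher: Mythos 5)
Your proposal is correct and follows essentially the same route as the paper, whose entire proof is an appeal to the idempotency of saturation (\cref{thm:composition-comonad}): the forward direction transports the refinement system along $[q,\cat{K}]^J$ and uses $\Sigma_q\circ[q,\cat{K}]^J = \mathrm{Id}$, and the reverse direction is immediate since $\Sigma_q^J(\pi')$ already lives in $[\cat{N},\cat{K}]^J$. Your reading of the index ``$m\in\cat{M}'$'' as ranging over $N$ is also the intended one.
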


\begin{proof}
	By \cref{thm:composition-comonad}.
\end{proof}

Each notion of $q$-bisimulation for systems modelled in $[\cat{M},\cat{K}]^J$ arises from some congruence for $M$ and, \viceversa, each congruence defines a notion of $q$-bisimulation. \Cref{thm:behavioural-compostion,thm:behavioural-composition-bisimulation} below state that that coarser congruences define coarser notions of bisimulations. 
A prototypical example are strong/weak equivalences and time/time-abstract ones (see \cref{sec:timed-equivalences} and \cref{fig:morphisms-equivalence-spectrum}).

\begin{theorem}
	\label{thm:behavioural-compostion}
	Every $q$-behavioural morphism for~$\pi$ is a $(q' \circ q)$-behavioural morphism for~$\pi$.
\end{theorem}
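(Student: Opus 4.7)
The plan is to exploit two earlier results: the compositionality of saturation (\cref{thm:general-saturation-compsition}) and the preservation of (strict) naturality by saturation (\cref{thm:strong-natural}). Together these immediately reduce the claim to a short diagram-chasing argument.

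First I would unfold the definition: a $q$-behavioural morphism $f$ for $\pi$ comes with a witnessing lax functor $\pi' \in \JFun{\cat{N}}{\cat{K}}{J}$ and a (strict) natural transformation $f\colon \Sigma_q^J(\pi) \to \pi'$ in $\JFun{\cat{N}}{\cat{K}}{J}$. My goal is to produce a witness $\pi'' \in \JFun{\cat{N}'}{\cat{K}}{J}$ such that $f$ is a (strict) natural transformation $\Sigma_{q'\circ q}^J(\pi) \to \pi''$, where $q'\colon N \to N'$ is the second surjective monoid homomorphism composed with $q$.

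The candidate is $\pi'' \defeq \Sigma_{q'}^J(\pi')$. By \cref{thm:strong-natural} (which applies because $\cat{K}$ is assumed to be $J$-left and $J$-right distributive in this part of the paper), applying $\Sigma_{q'}^J$ to $f\colon \Sigma_q^J(\pi)\to\pi'$ yields a strict natural transformation
\[
	\Sigma_{q'}^J(f)\colon \Sigma_{q'}^J(\Sigma_q^J(\pi)) \longrightarrow \Sigma_{q'}^J(\pi')
\]
in $\JFun{\cat{N}'}{\cat{K}}{J}$, and since $\Sigma_{q'}^J$ acts as the identity on morphisms (\cref{thm:general-saturation}), its underlying arrow is still $f$. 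By \cref{thm:general-saturation-compsition} there is a natural isomorphism $\Sigma_{q'}^J \circ \Sigma_q^J \cong \Sigma_{q' \circ q}^J$, so the domain of this transformation is (isomorphic to) $\Sigma_{q'\circ q}^J(\pi)$. Hence $f$ is a strict natural transformation $\Sigma_{q'\circ q}^J(\pi) \to \Sigma_{q'}^J(\pi')$, which exhibits it as a $(q' \circ q)$-behavioural morphism for $\pi$.

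The only subtlety, and what I expect to be the main (minor) obstacle, is taking care of the isomorphism $\Sigma_{q'\circ q}^J(\pi) \cong \Sigma_{q'}^J(\Sigma_q^J(\pi))$: strictly speaking \cref{thm:general-saturation-compsition} is only a natural isomorphism, so one should either pre-compose with its component at $\pi$ (which is again a morphism in $\JFun{\cat{N}'}{\cat{K}}{J}$) or, more conceptually, observe that both formulas for $\Sigma_{q'\circ q}^J(\pi)$ and $\Sigma_{q'}^J(\Sigma_q^J(\pi))$ agree on the only object and produce the same join on morphisms via the explicit description in \cref{thm:general-saturation}. Either way, the witnessing square for $(q' \circ q)$-behaviour is obtained by pasting the naturality square for $f\colon\Sigma_q^J(\pi)\to \pi'$ under $\Sigma_{q'}^J$ with this identification, completing the proof.
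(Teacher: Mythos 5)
Your proposal is correct and follows essentially the same route as the paper: the paper's own proof is exactly the one-line observation that the claim follows from \cref{thm:general-saturation-compsition,thm:strong-natural}, and your argument simply spells out the details of that reduction, including the witness $\Sigma_{q'}^J(\pi')$ and the identification of $\Sigma_{q'}^J(\Sigma_q^J(\pi))$ with $\Sigma_{q'\circ q}^J(\pi)$. Your remark on handling the natural isomorphism carefully is a reasonable (and welcome) elaboration of a point the paper leaves implicit.
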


\begin{proof}
	Follows directly from \cref{thm:general-saturation-compsition,thm:strong-natural}.
\end{proof}

\begin{corollary}
	\label{thm:behavioural-composition-bisimulation}
	Every $q$-bisimulation for $\pi$ is a $(q'\circ q)$-bisimulation for $\pi$.	
\end{corollary}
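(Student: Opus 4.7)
The plan is to observe that the corollary is an immediate consequence of \cref{thm:behavioural-compostion} together with the very definition of $q$-bisimulation given in \cref{def:q-behavioural-equivalence}. Specifically, a $q$-bisimulation on $\pi$ is not an intrinsic gadget but merely a relation that happens to arise as the kernel pair (in $\cat{J}$) of some $q$-behavioural morphism for $\pi$; promoting such a relation along $q' \circ q$ therefore reduces to promoting its witnessing morphism.

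Concretely, suppose $R \rightrightarrows \pi(\ast)$ is a $q$-bisimulation for $\pi$. By \cref{def:q-behavioural-equivalence}, there exists a morphism $f$ from $J$ that is a $q$-behavioural morphism for $\pi$ and whose kernel pair is $R$. First, I would invoke \cref{thm:behavioural-compostion} on $f$ to conclude that the very same $f$ is also a $(q' \circ q)$-behavioural morphism for $\pi$. Since the kernel pair of $f$ is unchanged by how we qualify $f$ (it is a purely categorical construction in $\cat{J}$ and depends only on $f$, not on the monoid homomorphism with respect to which $f$ is being regarded as behavioural), $R$ is exhibited as the kernel pair of a $(q' \circ q)$-behavioural morphism for $\pi$, hence a $(q' \circ q)$-bisimulation.

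There is essentially no obstacle here; the only point worth being careful about is that the refinement system $\pi'$ witnessing that $f$ is $q$-behavioural lives in $\JFun{\cat{N}}{\cat{K}}{J}$, while the refinement system witnessing that $f$ is $(q' \circ q)$-behavioural should live in $\JFun{\cat{N}'}{\cat{K}}{J}$, where $q'\colon \cat{N} \to \cat{N}'$. But this is exactly what \cref{thm:behavioural-compostion} supplies, via applying $\Sigma_{q'}$ to $\pi'$ and using \cref{thm:strong-natural} to transport the naturality of $f$ through the saturation, as already done in that theorem's proof. Therefore, the corollary reduces to a one-line invocation of \cref{thm:behavioural-compostion}.
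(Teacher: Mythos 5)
Your argument is exactly the one the paper intends: the corollary is stated without explicit proof precisely because it follows immediately from \cref{thm:behavioural-compostion} by upgrading the witnessing $q$-behavioural morphism to a $(q'\circ q)$-behavioural morphism and observing that its kernel pair in $\cat{J}$ is unaffected. Your proposal is correct and matches the paper's (implicit) reasoning.
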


\Cref{thm:behavioural-inclusion,thm:behavioural-inclusion-bisimulation} below state that behavioural equivalences are preserved under ``change of hom-sets''. A prototypical instance is offered by bisimulations and language equivalences: since the former will be captured using the canonical inclusion $(-)^\sharp\colon \Set \to \kl(\ENA)$ and the latter using the canonical inclusion $(-)^\sharp\colon\kl(\mathcal{P}) \to \kl(\ENA)$, it follows from \cref{thm:behavioural-inclusion-bisimulation} that bisimulations are always language equivalences. 
\begin{theorem}
	\label{thm:behavioural-inclusion}
	Every $q$-behavioural morphism for $\pi \in \JFun{\cat{M}}{\cat{K}}{J\circ I}$
  is a $q$-behavioural morphism for $\pi \in \JFun{\cat{M}}{\cat{K}}{J}$.
\end{theorem}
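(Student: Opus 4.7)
The plan is to unfold the definition of $q$-behavioural morphism in the source category $\JFun{\cat{M}}{\cat{K}}{J\circ I}$ and then transport the witnessing refinement system along the inclusion 2-functor $\Upsilon^I_{\cat{N}}\colon [\cat{N},\cat{K}]^{J\circ I} \to [\cat{N},\cat{K}]^J$ from \cref{thm:change-of-subcat}.

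First I would fix a $q$-behavioural morphism $f$ for $\pi \in \JFun{\cat{M}}{\cat{K}}{J\circ I}$, which by \cref{def:q-behavioural-equivalence} means that $f$ lies in $\cat{I}$ and there exists some $\pi' \in \JFun{\cat{N}}{\cat{K}}{J\circ I}$ such that
\begin{equation*}
(J\circ I)(f)\circ \Sigma_q^{J\circ I}(\pi)_n = \pi'_n \circ (J\circ I)(f)
\end{equation*}
for every $n \in N$. To prove that $f$ is also a $q$-behavioural morphism for $\pi$ viewed in $\JFun{\cat{M}}{\cat{K}}{J}$, I need to exhibit a witness $\tilde\pi' \in \JFun{\cat{N}}{\cat{K}}{J}$ and check that the analogous square involving $\Sigma_q^J$ commutes.

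The natural candidate is $\tilde\pi' \defeq \Upsilon^I_{\cat{N}}(\pi')$. Since $I$ exhibits $\cat{I}$ as a wide subcategory of $\cat{J}$, the morphism $f$ is automatically a morphism in $\cat{J}$, and $(J\circ I)(f) = J(I(f))$ identifies with $J(f)$ after this reinterpretation; likewise the underlying object and the pointwise structure of $\pi'$ agree with those of $\tilde\pi'$. By \cref{thm:change-of-subcat} the diagram of 2-adjunctions commutes up to isomorphism, whence
\begin{equation*}
\Upsilon^I_{\cat{N}}\bigl(\Sigma_q^{J\circ I}(\pi)\bigr) \;\cong\; \Sigma_q^{J}\bigl(\Upsilon^I_{\cat{M}}(\pi)\bigr)\text{,}
\end{equation*}
so $\Sigma_q^J(\pi)_n$ coincides with $\Sigma_q^{J\circ I}(\pi)_n$ on objects of the ambient category $\cat{K}$. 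Substituting into the original commuting square yields exactly the condition certifying $f$ as a $q$-behavioural morphism for $\pi \in \JFun{\cat{M}}{\cat{K}}{J}$, with refinement system $\tilde\pi'$.

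There is no real obstacle: the content of the statement is essentially that the change-of-hom-subcategory operation $\Upsilon^I$ is compatible with saturation and with the naturality squares that define behavioural morphisms, and both facts are encoded in \cref{thm:change-of-subcat}. The only point requiring care is making the identifications $(J\circ I)(f) = J(f)$ and $\Upsilon^I_{\cat{N}}(\Sigma_q^{J\circ I}(\pi)) = \Sigma_q^J(\Upsilon^I_{\cat{M}}(\pi))$ explicit, so that the commuting square in $\cat{K}$ is literally the same diagram in both interpretations.
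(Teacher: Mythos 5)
Your proposal is correct and follows essentially the same route as the paper, which simply cites \cref{thm:change-of-subcat} (together with \cref{thm:strong-natural}): the saturation $\Sigma_q$ acts identically on the underlying lax functor regardless of the choice of hom-subcategory, so the witnessing refinement system and the commuting square transport verbatim along $\Upsilon^I_{\cat{N}}$. Your write-up is just a more explicit unfolding of that one-line argument.
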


\begin{proof}
	Follows directly from \cref{thm:change-of-subcat,thm:strong-natural}.
\end{proof}

\begin{corollary}
	\label{thm:behavioural-inclusion-bisimulation}
	A $q$-bisimulation for $\pi \in \JFun{\cat{M}}{\cat{K}}{J\circ I}$ is a $q$-bisimulation for $\pi \in \JFun{\cat{M}}{\cat{K}}{J}$.	
\end{corollary}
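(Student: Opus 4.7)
The plan is to derive the corollary immediately from \cref{thm:behavioural-inclusion} by unfolding the definition of $q$-bisimulation. Given a $q$-bisimulation $R\rightrightarrows \pi(\ast)$ for $\pi$ viewed as an object of $\JFun{\cat{M}}{\cat{K}}{J\circ I}$, I would use the definition to extract some $q$-behavioural morphism $f$ (from $J\circ I$) such that $R$ is the kernel pair of $f$, together with a refinement system $\pi'\in \JFun{\cat{N}}{\cat{K}}{J\circ I}$ witnessing that $f$ is a $q$-behavioural morphism.

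Next, I would apply \cref{thm:behavioural-inclusion} to this $f$: the theorem states that the very same morphism, now regarded via the inclusion as a morphism from $J$, is a $q$-behavioural morphism for $\pi$ when $\pi$ is viewed in $\JFun{\cat{M}}{\cat{K}}{J}$. The refinement system for $f$ in the larger setting is obtained by applying $\Upsilon^{I}_{\cat{N}}$ to $\pi'$, in accordance with the proof of \cref{thm:behavioural-inclusion} (which rests on \cref{thm:change-of-subcat,thm:strong-natural}). Since $I$ exhibits $\cat{I}$ as a wide subcategory of $\cat{J}$, a relation from $J\circ I$ is a fortiori a relation from $J$, so $R$ retains its status as a relation in the larger category.

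The only point I would spell out is that the kernel-pair witness carries over unchanged: the jointly monic span $R\rightrightarrows \pi(\ast)$ lies in $\cat{I}$ and its legs are preserved by $I$, so it presents the kernel pair of $f$ equally well in $\cat{J}$. Consequently $R$ satisfies the definition of $q$-bisimulation for $\pi \in \JFun{\cat{M}}{\cat{K}}{J}$. I do not anticipate a genuine obstacle here; the content of the corollary is entirely bundled in \cref{thm:behavioural-inclusion}, and the proof reduces to tracking definitions.
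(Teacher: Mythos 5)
Your proposal is correct and matches the paper's intended argument exactly: the corollary is stated in the paper as an immediate consequence of \cref{thm:behavioural-inclusion}, obtained precisely by unfolding \cref{def:q-behavioural-equivalence} and observing that the behavioural morphism, its refinement system (pushed along $\Upsilon^{I}_{\cat{N}}$), and the kernel-pair span all transport along the wide-subcategory inclusion $I$. Your explicit remark that the jointly monic span in $\cat{I}$ still presents the kernel pair in $\cat{J}$ is the only step the paper leaves tacit, and spelling it out is a reasonable addition rather than a deviation.
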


\section{Functor models of timed transition systems}
\label{sec:systems-as-functors}

To develop a behavioural theory for timed transition systems based on the framework we introduced in \cref{sec:general-saturation-and-equivalences} one only needs to model TTSs as (lax) functors where the index monoid category ($\cat{M}$) models abstract executions (\eg sequences of timed steps) and the base category ($\cat{K}$) models the effects associated with each abstract execution (\eg the set of states reachable with a certain sequence of timed steps). Then one can readily instantiate our framework as we illustrate in \cref{sec:timed-equivalences}.

Recall from \cref{sec:timed-transition-systems} that timed transition system are essentially transition systems whose transitions are labelled with symbols from $\Sigma_\tau$ and time durations from $\mathbb{T}$ (with some definitions imposing additional constraints).
There is bijective correspondence between transition systems with labels from
$\Sigma_\tau \times \mathbb{T}$ and time-indexed families of transition systems with labels $\Sigma_\tau$ and the same state space:
\begin{equation}
	\label{eq:tts-t-families-correspondence}
	\frac{
		X \to \mathcal{P}(\Sigma_\tau \times \mathbb{T} \times X)
	}{
		\mathbb{T} \to \mathcal{P}(\Sigma_\tau  \times X)^X
	}
\end{equation}
Observe that the first are coalgebras of type $\mathcal{P}(\Sigma_\tau \times \mathbb{T} \times -)$ and that the second are $\mathbb{T}$-indexed families of endomorphisms from $\kl\big(\LTS\big)$ since $\kl(\LTS)(X,X) = \mathcal{P}(\Sigma_\tau  \times X)^X$.
Let $\cat{T}^*$ be the category associated to the free monoid $\mathbb{T}^*$ and $J$ the inclusion $(-)^\sharp\colon\Set \to \kl(\LTS)$.
The correspondence extends to an isomorphism
\begin{equation*}
	\cat{Coalg}(\mathcal{P}(\Sigma_\tau \times \mathbb{T} \times -))
	\cong
	\Cat\big(\cat{T}^*,\kl\big(\LTS\big)\big){}^{J}
\end{equation*}
between the categories of $\mathcal{P}(\Sigma_\tau \times \mathbb{T} \times -)$-coalgebras and the subcategory of $\big[\cat{T}^*,\kl\big(\LTS\big)\big]{}^{J}$ given by (strict) functors and (strict) natural transformations.
Thus we have that:
\begin{equation*}
	\cat{Coalg}(\mathcal{P}(\Sigma_\tau \times \mathbb{T} \times -))
	\cong
	\Cat\big(\cat{T}^*,\kl\big(\LTS\big)\big){}^{J}
	\hookrightarrow
	\big[\cat{T}^*,\kl\big(\LTS\big)\big]{}^{J}
	\text{.}
\end{equation*}
We model TTSs as ordinary (\ie not lax) functors in $\big[\cat{T}^*,\kl\big(\LTS\big)\big]{}^{J}$ (or $\big[\cat{T}^*,\kl\big(\ENA\big)\big]{}^{J}$ when discussing language equivalences) and write $\underline{\alpha}$ and $\overline{\alpha}$ for the functor model and the coalgebra model of a TTS $\alpha$ whenever the distinction is not clear from the context.

Besides fitting into our framework, modelling timed transition systems as functors allow us to extend our approach to timed behaviours that do not have a direct coalgebra model (or an equivalent of \eqref{eq:tts-t-families-correspondence}) as it happens for timed Segala automata \cite{segala:tcs2002}. In fact, a single entry of a timed transition table can easily yield uncountably many transitions in the associated TTS whereas the semantics of Segala systems based on compound steps (also called \emph{convex semantics} after the use of convex closures, \cf \cite{segala:phd-thesis,varaccawinskel2006:mscs}) assumes that probability distribution supports are (finitely) bounded (\cf \cite{brengos2014:cmcs,jacobs08:cmcs}).
Instead, we are able to derive the behavioural theory of timed Segala systems using the standard convex semantics without modifications (\cf \cref{sec:s-ta}).

A general construction for defining functor models for timed calculi and timed automata is out of the scope of this work. We refer the interested reader to \cite{kick:phdthesis} for a coalgebraic account of timed calculi and to \cite{brengos2016:concur} an account of timed automata in terms for lax functors.

\section{Behavioural equivalences for timed transition systems}
\label{sec:timed-equivalences}

\begin{figure}[t]
	\centering
	\begin{tikzpicture}[
			auto, font=\footnotesize,
			scale=1,		
			box/.style={text width=#1, align=center,draw,fill=white},
			dot/.style={circle,draw,fill,inner sep=1.5pt,outer sep=1pt},
			box con/.style={-,thin,shorten >=4pt,crossing=1.7pt},
			hasse con/.style={-stealth,thick,crossing=2pt}
		]

		\node[box={3cm}] (b00) at (-4,-1.5) {strong timed equivalence};
		\node[box={3cm}] (b10) at (-4,-0.5) {weak timed equivalence};
		\node[box={3cm}] (b01) at (-4, 0.5) {strong time-abstract equivalence};
		\node[box={3cm}] (b11) at (-4, 1.5) {weak time-abstract equivalence};

		\begin{scope}[rotate=45]
		\foreach \x/\xc in {0/-1,1/1}{
			\foreach \y/\yc in {0/-1,1/1}{
				\node[dot] (p\x\y) at (\xc,\yc) {};
		}}
		\end{scope}

		\begin{scope}[
				yshift=-3.2cm,
			]
		\begin{scope}[
				rotate=45,
				scale=1.2,
				font=\scriptsize,
				-stealth,
				auto
			]
			\draw[->] (0,0) 
				to node [swap] 
				{abstracts silent moves}
				(1,0);
			\draw[->] (0,0) 
				to node 
				{abstracts time}
				(0,1);
		\end{scope}
		\end{scope}

		\foreach \x in {0,1}{
			\foreach \y in {0,1}{
			\draw[box con] (b\x\y.east) -- (p\x\y);
		}}
				
		\draw[hasse con] (p00) to (p10);
		\draw[hasse con] (p10) to (p11);
		\draw[hasse con] (p00) to (p01);
		\draw[hasse con] (p01) to (p11);

		\begin{scope}[xshift=4cm,rotate=45,diagram,scale=.75]
		 \node[] (n0) at (-1,-1) {$\mathbb{T}^\ast$};
		 \node[] (n1) at ( 1,-1) {$\mathbb{T}$};
		 \node[] (n2) at (-1, 1) {$1^\ast$};
		 \node[] (n3) at ( 1, 1) {$1$};
		 \draw[->] (n0) to node[swap] {$\varepsilon_{\mathbb{T}}$} (n1);
		 \draw[->] (n2) to node       {$\varepsilon_{1}$} (n3);
		 \draw[->] (n0) to node       {$!_{\mathbb{T}}^\ast$} (n2);
		 \draw[->] (n1) to node[swap] {$!_{\mathbb{T}}$} (n3);
		\end{scope}

	\end{tikzpicture}
	\caption{Monoid morphisms (right) and the corresponding spectrum of saturation-based timed behavioural equivalences (left).}
	\label{fig:morphisms-equivalence-spectrum}
\end{figure}

In this section we apply the general framework developed in \cref{sec:general-saturation-and-equivalences} to timed transition systems systems and show that equivalences of interests are all instances of the general notion of $q$-bisimulation; 
In particular, we rediscover the eight combinations of:
\begin{itemize} 
	\item timed or time-abstract, 
	\item strong or weak, and 
	\item bisimulation or (finite) language equivalence
\end{itemize}
Each pair in the list above corresponds to one of three orthogonal abstraction dimensions: 
\begin{itemize}
	\item
	abstraction over time arises from the unique homomorphism $!_{\mathbb{T}}$ going from the monoid $\mathbb{T}$ modelling time (\cf \cref{sec:timed-transition-systems}) to the trivial monoid $1$; 
	\item
	abstraction over unobservable moves arises from components of the counit $\varepsilon$ of the free monoid adjunction;
	\item
	abstraction over branching arises from the fact that the inclusion $\Set \to \kl(\ENA)$ factors through $\kl(\mathcal{P}) \to \kl(\ENA)$ by construction of $\ENA$.
\end{itemize}
The first two dimensions define four types of equivalence that, as a consequence of \cref{thm:behavioural-composition-bisimulation}, are organised with respect to their discriminating power in the spectrum depicted in \cref{fig:morphisms-equivalence-spectrum} together with the corresponding monoid homomorphisms.
These equivalences are always organised in this spectrum independently from any specific choice made for the subcategory $J$ sourcing behavioural morphisms. This means that spectra for bisimulations and their language equivalent counterpart are alike. 
The third dimension defines a bisimulation equivalence and a language equivalence counterpart for each combination of abstraction over the first and second aspect where, as a consequence of \cref{thm:behavioural-inclusion-bisimulation}, the former is always more discriminating that the later.
It follows that the above spectra for bisimulations and language equivalences are actually two opposing faces in the spectrum depicted in \cref{fig:timed-equivalences-spectrum}.

For exposition convenience, in the sequel we shall use the term $q$-language equivalence instead of $q$-bisimulation to signal that $\cat{K}$ and $\cat{J}$ are put to be $\kl(\ENA)$ and $\kl(\mathcal{P})$, and otherwise use $q$-bisimulation in the setting where $\cat{K}$ and $\cat{J}$ are put to be $\kl(\LTS)$ and $\Set$.

Although we focus on non-determinism, constructions described in this section apply to other computational effects modelled by suitable monads as we will discuss in the second part of this work.

\subsection{Timed bisimulation}
Recall from \cref{sec:systems-as-functors} that a ordinary functor $\pi$ over a
free monoid like $\mathbb{T}^\ast$ describes the transitions of a system
together with all their self-compositions. A $id_M$-behavioural morphism for $\pi$ is a morphism $f$ with the property that 
\[
	f\circ \pi_m = \pi'_m\circ f
\]
for some $\pi'$ and every $m \in M$ since $id$-saturation acts as the identity. When all components of $\pi$ and $\pi'$ are coalgebras a relation $R \rightrightarrows \pi(\ast)$ is a $id$-bisimulation for $\pi$ if and only if it is a (kernel) bisimulation for every component of $\pi$. 
Thus, we obtain the notion timed bisimulation for TTSs as that of $id_{\mathbb{T}^\ast}$-bisimulation for their associated functor model.

\begin{proposition}
	\label{thm:tts-q-timed-bisimulation}
	For a TTS $\alpha$, $\underline{\alpha}$ its functor model, and $R$ an equivalence relation on its carrier:
	$R$ is a timed bisimulation for ${\alpha}$ iff it is a $id_{\mathbb{T}^\ast}$-bisimulation for $\underline{\alpha}$.
\end{proposition}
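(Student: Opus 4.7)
My plan is to strip the $q$-bisimulation condition down to the classical coalgebra-homomorphism condition applied at each single duration, after which both directions are routine.

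First, the saturation step is trivial. For $q = id_{\mathbb{T}^*}$, \cref{thm:composition-comonad} (or the explicit formula of \cref{thm:general-saturation}) gives $\Sigma_{id}^J(\underline{\alpha}) = \underline{\alpha}$, so an $id_{\mathbb{T}^*}$-behavioural morphism for $\underline{\alpha}$ is simply a function $f\colon X \to Y$ for which there is $\pi' \in \JFun{\cat{T}^*}{\kl(\LTS)}{J}$ with $f^\sharp \circ \underline{\alpha}_w = \pi'_w \circ f^\sharp$ for every $w \in \mathbb{T}^*$. Timed bisimulation concerns only single-step transitions, and conversely any candidate $\pi'$ defined on generators $t \in \mathbb{T}$ extends to a strict functor on $\cat{T}^*$ via $\pi'_{t_1\cdots t_k} \defeq \pi'_{t_1}\circ\cdots\circ\pi'_{t_k}$ (the equation then propagates to every $w$ by composing the $t_i$-instances on the left and right). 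So the whole statement reduces to analysing the square at a single $t \in \mathbb{T}$.

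Next I would unfold the Kleisli composition in $\kl(\LTS)$ on either side of that square. Because $f^\sharp(x) = \{(\tau, f(x))\}$ carries only silent moves, both cases in the LTS-composition formula collapse and yield
\begin{align*}
	(f^\sharp \circ \underline{\alpha}_t)(x) & = \{(\sigma, f(y)) \mid (\sigma, y) \in \underline{\alpha}_t(x)\} \text{,} \\
	(\pi'_t \circ f^\sharp)(x) & = \pi'_t(f(x)) \text{.}
\end{align*}
Hence the behavioural-morphism equation at $t$ is precisely the classical coalgebra-homomorphism condition for the $\mathcal{P}(\Sigma_\tau \times -)$-coalgebra obtained by slicing $\alpha$ at duration $t$.

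From here both directions follow by the standard correspondence between kernel bisimulations and homomorphic quotients. For $(\Leftarrow)$, if $R = \ker f$ is realised by some $f$ and $\pi'$ and we have $x \mathrel{R} y$ with $x \nxrightarrow{\sigma}{t} x'$, then the square forces $(\sigma, f(x')) \in \pi'_t(f(x)) = \pi'_t(f(y)) = (f^\sharp \circ \underline{\alpha}_t)(y)$, producing $y'$ with $y \nxrightarrow{\sigma}{t} y'$ and $f(x') = f(y')$, \ie $x' \mathrel{R} y'$. For $(\Rightarrow)$, since $R$ is an equivalence relation I take the quotient map $f\colon X \to X/R$ (whose kernel pair is $R$), define $\pi'_t([x]) \defeq \{(\sigma, [y]) \mid (\sigma, y) \in \underline{\alpha}_t(x)\}$ (well-defined precisely by the timed bisimulation property), and extend $\pi'$ strictly along the generators. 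The main mildly delicate point is the Kleisli calculation above: in principle the silent-step content of $f^\sharp$ could contribute spurious $\tau$-transitions, but since $(\sigma',-) \in f^\sharp(x)$ forces $\sigma' = \tau$, both branches of the LTS-Kleisli rule simply re-index the transitions of $\underline{\alpha}_t$ along $f$, and the equivalence collapses to the familiar coalgebra-homomorphism condition.
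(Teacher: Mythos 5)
Your proposal is correct and follows essentially the same route as the paper's own proof: both exploit that $id_{\mathbb{T}^\ast}$-saturation is the identity, reduce the behavioural-morphism square to the single-duration components, and realise the refinement system as the quotient LTS on $X/R$ (your representative-based definition of $\pi'_t$ and the paper's $\forall/\exists$ formulation of $\beta_t$ coincide once well-definedness is checked). Your explicit unfolding of the Kleisli composition in $\kl(\LTS)$ and the remark that the equation propagates from generators $t\in\mathbb{T}$ to all words are details the paper leaves implicit, but they do not change the argument.
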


\begin{proofatend}
	Let $\alpha$ be a TTS.
	Assume $R\subseteq X^2$ is a timed bisimulation for $\alpha$ and let $f\colon X \to  Y \cong X/R$ be the canonical projection in \Set.
	Let $\underline{\beta}\colon \mathbb{T}^\ast \to \cat{K}$ be the lax functor defined by the family $\{\beta_t\colon Y \to \mathcal{P}(\Sigma_\tau \times Y)\}_{t \in \mathbb{T}}$ where:
	\[
	  \beta_t(y) \defeq \{ (\sigma, y') \mid \forall x \in f^{-1}(y)\, \exists x' \in f^{-1}(y') \text{ s.t. } x \xrightarrow{(\sigma,t)} x' \}\text{.}
	\]
	From the definition of $\beta_t$ it is immediate to check that for any $t \in \mathbb{T}$:
	\[
	  (\sigma,f(x')) \in (\underline{\beta}_t \circ f)(x) \iff (\sigma,x') \in (\underline{\alpha}_t)(x)
	\]
	and hence that $f \circ \Sigma_{id_{\mathbb{T}^\ast}}(\underline{\alpha}) = \underline{\beta} \circ f $. Therefore, the quotient map $f$ induced by $R$ extends to a $id_{\mathbb{T}^\ast}$-behavioural morphism for $\underline{\alpha}$.
	
	For the converse implication take $R$ to be a $id_{\mathbb{T}^\ast}$-bisimulation for $\underline{\alpha}$. This means that there is a lax functor $\beta \in [\mathbb{T}^\ast,\cat{K}]$ whose carrier $Y$ is given by the set of abstract classes of $R$ and such that the quotient map $f\colon X\to Y$ induced by $R$ satisfies:
	\[
	  f \circ \Sigma_{id_{\mathbb{T}^\ast}}(\underline{\alpha}) = \underline{\beta} \circ f
	  \text{.}
	\]
	In particular, for any $(\sigma,t) \in \Sigma_\tau\times \mathbb{T}$ we have that $(\sigma,f(x')) \in (\underline{\beta}_t \circ f)(x) \iff (\sigma,x') \in (\underline{\alpha}_t)(x)$ from which it immediately follows that $R$ is a timed bisimulation for $\alpha$.
\end{proofatend}

\subsection{Abstraction over time}
Abstraction over time is captured via saturations that forget some of the time information attached to steps like the $!_{\mathbb{T}}^\ast\colon \mathbb{T}^\ast \to 1^\ast$ (which forgets durations entirely). In general, forgetting some of the information attached to steps while preserving the steps corresponds to $q^\ast\colon M^\ast \to N^\ast$ for some $q\colon M \to N$. The functor $\Sigma_{q^\ast}\colon [\cat{M}^\ast,\cat{K}]^J \to [\cat{N}^\ast,\cat{K}]^J$ is given on a objects as:
\begin{equation*}
  \Sigma_{q^\ast}(\pi)_{n_1\dots n_k} = 
  \bigvee_{n_i = q(m_i)} \pi_{m_1\dots m_k} =
  \bigvee_{n_i = q(m_i)} \pi_{m_1} \circ \dots \circ \pi_{m_k}
  \text{.}
\end{equation*}
Steps in the saturated system are combinations of all steps associated to a pre-image through $q$; in particular, $\Sigma_{q^\ast}(\pi)_{n} = \bigvee_{n = q(m)} \pi_{m}$.
If $q$ is $!_{\mathbb{T}}^\ast$, steps that differ only for their duration are combined allowing thus to rediscover time-abstract bisimulation as $!_{\mathbb{T}}^\ast$-bisimulation.

\begin{proposition}
	\label{thm:tts-q-time-abstract-bisimulation}
	For a TTS $\alpha$, $\underline{\alpha}$ its functor model, and $R$ an equivalence relation on its carrier:
	$R$ is a time-abstract bisimulation for ${\alpha}$ iff it is a $!_{\mathbb{T}}^\ast$-bisimulation for $\underline{\alpha}$.
\end{proposition}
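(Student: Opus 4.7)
The plan is to adapt the proof of \cref{thm:tts-q-timed-bisimulation} to the $!_{\mathbb{T}}^\ast$-saturation. By \cref{thm:simplified-formula-for-sigma} (applicable since $\cat{K}=\kl(\LTS)$ admits arbitrary non-empty joins preserved on both sides by composition), the saturation of $\underline{\alpha}$ at the generator $\ast$ of $1^\ast$ unfolds to the join $\bigvee_{t \in \mathbb{T}} \underline{\alpha}_t$, that is, the endomorphism sending each $x$ to the set of pairs $(\sigma, x')$ such that $x \xrightarrow{(\sigma, t)} x'$ for some $t \in \mathbb{T}$. On longer words $\Sigma_{!_{\mathbb{T}}^\ast}(\underline{\alpha})_{\ast^k}$ is the $k$-fold self-composition of this time-forgetting step, by the same distributivity argument used implicitly in \cref{thm:general-saturation}.

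For the forward direction, assume $R$ is a time-abstract bisimulation and let $f\colon X \to Y \cong X/R$ be the canonical projection in $\Set$. I would define the functor $\underline{\beta}\colon 1^\ast \to \cat{K}$ by specifying its value on the generator as
\[
\beta_\ast(y) \defeq \{(\sigma, y') \mid \forall x \in f^{-1}(y)\,\exists x' \in f^{-1}(y'),\, t \in \mathbb{T} \text{ s.t. } x \xrightarrow{(\sigma,t)} x'\}
\]
and extending by composition. The naturality square $f \circ \Sigma_{!_{\mathbb{T}}^\ast}(\underline{\alpha})_\ast = \beta_\ast \circ f$ at the generator follows directly from the definition of $\beta_\ast$ combined with the time-abstract zigzag property of $R$ (the latter ensuring that the "$\forall$--$\exists$" clause defining $\beta_\ast$ can be satisfied for every representative of the source equivalence class), while at longer words it follows by functoriality.

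For the converse, given $R$ a $!_{\mathbb{T}}^\ast$-bisimulation witnessed by some $\underline{\beta}\in [1^\ast, \cat{K}]^J$, I would unfold the naturality equation pointwise at the generator: for $x R y$ and $x \xrightarrow{(\sigma,t)} x'$, one has $(\sigma, f(x')) \in \beta_\ast(f(x)) = \beta_\ast(f(y))$ by the naturality square applied to $x$ and since $f(x) = f(y)$; applying the same square to $y$ then recovers some $y'$ and some $t'$ with $y \xrightarrow{(\sigma,t')} y'$ and $f(y') = f(x')$, which is exactly the time-abstract zigzag.

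The one substantive novelty compared to \cref{thm:tts-q-timed-bisimulation} is the quantifier shift: the saturation at $\ast$ is a join over all $t \in \mathbb{T}$, so the step recovered on the refinement side only supplies the existence of some witness $t'$, matching the "$\exists t'$" in \cref{def:nd-ta-behavioural-equivalences}. Since \cref{thm:simplified-formula-for-sigma} gives the join formula explicitly, this is the only place where the argument genuinely differs from the timed bisimulation case, and the rest is bookkeeping.
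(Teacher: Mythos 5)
Your proof is correct and follows essentially the same route as the paper's: the same quotient map $f\colon X \to X/R$, the same refinement system $\beta_\ast(y) = \{(\sigma,y') \mid \forall x \in f^{-1}(y)\,\exists x',t \text{ s.t. } x \xrightarrow{(\sigma,t)} x'\}$ extended to powers on longer words, and the same pointwise unfolding of $\Sigma_{!_{\mathbb{T}}^\ast}(\underline{\alpha})_\ast = \bigvee_t \underline{\alpha}_t$ for the converse. (Note only that in the appendix the proofs of \cref{thm:tts-q-time-abstract-bisimulation} and \cref{thm:tts-q-weak-timed-bisimulation} appear to have been swapped, so the paper's argument matching yours is the one labelled as the proof of the latter.)
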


\begin{proofatend}
	Let $\alpha$ be a TTS.
	Assume $R\subseteq X^2$ is a weak timed bisimulation for $\alpha$ and let $f\colon X \to  Y \cong X/R$ be the canonical projection in \Set.
	Let $\underline{\beta}\colon \mathbb{T} \to \cat{K}$ be the lax functor defined on each $t \in \mathbb{T}$ and $y \in Y$ as:
	\[
	   \beta_t(y) \defeq \bigcup_{t_1\dots t_n \in \varepsilon_{\mathbb{T}}^{-1}(t) }
	   \left\{
	      (\sigma_i, y')
	   \,\middle|\,
	   \begin{array}{l}
	      \forall x \in f^{-1}(y)\, \exists x' \in f^{-1}(y') \text{ such that } \\
	      x \xrightarrow{(\sigma_1,t_1)} \cdots \xrightarrow{(\sigma_i,t_i)} \cdots \xrightarrow{(\sigma_n,t_n)} x' \text{ and } j \neq i \implies \sigma_j = \tau
	   \end{array}
	   \right\}\text{.}
	\]
	From the definition of $\beta_t$ it is immediate to check that for any $t \in \mathbb{T}$ and $\sigma \in \Sigma_\tau$:
	\[
	   (\sigma,f(x')) \in (\underline{\beta}_t \circ f)(x) \iff (\sigma,x') \in \Sigma_{\varepsilon_{\mathbb{T}}}(\underline{\alpha})_t(x)
	\]
	and hence that the quotient map $f$ for $R$ extends to a $\varepsilon_{\mathbb{T}}$-behavioural morphism for $\underline{\alpha}$.
	
		For the converse implication take $R$ to be a $\varepsilon_{\mathbb{T}}$-bisimulation for $\underline{\alpha}$. This means that there is a lax functor $\beta \in [\mathbb{T},\cat{K}]$ whose carrier $Y$ is given by the set of abstract classes of $R$ and such that the quotient map $f\colon X\to Y$ induced by $R$ satisfies:
	\[
	   f \circ \Sigma_{\varepsilon_{\mathbb{T}}}(\underline{\alpha}) = \underline{\beta} \circ f
	   \text{.}
	\]
	In particular, for any $t \in \mathbb{T}$ and $\sigma \in \Sigma_\tau$ we have that $(\sigma,f(x')) \in (\underline{\beta}_t \circ f)(x)$ iff
	$(\sigma,x') \in \bigvee_{\vec{t} \in \varepsilon_{\mathbb{T}}^{-1}(t)} \underline{\alpha}_{\vec{t}}(x)$. By straightforward definition expansion we conclude that $R$ is a weak timed bisimulation for $\alpha$.
\end{proofatend}

\subsection{Abstraction over unobservable moves}

Abstraction over unobservable moves arises from components of the counit $\varepsilon\colon (-)^\ast \to \Id$ of the free monoid adjunction \ie  arrows $\varepsilon_M\colon M^\ast \to M$ taking $(m_1,\dots,m_n)$ to $m_1 \cdot \ldots \cdot m_n$ where $\cdot$ is the monoid operation. 
By \cref{thm:simplified-formula-for-sigma}, if $\cat{K}$ admits and preserves arbitrary non-empty joins the functor $\Sigma_{\varepsilon_{M}}\colon [\cat{M}^\ast,\cat{K}]^J \to [\cat{M},\cat{K}]^J$ is given every $\pi$ as follows:
\begin{equation*}
  \Sigma_{\varepsilon_{M}}(\pi)_m = 
  \bigvee_{\varepsilon_{M}(\vec{m}) = m} \pi_{ \vec{m}} = 
  \bigvee_{m_1 \cdot  {\dots} \cdot m_k = m} \pi_{m_1} \circ \dots \circ \pi_{m_k}\text{.}
\end{equation*}
Steps in the saturated system are combinations of all sequences in the original one associated with any decomposition of the stage $m$. In particular, when $\cat{K}$ models unobservable moves, $\varepsilon_M$-bisimulation can be seen as the weak counterpart of strong behavioural equivalence for systems modelled in $[\cat{M}^\ast,\cat{K}]^J$, \ie, $id_{M^\ast}$-bisimulation.

Let $M$ be a monoid $\mathbb{T}$ modelling time and let $q$ be $\varepsilon_{\mathbb{T}}$. The resulting notion of $q$ bisimulation saturates a sequence of timed steps into one whose duration is the total duration of the sequence thus abstracting from intermediate steps division only.
\begin{proposition}
	\label{thm:tts-q-weak-timed-bisimulation}	
	For a TTS $\alpha$, $\underline{\alpha}$ its functor model, and $R$ an equivalence relation on its carrier:
	$R$ is a weak timed bisimulation for ${\alpha}$ iff it is a $\varepsilon_{\mathbb{T}}$-bisimulation for $\underline{\alpha}$.
\end{proposition}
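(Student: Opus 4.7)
The plan is to instantiate \cref{thm:simplified-formula-for-sigma} to compute $\Sigma_{\varepsilon_\mathbb{T}}(\underline{\alpha})$ explicitly in $\kl(\LTS)$ and then match this against the saturated transition relation appearing in \cref{def:nd-ta-behavioural-equivalences}, mirroring the schema already used in \cref{thm:tts-q-timed-bisimulation,thm:tts-q-time-abstract-bisimulation}.

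First I would compute $\Sigma_{\varepsilon_\mathbb{T}}(\underline{\alpha})_t$ explicitly. Since $\kl(\LTS)$ is left and right distributive with arbitrary non-empty joins, \cref{thm:simplified-formula-for-sigma} yields
\begin{equation*}
	\Sigma_{\varepsilon_\mathbb{T}}(\underline{\alpha})_t = \bigvee_{t_1 + \dots + t_k = t} \underline{\alpha}_{t_1} \circ \dots \circ \underline{\alpha}_{t_k}
	\text{.}
\end{equation*}
Unfolding Kleisli composition in $\kl(\LTS)$ and using the monoid structure on $\Sigma_\tau$ (where $\tau$ is neutral and two non-$\tau$ labels annihilate), a pair $(\sigma,y)$ lies in $\Sigma_{\varepsilon_\mathbb{T}}(\underline{\alpha})_t(x)$ precisely when there is a sequence $x = x_0 \nxrightarrow{\sigma_1}{t_1} x_1 \nxrightarrow{\sigma_2}{t_2} \dots \nxrightarrow{\sigma_k}{t_k} x_k = y$ with $t = \sum_i t_i$ and all labels $\sigma_i$ equal to $\tau$ except at most one which equals $\sigma$. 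The key auxiliary step is to show, by induction on the derivation of $\nxRightarrow{}{}$ and on~$k$, that this condition characterises exactly $x \nxRightarrow{\sigma}{t} y$.

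With this characterisation in hand, the two directions proceed as in \cref{thm:tts-q-timed-bisimulation}. For the forward implication, let $f\colon X \to Y \cong X/R$ be the canonical projection in $\Set$ and define $\underline{\beta}\colon \cat{T} \to \kl(\LTS)$ componentwise by
\begin{equation*}
	\beta_t(y) \defeq \{(\sigma, y') \mid \forall x \in f^{-1}(y)\ \exists x' \in f^{-1}(y')\text{ s.t. } x \nxRightarrow{\sigma}{t} x'\}\text{.}
\end{equation*}
The weak bisimulation property of $R$ guarantees that $\underline{\beta}$ is well-defined and that $f \circ \Sigma_{\varepsilon_\mathbb{T}}(\underline{\alpha})_t = \beta_t \circ f$ for every $t$, using the characterisation above. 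For the converse, assume $R$ is the kernel pair of a $\varepsilon_\mathbb{T}$-behavioural morphism $f\colon X \to Y$ with refinement system $\underline{\beta}$; from $f \circ \Sigma_{\varepsilon_\mathbb{T}}(\underline{\alpha})_t = \beta_t \circ f$ and the characterisation, if $x \mathrel{R} y$ and $x \nxRightarrow{\sigma}{t} x'$ then $(\sigma,f(x')) \in \beta_t(f(x)) = \beta_t(f(y))$, so there is $y'$ with $y \nxRightarrow{\sigma}{t} y'$ and $f(x') = f(y')$, i.e. $x' \mathrel{R} y'$.

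The main obstacle is the auxiliary correspondence between the composition formula for $\Sigma_{\varepsilon_\mathbb{T}}(\underline{\alpha})_t$ and the specific shape of the rule defining $\nxRightarrow{\sigma}{t}$, which insists on a prefix of $\tau$-transitions, a single $\sigma$-step, and a suffix of $\tau$-transitions. The composition instead produces arbitrary interleavings with at most one non-$\tau$ label. Resolving this requires regrouping any such sequence into the three designated blocks and noticing that any $\tau$-only subcomposition is itself a derived saturated $\tau$-transition of the appropriate cumulative duration, which follows by a straightforward induction on the number of $\tau$-steps combined.
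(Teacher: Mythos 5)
Your proof is correct and takes essentially the same route as the paper's: compute $\Sigma_{\varepsilon_{\mathbb{T}}}(\underline{\alpha})_t$ as the join over all decompositions $t_1+\dots+t_k=t$ of the composites $\underline{\alpha}_{t_1}\circ\dots\circ\underline{\alpha}_{t_k}$, identify this with the saturated relation ${\nxRightarrow{\sigma}{t}}$, and transfer the bisimulation condition through the quotient map $f$ and a refinement system $\beta_t$ defined on equivalence classes. (Incidentally, the proofs of \cref{thm:tts-q-time-abstract-bisimulation,thm:tts-q-weak-timed-bisimulation} appear transposed in the source; your argument matches the one the paper intends for this statement, and you are in fact more explicit than the paper's ``straightforward definition expansion'' about regrouping an arbitrary interleaving of steps with at most one visible label into the $\tau$-prefix/$\sigma$-step/$\tau$-suffix shape of the rule defining ${\nxRightarrow{\sigma}{t}}$.)
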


\begin{proofatend}
	Assume $R\subseteq X^2$ is a time-abstract bisimulation for $\alpha$ and let $f\colon X \to  Y \cong X/R$ be the canonical projection in \Set. Define $\beta\colon Y\to \mathcal{P}(\Sigma_\tau\times Y)$ by:
	\[
		\beta(y) = \{(\sigma,y')\mid \forall x\in y \exists x'\in y' \text{ and }\exists t \text{ s.t. }(\sigma,t,x')\in \alpha(x)  \}
		\text{.}
	\]
	Then the morphism $f\colon X\to Y$ in $\Set$ which maps any element to its abstract class satisfies $f\circ \Sigma_{\sharp}(\underline{\alpha})_1 = \beta \circ f$ (here, $\circ$ denotes the composition in $\kl(\mathcal{P}^\Sigma)$). Note that $\Sigma_{!_{\mathbb{T}}^\ast}(\alpha)_n = (\Sigma_{!_{\mathbb{T}}^\ast}(\alpha)_1)^n$ and put $\underline{\beta}\in [1^\ast,\cat{K}]$ to be $\underline{\beta}_n = \beta^n$. We have $f\circ \Sigma_{!_{\mathbb{T}}^\ast}(\underline{\alpha})_n = \underline{\beta}_n \circ f$. This proves that $R$ is a $!_{\mathbb{T}}^\ast$-bisimulation.
	
	For the converse implication take $R$ to be a $!_{\mathbb{T}}^\ast$-bisimulation for $\underline{\alpha}$. This means that there is a lax functor $\pi' \in [\mathbb{N},\cat{K}]$ whose carrier $Y$ is given by the set of abstract classes of $R$ for which the map $f\colon X\to Y$ which assigns any $x$ to its abstract class satisfies $
	f\circ \Sigma_{!_{\mathbb{T}}^\ast}(\underline{\alpha})_n =\pi'_n \circ f \text{ for } n\in \mathbb{N}$.
	In particular the above equality holds for $n=1$. By the construction of $\Sigma_{!_{\mathbb{T}}^\ast}(\underline{\alpha})_1$ it instantly follows that $R$ is a time-abstract bisimulation for $\alpha$.
\end{proofatend}

Let $M$ be the trivial monoid $1$ and observe that $1^\ast$ is isomorphic to the monoid $\mathbb{N}$ of natural numbers under addition. In this setting, system steps are essentially devoid of any time information, step sequences are associated only with their length, and the corresponding component of $\varepsilon$, $!_{\mathbb{N}}\colon \mathbb{N} \to 1$, simply forgets lengths.
This instance of $q$-bisimulation corresponds exactly to the construction used by \citeauthor{brengos2015:corr} in \cite{brengos2015:corr} to capture (untimed) weak bisimulation: in \loccit systems are modelled as lax functors over the monoid $\mathbb{N}$ of natural numbers under addition and saturation is defined in terms of a strict 2-adjunction to the category of lax functors over the trivial monoid $1$. 

We can combine abstraction over time and abstraction over unobservable moves we simply by composing their defining monoid morphisms as $\varepsilon_N \circ q^\ast$ or as $q \circ \varepsilon_M$; by naturality of $\varepsilon$ the notions of saturation coincide:
\begin{equation*}
  \Sigma_{\varepsilon_N \circ q^\ast}(\pi)_{n} = 
  \Sigma_{q \circ \varepsilon_M }(\pi)_{n} =
  \bigvee_{\substack{n_1 \cdot {\dots} \cdot n_k = n \\ n_i = q(m_i)}} \pi_{m_1} \circ \dots \circ \pi_{m_k} =
  \bigvee_{\substack{n = q(m)\\m_1 \cdot {\dots} \cdot m_k = m}} \pi_{m_1} \circ \dots \circ \pi_{m_k}
  \text{.}
\end{equation*}
In particular, when modelling timed transition systems, we have that
\[
	\begin{tikzpicture}[diagram,rotate=45,scale=.5]
		\node[] (n0) at (-1,-1) {$\mathbb{T}^\ast$};
		\node[] (n1) at ( 1,-1) {$\mathbb{T}$};
		\node[] (n2) at (-1, 1) {$1^\ast$};
		\node[] (n3) at ( 1, 1) {$1$};
		\draw[->] (n0) to node[swap] {$\varepsilon_{\mathbb{T}}$} (n1);
		\draw[->] (n2) to node       {$\varepsilon_{1}$} (n3);
		\draw[->] (n0) to node       {$!_{\mathbb{T}}^\ast$} (n2);
		\draw[->] (n1) to node[swap] {$!_{\mathbb{T}}$} (n3);
		\draw[->] (n0) to node       {$!_{\mathbb{T}^\ast}$} (n3);
	\end{tikzpicture}
\]
and we rediscover weak time-abstract bisimulation as $!_{\mathbb{T}^\ast}$-bisimulation.
\begin{proposition}
	\label{thm:tts-q-weak-time-abstract-bisimulation}
	For a TTS $\alpha$, $\underline{\alpha}$ its functor model, and $R$ an equivalence relation on its carrier:
	$R$ is a weak time-abstract bisimulation for ${\alpha}$ iff it is a $!_{\mathbb{T}^\ast}$-bisimulation for $\underline{\alpha}$.
\end{proposition}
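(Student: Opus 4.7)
The plan is to follow the pattern of the proofs of Propositions~\ref{thm:tts-q-time-abstract-bisimulation} and~\ref{thm:tts-q-weak-timed-bisimulation}: first compute $\Sigma_{!_{\mathbb{T}^\ast}}(\underline{\alpha})_\ast$ explicitly and then relate it to the weak time-abstract saturated transition relation. By \cref{thm:simplified-formula-for-sigma}, $\Sigma_{!_{\mathbb{T}^\ast}}(\underline{\alpha})_\ast = \bigvee_{w \in \mathbb{T}^\ast} \underline{\alpha}_w$, and since $\underline{\alpha}$ is a strict functor on the free monoid $\mathbb{T}^\ast$, each $\underline{\alpha}_{t_1 \cdots t_k}$ decomposes as $\underline{\alpha}_{t_1} \circ \dots \circ \underline{\alpha}_{t_k}$ in $\kl(\LTS)$. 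Equivalently, by \cref{thm:general-saturation-compsition} the saturation can be staged as $\Sigma_{\varepsilon_1} \circ \Sigma_{!_{\mathbb{T}}^\ast}$ or as $\Sigma_{!_{\mathbb{T}}} \circ \Sigma_{\varepsilon_{\mathbb{T}}}$, the two factorisations agreeing by naturality of $\varepsilon$.

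The crucial step is to prove the identity
\begin{equation*}
  (\sigma, x') \in \Sigma_{!_{\mathbb{T}^\ast}}(\underline{\alpha})_\ast(x) \iff \exists t \in \mathbb{T}.\ x \nxRightarrow{\sigma}{t} x' \text{.}
\end{equation*}
Unfolding composition in $\kl(\LTS)$, a pair $(\sigma, x')$ lies in $\underline{\alpha}_{t_1 \cdots t_k}(x)$ exactly when there is a path $x = x_0 \nxrightarrow{\sigma_1}{t_1} \cdots \nxrightarrow{\sigma_k}{t_k} x_k = x'$ in which at most one $\sigma_i \neq \tau$ (that one being $\sigma$, else $\sigma = \tau$). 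Joining over all $w \in \mathbb{T}^\ast$, including the empty word, which contributes the identity endomorphism and hence the axiom $x \nxRightarrow{\tau}{0} x$, recovers exactly the inductive closure defining $\nxRightarrow{}{}$ modulo forgetting the total duration.

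For the forward implication, I would take a weak time-abstract bisimulation $R$, let $f\colon X \to Y \cong X/R$ be the canonical quotient in $\Set$, and define $\underline{\beta} \in [\cat{1},\cat{K}]^J$ by
\begin{equation*}
  \underline{\beta}_\ast(y) \defeq \{ (\sigma, y') \mid \forall x \in f^{-1}(y)\ \exists x' \in f^{-1}(y')\ \exists t.\ x \nxRightarrow{\sigma}{t} x' \} \text{.}
\end{equation*}
Reflexivity and transitivity of $\nxRightarrow{}{}$ make $\underline{\beta}$ a lax functor, while the bisimulation clause together with the key identity yields $f \circ \Sigma_{!_{\mathbb{T}^\ast}}(\underline{\alpha})_\ast = \underline{\beta}_\ast \circ f$, so $f$ is a $!_{\mathbb{T}^\ast}$-behavioural morphism. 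Conversely, any $!_{\mathbb{T}^\ast}$-bisimulation provides a refinement $\pi'$ with $f \circ \Sigma_{!_{\mathbb{T}^\ast}}(\underline{\alpha})_\ast = \pi'_\ast \circ f$; applying the key identity immediately translates this commutation into the weak time-abstract bisimulation clause for $R$.

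The main obstacle is the key identity itself: the $\LTS$-monad composition rule twists $\tau$-labels in a way that requires care to align with the three-part schema defining $\nxRightarrow{\sigma}{t}$, and in particular one must ensure that the empty-word summand correctly accounts for the axiom $x \nxRightarrow{\tau}{0} x$. Once this identity is in place, the remainder of the argument is mechanical and closely mirrors the structure of the proofs of \cref{thm:tts-q-time-abstract-bisimulation,thm:tts-q-weak-timed-bisimulation}.
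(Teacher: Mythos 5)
Your proposal is correct and follows essentially the same route as the paper's own proof: compute $\Sigma_{!_{\mathbb{T}^\ast}}(\underline{\alpha})$ as the join $\bigvee_{\vec t\in\mathbb{T}^\ast}\underline{\alpha}_{\vec t}$, identify its elements with paths carrying at most one non-$\tau$ label (\ie with $\exists t.\,x\nxRightarrow{\sigma}{t}x'$), define the refinement system on the quotient $X/R$ exactly as you do, and translate the commutation square back and forth. The only difference is cosmetic — the paper writes the refinement $\beta$ by unfolding the saturated step relation as a union over time sequences, where you invoke $\nxRightarrow{}{}$ directly, and you are somewhat more explicit about the key identity and the lax functoriality of $\underline{\beta}$, both of which the paper leaves as ``immediate to check''.
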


\begin{proofatend}
	Assume $R\subseteq X^2$ is a weak time-abstract bisimulation for $\alpha$ and let $f\colon X \to  Y \cong X/R$ be the canonical projection in \Set.
	Let $\underline{\beta} \in \cat{K}$ be the LTS given on each $y \in Y$ as:
	\[
	  \beta(y) \defeq \bigcup_{t_1\dots t_n \in \mathbb{T}^\ast}
	  \left\{
	     (\sigma_i, y')
	  \,\middle|\,
	  \begin{array}{l}
	     \forall x \in f^{-1}(y)\, \exists x' \in f^{-1}(y') \text{ such that } \\
	     x \xrightarrow{(\sigma_1,t_1)} \cdots \xrightarrow{(\sigma_i,t_i)} \cdots \xrightarrow{(\sigma_n,t_n)} x' \text{ and } j \neq i \implies \sigma_j = \tau
	  \end{array}
	  \right\}\text{.}
	\]
	From the definition of $\beta$ it is immediate to check that for any $\sigma \in \Sigma_\tau$:
	\[
	  (\sigma,f(x')) \in (\beta\circ f)(x) \iff (\sigma,x') \in \Sigma_{!_{\mathbb{T}^\ast}}(\underline{\alpha})(x)
	\]
	and hence that the quotient map $f$ for $R$ extends to a $!_{\mathbb{T}^\ast}$-behavioural morphism for $\underline{\alpha}$.
	
	For the converse implication take $R$ to be a $!_{\mathbb{T}^\ast}$-bisimulation for $\underline{\alpha}$. This means that there is a lax functor $\beta \in [1,\cat{K}]$ whose carrier $Y$ is given by the set of abstract classes of $R$ and such that the quotient map $f\colon X\to Y$ induced by $R$ satisfies:
	\[
	  f \circ \Sigma_{!_{\mathbb{T}^\ast}}(\underline{\alpha}) = \underline{\beta} \circ f
	  \text{.}
	\]
	In particular, for any $\sigma \in \Sigma_\tau$ we have that $(\sigma,f(x')) \in (\underline{\beta} \circ f)(x) \iff (\sigma,x') \in \bigvee_{\vec{t} \in \mathbb{T}^\ast} \underline{\alpha}_{\vec{t}}(x)$. By straightforward definition expansion we conclude that $R$ is a weak time-abstract bisimulation for $\alpha$.
\end{proofatend}

\subsection{Abstraction over branching}
We capture abstraction over branching by replacing the canonical inclusion $\Set \to \kl(\ENA)$ with $\kl(\mathcal{P}) \to \kl(\ENA)$ as $J \colon \cat{J} \to \cat{K}$. Taking the source category of behavioural morphisms to be different from $\Set$ is somewhat akin to the classical approach towards modelling coalgebraic finite trace equivalence in terms of bisimulation for base categories given by the Kleisli categories for a monadic part of the type functor \cite{hasuo07:trace,jacobssilvasokolova2012:cmcs}. Our approach of replacing $\Set \to \kl(\ENA)$ with $\kl(\mathcal{P}) \to \kl(\ENA)$ was successfully used to recover weak trace semantics for non-deterministic automata in terms of weak bisimilarity via saturation in \cite[Example~7.3]{brengos2015:lmcs}. If applied here, we obtain the notions of strong/weak timed/time-abstract language for TTSs as instances of that of $q$-bisimulation for $q$ discussed discussed above.
\begin{proposition}
	\label{thm:nd-ta-language-equivalences}
	For a TTS  $\alpha$, $\underline{\alpha}$ its functor model, and $R$ an equivalence relation on its carrier:
	\begin{enumerate}
	\item 
		$R$ is a timed language equivalence for $\alpha$ iff it is a $id_{\mathbb{T}^\ast}$-language equivalence for $\underline\alpha$;
	\item
		$R$ is a time-abstract language equivalence for $\alpha$ iff it is a $!_{\mathbb{T}}^\ast$-language equivalence for $\underline\alpha$;
	\item 
		$R$ is a weak timed language equivalence for $\alpha$ iff it is a $\varepsilon_{\mathbb{T}}$-language equivalence for $\underline\alpha$;
	\item
		$R$ is a weak time-abstract language equivalence for $\alpha$ iff it is a $!_{\mathbb{T}^\ast}$-language equivalence for $\underline\alpha$;
	\end{enumerate}
\end{proposition}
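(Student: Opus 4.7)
The plan is to mirror the proofs of Propositions~\ref{thm:tts-q-timed-bisimulation}--\ref{thm:tts-q-weak-time-abstract-bisimulation} in the new setting where $\cat{K} = \kl(\ENA)$ and $J\colon\cat{J}\to\cat{K}$ is the canonical inclusion $(-)^\sharp\colon\kl(\mathcal{P})\to\kl(\ENA)$. A morphism in $\cat{J} = \kl(\mathcal{P})$ from $X$ to $Y$ is a Kleisli arrow $X \to \mathcal{P}Y$, equivalently an $X$-$Y$ relation, and its kernel pair in $\cat{J}$ identifies states whose $\mathcal{P}Y$-images coincide. Consequently, for a suitably chosen language map, the kernel pair is exactly language equivalence.

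For each of the four parts I would exhibit the appropriate language map as a $q$-behavioural morphism. For part~(1), take $L_1(x) \defeq \mathrm{tl}_\alpha(x)$, viewed as a morphism $X \to \mathcal{P}(\mathbb{T}\times(\Sigma_\tau\times\mathbb{T})^\ast)$ in $\kl(\mathcal{P})$, and build a refinement system $\beta_1\in\JFun{\cat{T}^\ast}{\kl(\ENA)}{J}$ on the word space whose $t$-component sends a word $w$ to the set of one-step continuations $(\sigma,w')$ along a $t$-delay, together with $\checkmark$ whenever the appropriately delayed acceptance lies in $w$. Unfolding the definition of $\mathrm{tl}_\alpha$ yields the strict naturality square $L_1\circ\underline{\alpha}_t = \beta_{1,t}\circ L_1$ in $\kl(\ENA)$, making $L_1$ an $id_{\mathbb{T}^\ast}$-behavioural morphism whose kernel in $\kl(\mathcal{P})$ is exactly the timed language equivalence $R$. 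Conversely, any $id_{\mathbb{T}^\ast}$-behavioural morphism forces states in the same kernel class to realise identical timed words, so the two notions coincide.

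Parts~(2)--(4) proceed by the same recipe with $L_1$ replaced by $\mathrm{utl}_\alpha$, $\mathrm{wtl}_\alpha$, and $\mathrm{wutl}_\alpha$ respectively, and with $\beta_i$ acting on the associated word space. By \cref{thm:general-saturation,thm:simplified-formula-for-sigma}, the saturations $\Sigma_{!_{\mathbb{T}}^\ast}(\underline{\alpha})$, $\Sigma_{\varepsilon_{\mathbb{T}}}(\underline{\alpha})$ and $\Sigma_{!_{\mathbb{T}^\ast}}(\underline{\alpha})$ have transitions whose reachable words coincide with those populating the target (weak/untimed) language, so the required naturality squares reduce to mechanical definition chasing. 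Moreover, by \cref{thm:general-saturation-compsition}, case~(4) factors through either of cases~(2) and~(3), avoiding a duplicate argument.

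The main technical obstacle is the $\checkmark$-component of the refinement systems $\beta_i$: while the bisimulation proofs of \cref{thm:tts-q-timed-bisimulation,thm:tts-q-time-abstract-bisimulation,thm:tts-q-weak-timed-bisimulation,thm:tts-q-weak-time-abstract-bisimulation} only had to track reachability between classes, here the refinement system must simultaneously encode the $\Sigma_\tau\times\mathbb{T}$-labelled transitions on word-space and the $\checkmark$-labelled acceptance moves so that the behavioural morphism square commutes strictly (not merely laxly) in $\kl(\ENA)$. For the weak variants~(3) and~(4), the additional subtlety is to transport the saturated acceptance steps $x \xRightarrow{t}\checkmark$ through the word-space presentation; this is exactly captured by the formula for $\Sigma_q$ applied to $\varepsilon_{\mathbb{T}}$ (resp.\ $!_{\mathbb{T}^\ast}$), which by construction collects the very decompositions appearing in the definition of $\mathrm{wtl}_\alpha$ (resp.\ $\mathrm{wutl}_\alpha$).
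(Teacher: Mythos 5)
Your outline reaches the right conclusion, but it is organised quite differently from the paper's proof. The paper proves only part~(4) explicitly and declares the remaining cases analogous: it first forms the saturated endomorphism $\Sigma_{!}(\underline{\alpha})\colon X\to\ENA X$ (described concretely via the saturated steps $x\nxRightarrow{\sigma}{t}y$ and $x\xRightarrow{t}\checkmark$), observes that the behavioural-morphism square $f^\sharp\circ\Sigma_!(\underline{\alpha})=\underline{\beta}\circ f^\sharp$ is precisely the condition of being a coalgebra homomorphism over the base category $\kl(\mathcal{P})$, and then delegates the identification of kernels of such homomorphisms with (weak) language equivalence to \cite[Example~7.3]{brengos2015:lmcs}. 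You instead keep the four cases separate and inline the content of that citation by constructing, for each $q$, an explicit refinement system on the word space whose components are word derivatives; this is the standard Kleisli trace-semantics argument made self-contained, and your use of \cref{thm:general-saturation-compsition} to obtain~(4) from~(2) and~(3) is an economy the paper does not exploit. The forward computation you sketch for part~(1) does check out: with $\beta_{1,t}(w)=\{(\sigma,w')\mid w=t\sigma w'\}\cup\{\checkmark\mid w=t\}$ one gets $L_1^\sharp\circ\underline{\alpha}_t=\beta_{1,t}\circ L_1^\sharp$ directly from the definition of $\mathrm{tl}_\alpha$.

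Two places where your sketch is thinner than the statement requires. First, exhibiting $\mathrm{tl}_\alpha$ as a behavioural morphism only shows that the \emph{coarsest} timed language equivalence, namely $\ker(\mathrm{tl}_\alpha)$, is an $id_{\mathbb{T}^\ast}$-language equivalence; \cref{def:nd-ta-language-equivalences} admits any equivalence relation $R$ merely \emph{contained} in that kernel, and for a strictly finer $R$ your $L_1$ has the wrong kernel pair. The naive repair used in the bisimulation proofs --- quotienting by $R$ --- does not transfer: commutation of the square for the projection $X\to\mathcal{P}(X/R)$ is exactly the bisimulation condition, not the language one, so producing a behavioural morphism whose kernel is exactly such an $R$ needs a genuine additional idea (this is one of the things the paper is silently outsourcing to the cited example). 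Second, your converse direction is asserted rather than argued; it does hold, but only after an induction on word length pushing timed words through the naturality squares one letter at a time, using that $\bigcup\{f(x')\mid x\nxrightarrow{\sigma}{t}x'\}$ and the acceptance component of $\underline{\alpha}_t(x)$ depend only on $f(x)$. Neither point invalidates your plan, but they are where the actual content of the proposition lives.
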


\begin{proofatend}
Here, we will only prove the last assertion as the remaining ones are proved in an analogous manner (see also the proof of \cref{thm:tts-q-time-abstract-bisimulation,thm:tts-q-weak-timed-bisimulation,thm:tts-q-weak-time-abstract-bisimulation}). Since we will work with lax functors $1\to \cat{K}=\kl(\ENA)$ we will simplify the notation and associate any such assignment $\pi\colon 1\to \cat{K}$ with the endomorphism it is induced by $\pi_0\colon \pi(\ast)\to \ENA(\pi(\ast))$ which satisfies $id\leq \pi_0$ and $\pi_0\circ \pi_0\leq \pi_0$. 

For the functor $\underline{\alpha}\colon \mathbb{T}^\ast\to \cat{K}$ the lax functor 
$\Sigma_!(\underline{\alpha})\colon 1\to \cat{K}=\kl(\ENA)$ is given in terms of a single endomorphism $\Sigma_!(\underline{\alpha})\colon X\to \ENA X$  by:
\begin{align*}
&x\xrightarrow{\sigma}_{\Sigma_!(\underline{\alpha})} y  \text{ whenever } x\xRightarrow{(\sigma,t)}_\alpha y  \text{ and } x\downarrow \text{ whenever } x\xRightarrow{(\tau,t)}_\alpha y \text{ and } y\downarrow \text{ for some }t\in \mathbb{T}.
\end{align*}
A morphism $f\colon X\to \mathcal{P}Y$ is a weak behavioural morphism for $\alpha$ whenever there is a lax functor $\underline{\beta}\colon 1\to \cat{K}$ whose carrier is $Y$ which makes:
$f^\sharp\circ \Sigma_!(\underline{\alpha}) = \underline{\beta}\circ f^\sharp$. This equation turns $f^\sharp$ into a coalgebra homomorphism with the base category $\kl(\mathcal{P})$. By applying the guidelines of \cite[Example 7.3]{brengos2015:lmcs} we get the desired conclusion.
\end{proofatend}

\subsection{Spectrum of equivalences}
By recovering notions of behavioural equivalences as instances of $q$-bisimulation, we can now apply \cref{thm:behavioural-composition-bisimulation,thm:behavioural-inclusion-bisimulation} to organise them by their discriminating power.

\begin{proposition}
  The diagram below describes the spectrum of equivalences for TTSs, ordered from more (bottom) to less (top) discriminating.
\[
	\centering
	\begin{tikzpicture}[
			auto, font=\footnotesize\upshape,
			scale=1,		
			box/.style={text width=#1, align=center,draw,fill=white},
			dot/.style={circle,draw,fill,inner sep=1.5pt,outer sep=1pt},
			box con/.style={-,thin,shorten >=4pt,crossing=1.7pt},
			hasse con/.style={-stealth,thick,crossing=2pt}
		]

		\node[box={3cm}] (b000) at (-4,-1.5) {strong timed bisimulation};
		\node[box={3cm}] (b100) at (-4,-0.5) {weak timed bisimulation};
		\node[box={3cm}] (b010) at (-4, 0.5) {strong time-abstract bisimulation};
		\node[box={3cm}] (b110) at (-4, 1.5) {weak time-abstract bisimulation};
		\node[box={3cm}] (b001) at ( 4,-1.5) {strong timed language equivalence};
		\node[box={3cm}] (b101) at ( 4,-0.5) {weak timed language equivalence};
		\node[box={3cm}] (b011) at ( 4, 0.5) {strong time-abstract language equivalence};
		\node[box={3cm}] (b111) at ( 4, 1.5) {weak time-abstract language equivalence};

		\begin{scope}[
				yshift=.2cm,
			]
		\begin{scope}[
			z={(0.866cm,0.4cm)}, 
			x={(-0.866cm,0.4cm)}, 
			y={(0cm,1cm)},
			rotate around y=-10,
			scale=.75,
		]
			\foreach \x/\xc in {0/-1,1/1}{
				\foreach \y/\yc in {0/-1,1/1}{
					\foreach \z/\zc in {0/-1,1/1}{
					\node[dot] (p\x\y\z) at (\xc,\yc,\zc) {};
			}}}
		\end{scope}
		\end{scope}

		\foreach \x in {0,1}{
			\foreach \y in {0,1}{
				\foreach \z/\a in {0/east,1/west}{
			\draw[box con] (b\x\y\z.\a) -- (p\x\y\z);
		}}}
				
		\draw[hasse con] (p100) to (p101);
		\draw[hasse con] (p001) to (p101);
		\draw[hasse con] (p101) to (p111);

		\draw[hasse con] (p000) to (p100);
		\draw[hasse con] (p000) to (p010);
		\draw[hasse con] (p000) to (p001);
		\draw[hasse con] (p100) to (p110);
		\draw[hasse con] (p010) to (p110);
		\draw[hasse con] (p010) to (p011);
		\draw[hasse con] (p001) to (p011);
		\draw[hasse con] (p011) to (p111);
		\draw[hasse con] (p110) to (p111);
		
		\draw[box con] (b010.east) -- (p010); 

	\end{tikzpicture}
\]
\end{proposition}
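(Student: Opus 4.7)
The plan is to reduce the claim to two systematic applications of the composition and inclusion results for $q$-bisimulations, after first identifying each of the eight equivalences with a concrete $q$-bisimulation or $q$-language equivalence via \cref{thm:tts-q-timed-bisimulation,thm:tts-q-time-abstract-bisimulation,thm:tts-q-weak-timed-bisimulation,thm:tts-q-weak-time-abstract-bisimulation,thm:nd-ta-language-equivalences}. After this identification, the cube to be established has as its two opposite faces the square of the four bisimulation notions and the square of the four language equivalence notions, with the four ``depth'' edges joining each bisimulation to its language counterpart. So the proof splits naturally into proving the four edges of each face and then the four depth edges.

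First, for the two square faces, the four monoid homomorphisms $id_{\mathbb{T}^\ast}$, $!_{\mathbb{T}}^\ast$, $\varepsilon_{\mathbb{T}}$ and $!_{\mathbb{T}^\ast}$ fit into the commuting square of \cref{fig:morphisms-equivalence-spectrum}: we have $!_{\mathbb{T}^\ast} = \varepsilon_1\circ !_{\mathbb{T}}^\ast = !_{\mathbb{T}}\circ \varepsilon_{\mathbb{T}}$, and the identity $id_{\mathbb{T}^\ast}$ factors through either $!_{\mathbb{T}}^\ast$ or $\varepsilon_{\mathbb{T}}$ trivially. Therefore, by applying \cref{thm:behavioural-composition-bisimulation} at each edge one obtains the four order relations populating a single face: strong timed refines both weak timed and strong time-abstract, each of which in turn refines weak time-abstract. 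Doing this both in the setting $J = ((-)^\sharp\colon\Set\to\kl(\LTS))$ (for the bisimulation face) and in the setting $J = ((-)^\sharp\colon\kl(\mathcal{P})\to\kl(\ENA))$ (for the language equivalence face) gives all eight ``horizontal'' edges of the cube.

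Second, for the four depth edges connecting each bisimulation to the corresponding language equivalence, I would invoke \cref{thm:behavioural-inclusion-bisimulation}. By construction of the $\varepsilon$-NA monad, the inclusion $(-)^\sharp\colon\Set\to\kl(\ENA)$ factors as
\[
  \Set \xrightarrow{\;(-)^\sharp\;} \kl(\mathcal{P}) \xrightarrow{\;(-)^\sharp\;} \kl(\ENA)\text{,}
\]
exhibiting $\Set$ as a wide subcategory of $\kl(\mathcal{P})$ through an inclusion $I$. Instantiating \cref{thm:behavioural-inclusion-bisimulation} with $J$ the inclusion $\kl(\mathcal{P})\to\kl(\ENA)$ and $I$ the inclusion $\Set\to\kl(\mathcal{P})$ shows that any $q$-bisimulation (a relation arising from a behavioural morphism living in $\Set$) is also a $q$-language equivalence (arising from a behavioural morphism living in $\kl(\mathcal{P})$) for each of the four choices of $q$. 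This yields the four depth edges of the cube.

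The main obstacle is a bookkeeping one rather than a conceptual one: I need to be careful that the functor model $\underline{\alpha}$ of a TTS can be viewed simultaneously as an object of both $[\cat{T}^\ast,\kl(\LTS)]^{(-)^\sharp}$ and $[\cat{T}^\ast,\kl(\ENA)]^{(-)^\sharp}$, so that the inclusion argument above can be applied uniformly. In practice this is immediate since adding accepting steps yields a lax functor whose $\LTS$-part coincides with the original one; so the equivalences match up as required, and the Hasse diagram is obtained by transitively closing the edges produced by the two steps above. No further arrows hold in the diagram beyond those depicted, but this completeness direction (showing no additional collapses are forced) is a separable observation usually settled by the classical separating examples for these equivalences and is not part of what \cref{thm:behavioural-composition-bisimulation,thm:behavioural-inclusion-bisimulation} are being asked to prove here.
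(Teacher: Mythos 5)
Your proposal is correct and follows essentially the same route as the paper's own proof: identify the eight equivalences as $q$-bisimulations/$q$-language equivalences via the characterisation propositions, derive the two square faces from the commuting diagram of monoid homomorphisms together with \cref{thm:behavioural-composition-bisimulation}, and obtain the depth edges from the factorisation $\Set \to \kl(\mathcal{P}) \to \kl(\ENA)$ together with \cref{thm:behavioural-inclusion-bisimulation}. Your extra care about viewing $\underline{\alpha}$ in both $\kl(\LTS)$ and $\kl(\ENA)$, and your remark that strictness of the spectrum is a separate matter, are both sensible additions the paper leaves implicit.
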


\begin{proof}
The thesis follows from the commuting diagram of monoid morphisms in \cref{fig:morphisms-equivalence-spectrum}, the fact that the canonical inclusion $\Set \to \kl(\ENA)$ factors as $\Set \to \kl(\mathcal{P}) \to \kl(\ENA)$, \cref{thm:tts-q-timed-bisimulation,thm:tts-q-time-abstract-bisimulation,thm:tts-q-weak-timed-bisimulation,thm:tts-q-weak-time-abstract-bisimulation,thm:nd-ta-language-equivalences,thm:behavioural-composition-bisimulation,thm:behavioural-inclusion-bisimulation}. 
\end{proof}

\section{Beyond timed transition systems}
\label{sec:more-applications}

In this section we illustrate the generality of the results presented by listing some representative examples of timed behavioural models fitting our framework besides our running example of timed transition systems. 

\subsection{Quantalic systems}
\label{sec:q-ta}

Quantalic systems are a generalisation of non-deterministic ones where non-deterministic features are extended with some quantitative ones to describe \eg the maximal cost of a computation.

\subsubsection{Model}
\label{sec:q-ta-model}

Let $(\mathcal{Q},\cdot ,1,\leq)$ be a \emph{unital quantale}, \ie a relational structure such that: 
\begin{itemize}
\item $(\mathcal{Q},\cdot,1)$ is a monoid,
\item $(\mathcal{Q},\leq)$ is a complete lattice,
\item arbitrary suprema are preserved by the monoid multiplication.
\end{itemize}
In other words, a unital quantale is a monoid in the category $\Sup$ of join-preserving homomorphisms between complete join semi-lattices.
In the sequel we will often write $\perp_\mathcal{Q}$ or simply $\perp$ for the supremum of the empty set and denote a quantale $(\mathcal{Q},\cdot ,1,\leq)$ by its carrier set $\mathcal{Q}$, provided the associated structure is clear from the context.

\paragraph{Quantale monad.}\looseness=-1
An arbitrary unital quantale $\mathcal{Q}$ gives rise to the monad $\mathcal{Q}^{(-)}$ over $\Set$, called \emph{quantalic monad}. The functor carrying this monad which assigns to any set $X$ the set $\mathcal{Q}^X$ of all functions from $X$ to $\mathcal{Q}$ and to any map $f\colon X\to Y$ the map $\mathcal{Q}^f\colon \mathcal{Q}^X\to \mathcal{Q}^Y$ given by:
\begin{equation*}
	\mathcal{Q}^f(\phi)(y) = \bigvee_{x:f(x)=y} \phi(x)\text{.}
\end{equation*}
Multiplication $\mu$ and unit $\eta$ of $\mathcal{Q}^{(-)}$ are given on each set $X$ by the functions:
\begin{equation*}
	\mu_X(\psi)(x) = \bigvee_{\phi \in \mathcal{Q}^X} \phi(x) \cdot \psi(\phi)
	\qquad\text{and}\qquad
	\eta_X(x)(y) = \begin{dcases}
		1 & \text{if } x = y\\
		\perp & \text{otherwise}
	\end{dcases}
	\text{.}
\end{equation*}
The powerset monad $\mathcal{P}$ is a special case of the above where the chosen quantale is the boolean one.
The quantalic monad can be equipped with left, right double strengths for the structure $(\Set,\times,1)$; these are given on each component as follows:
\begin{gather*}
	\lstr_{X,Y}(x,\phi)(x',y') = \eta_X(x)(x') \cdot \phi(y') \qquad
	\rstr_{X,Y}(\psi,y)(x',y') = \psi(x') \cdot \eta_Y(y)(y') \\
	\dstr_{X,Y}(\psi,\phi)(x,y) = \psi(x) \cdot \phi(y)
	\text{.}
\end{gather*}
The quantalic monad is thus strong and monoidal but not necessary commutative. Composition in $\kl(\mathcal{Q}^{(-)})$ is defined, for any $f$ and $g$ with suitable domain and codomain, as:
\begin{equation*}
	(g \circ f)(x)(z) = \bigvee_{y} f(x)(y) \cdot g(y)(z)\text{.}
\end{equation*}
The tensor of the monoidal structure $(\kl(\mathcal{Q}^{(-)}),\liftKl{\times},1)$ takes every pair of morphisms $f\colon X \to Y$ and $f'\colon X'\to Y'$ to the arrow given by the function:
\begin{equation*}
	(f \mathbin{\liftKl{\times}} f')(x,x')(y,y') = f(x)(y) \cdot f'(x')(y')
	\text{.}
\end{equation*}
The category $\kl(\mathcal{Q}^{(-)})$ is isomorphic to the category $\mathsf{Mat}(\mathcal{Q})$ of $\mathcal{Q}$-valued matrices \cite{rosen:quanta96}. In this sense, $\cat{Rel}$ is a special case of the above where the chosen unital quantale is the boolean one. Another example of quantale is the set of all languages for a given alphabet $\Sigma$ equipped with the pointwise extension of string concatenation as the monoidal structure and set inclusion as the order. More generally, for a monoid $(M,\odot,e)$ equipped with an order $\leq$ that is preserved by $\cdot$, the set $\mathcal{P}_{\downarrow}M$ of downward closed subsets of $M$ carries a unital quantale structure $(\mathcal{P}_{\downarrow}M,\cdot,1,\subseteq)$ where $1$ is the downward cone with cusp $e$:
\[ 
	1 = \{m \in M \mid m \leq e \}
\]
and $\cdot$ is the downward closure of the pointwise extension of $\odot$ to subsets of $M$:
\[
	X \cdot Y = \{m \mid \exists x \in X, \exists y \in Y (m \leq x \odot y)\}
	\text{.}
\]
Positive arithmetic monoids such as natural and real numbers under addition are source of examples of interest for modelling computations with quantitative aspects. 
\begin{example}[Resource use]
Consider $(\mathbb{N},+,0)$ equipped with the natural order. Because the order is total and has $0$ as its least element, $\mathcal{P}_{\downarrow}\mathbb{N}$ is isomorphic to $\mathbb{N} + \{\infty\}$; the resulting quantale structure is that of natural numbers with addition and the natural order extended with infinity.
If we regard numbers attached to computations weighted using this quantale as costs then, the cost of sequential steps is the sum of the cost of each step and the cost of branching points is the maximum of the cost of each branch.
\end{example}

\begin{example}[Least likelihood]
Consider $([0,1],\cdot,1)$ equipped with the inverted natural order on the real interval $[0,1]$. The order is total, has $1$ as its least element, $0$ as its greatest element, and $\mathcal{P}_{\downarrow}[0,1]$ is isomorphic to $[0,1]$.
If we regard numbers attached to computations weighted using this quantale as probability estimates then, the estimate of sequential steps is the product of the probabilities of each step and the cost of branching points is the least (in the natural order) of the estimates for each branch.
\end{example}

\paragraph{Unobservable and accepting moves.}
Consider the monad $\UM{\mathcal{Q}}$ (carried by $\mathcal{Q}^{(\Sigma_\tau \times \Id)}$) obtained by equipping the quantale monad with labelled and unobservable moves as described above. We will refer to this monad as the \emph{quantalic system monad} since coalgebras for its endofunctor $\mathcal{Q}^{(\Sigma_\tau \times \Id)}$ are quantalic labelled transition systems with unobservable moves. These systems are endomorphisms of the Kleisli category of $\UM{\mathcal{Q}}$. In particular, composition in $\kl(\UM{\mathcal{Q}})$ assigns to every $f\colon X \to Y$ and $g\colon Y \to Z$ the composite map:
\begin{equation*}
	(g\circ f)(x)(\sigma,z) = 
		\bigvee\left\{
				f(x)(\sigma_1,y)\cdot g(y)(\sigma_2,z)
				\,\middle|\,
				\{\sigma_1,\sigma_2\} = \{\sigma,\tau\}\text{ and }y \in Y
			\right\} 
	\text{.}
\end{equation*}
The monad $\UM{\mathcal{Q}}$ is equipped with the left, right, and double strengths below:
\begin{gather*}
	\lstr_{X,Y}(x,\phi)(\sigma,x',y') = \eta_X(x)(\tau,x') \cdot \phi(\sigma,y')\\
	\rstr_{X,Y}(\psi,y)(\sigma,x',y') = \psi(\sigma,x')\cdot \eta_Y(y)(\tau,y')\\
	\dstr_{X,Y}(\psi,\phi)(\sigma,x,y) =
		\bigvee\left\{\psi(\sigma_1,x) \cdot \phi(\sigma_2,y) \mid \{\sigma_1,\sigma_2\} = \{\sigma,\tau\}\right\}\text{.}
\end{gather*}
The tensor of the resulting monoidal structure on $\kl(\UM{\mathcal{Q}})$ takes every pair of sets $X$ and $Y$ to their cartesian product $X \times Y$ and every pair of morphisms $f_1\colon X_1 \to Y_1$ and $f_2\colon X_2\to Y_2$ to the arrow given by the function:
\begin{equation*}
	(f_1 \liftKl{\times} f_2)(x_1,x_2)(\sigma,y_1,y_2) = \bigvee\left\{f_1(x_1)(\sigma_1,y_1) \cdot f_2(x_2)(\sigma_2,y_2) \mid \{\sigma_1,\sigma_2\} = \{\sigma,\tau\}\right\}\text{.}
\end{equation*}

Consider the monad $\UAM{\mathcal{Q}}$ (carried by $\mathcal{Q}^{(\Sigma_\tau \times \Id + \{\checkmark\})}$) obtained by equipping the quantale monad with labelled, unobservable moves and accepting as described above. We will refer to this monad as the \emph{quantalic automata monad} since coalgebras for its endofunctor $\mathcal{Q}^{(\Sigma_\tau \times \Id)}$ are quantalic automata with $\varepsilon$-moves (we adopt the same notation used for $\varepsilon$-NA and $\UAM{\mathcal{P}}$). Composition in the Kleisli category for this monad assigns to every $f\colon X \to Y$ and $g\colon Y \to Z$ the composite map:
\begin{equation*}
	(g\circ f)(x)(w)
	= \begin{dcases}
		\bigvee \left\{
				f(x)(\checkmark),
				f(x)(\tau,y)\cdot g(y)(\checkmark) 
				\,\middle|\,
				y \in Y 
			\right\} 
		& \text{if } w = \checkmark\\
		\bigvee\left\{
				f(x)(\sigma_1,y)\cdot g(y)(\sigma_2,z)
				\,\middle|\,
				\{\sigma_1,\sigma_2\} = \{\sigma,\tau\}\text{ and }y \in Y
			\right\} 
		& \text{if } w = (\sigma,z)
	\end{dcases}
\end{equation*}
The monad $\UAM{\mathcal{Q}}$ is equipped the left, right, and double strengths given below:
\begin{align*}
	\lstr_{X,Y}(x,\phi)(w)
	= & \begin{dcases}
			\phi(\checkmark)
		& \text{if } w = \checkmark\\
			\phi(\sigma,y')
		& \text{if } w = (\sigma,x,y')\\
			\perp
		& \text{otherwise}
	\end{dcases}
	\\
	\rstr_{X,Y}(\psi,y)(w)
		= & \begin{dcases}
				\psi(\checkmark)
			& \text{if } w = \checkmark\\
				\psi(\sigma,x')
			& \text{if } w = (\sigma,x',y)\\
				\perp
			& \text{otherwise}
		\end{dcases}
	\\
	\dstr_{X,Y}(\psi,\phi)(w)
		= & \begin{dcases}
				\psi(\checkmark) \vee \phi(\checkmark)
			& \text{if } w = \checkmark\\
				\bigvee\left\{\psi(\sigma_1,x) \cdot \phi(\sigma_2,y) \mid \{\sigma_1,\sigma_2\} = \{\sigma,\tau\}\right\}
			& \text{if } w = (\sigma,x,y)
		\end{dcases}
\end{align*}
The tensor of the resulting monoidal structure on $\kl(\UM{\mathcal{Q}})$ takes every pair of sets $X$ and $Y$ to their cartesian product $X \times Y$ and every pair of morphisms $f_1\colon X_1 \to Y_1$ and $f_2\colon X_2\to Y_2$ to the arrow given by the function:
\begin{equation*}
	(f_1 \liftKl{\times} f_2)(x_1,x_2)(w)
	= \begin{dcases}
			f_1(x_1)(\checkmark) \vee f_2(x_2)(\checkmark)
		& \!\text{if } w\!=\!\checkmark\\
			\bigvee\!\left\{f_1(x_1)(\sigma_1,y_1) f_2(x_2)(\sigma_2,y_2) \mid \{\sigma_1,\sigma_2\}\!=\!\{\sigma,\tau\}\right\}\!
		& \!\text{if } w\!=\!(\sigma,y_1,y_2)
	\end{dcases}
\end{equation*}

\subsubsection{Behavioural equivalences}
\label{sec:q-ta-behavioural-equivalences}

Let $\alpha$ be a timed quantalic system with carrier $X$ and inputs from the alphabet $\Sigma_\tau$. We write $x \xrightarrow{t} \psi$ for a timed step in $\alpha$ \ie:
\begin{align*}
	x \xrightarrow{t} \psi \defiff {} & 
		\psi = \alpha_t(x)\text{.}\\
\intertext{We write $x \xRightarrow{t} \psi$ for a saturated timed step in $\alpha$ \ie:}
	x \xRightarrow{t} \psi \defiff {} & 
		\psi = \lambda(\sigma,x').\bigvee\left\{ \phi(\sigma,x') \,\middle|\, x' \in n < \omega\text{, } X\text{, and } x \xRightarrow{t}_n \phi\right\}\\
\intertext{where the (functional) relation $\xRightarrow{t}_n$ is given by recursion on the number $n$ of underlying steps as follows:}
	x \xRightarrow{t}_{0} \psi \defiff {} & 
		t = 0\text{ and } 
		\psi(\sigma,x') = \begin{cases}
			1 & \text{if } \sigma = \tau \text{ and } x = x'\\
			\perp & \text{otherwise}
		\end{cases}
		\\
	x \xRightarrow{t}_{n+1} \psi \defiff {} &
		\psi = \lambda(\sigma,x').\bigvee\left\{ \phi_1(\sigma_1,x'')\cdot\phi_2(\sigma_2,x') \,\middle|\,
			\begin{array}{l} 
				x'' \in X\text{, } \{\sigma_1,\sigma_2\} = \{\sigma,\tau\} \text{,}\\ 
				x \xrightarrow{t'} \phi_1\text{, and } x'' \xRightarrow{t-t'}_n \phi_2
			\end{array}\right\}\\
\intertext{For an equivalence relation $R$ on the carrier $X$ of $\alpha$, we write $\equiv_R$ for its lifting to $\mathcal{Q}^{\Sigma_\tau \times X}$:}
	\psi \equiv_R \phi \defiff {} &
		 \forall C \in X/R \bigvee_{x \in C} \psi(x) = \bigvee_{x \in C} \phi(x)\text{.}
\end{align*}

\begin{definition}
	\label{def:q-ta-behavioural-equivalences}
	For a timed quantalic system $\alpha$ and an equivalence relation $R$ on its carrier:
	\begin{itemize}
	\item 
		$R$ is a \emph{(strong) timed bisimulation} for $\alpha$ if $x \mathrel{R} x'$ and $x\xrightarrow{t} \psi$ implies that there are $x'$ and $\psi'$ such that $x'\xrightarrow{t} \psi'$ and $\psi \equiv_R \psi'$; 
	\item 
		$R$ is a \emph{(strong) time-abstract bisimulation} for $\alpha$ if $x \mathrel{R} x'$ and $x\xrightarrow{t} \psi$ implies that there are $x'$  and $t'$ such that $x'\xrightarrow{t'} \psi'$ and $\psi \equiv_R \psi'$; 
	\item
		$R$ is a \emph{weak timed bisimulation} for $\alpha$ if $x \mathrel{R} x'$ and $x\xRightarrow{t} \psi$ implies that there are $x'$ and $\psi'$ such that $x'\xRightarrow{t} \psi'$ and $\psi \equiv_R \psi'$; 
	\item
		$R$ is a \emph{weak time-abstract bisimulation} for $\alpha$ if $x \mathrel{R} x'$ and $x\xRightarrow{t} \psi$ implies that there are $x'$  and $t'$ such that $x'\xRightarrow{t'} \psi'$ and $\psi \equiv_R \psi'$;  
	\end{itemize}
\end{definition}

Timed quantalic systems are modelled in the lax functors framework by putting $\cat{K}$ to be $\kl(\UM{\mathcal{Q}})$. Notions of behavioural equivalence for TQSs defined above are all instances of $q$-bisimulation where behavioural morphisms are sourced from $\Set$.

\begin{proposition}
	\label{thm:q-ta-behavioural-equivalences}
	For a timed quantalic system $\alpha$ with functor model $\underline{\alpha}$ and an equivalence relation $R$ on the carrier of $\alpha$:
	\begin{enumerate}
	\item 
		$R$ is a (strong) timed bisimulation for $\alpha$ iff it is a $id_{\mathbb{T}^\ast}$-bisimulation for $\underline\alpha$;
	\item
		$R$ is a (strong) time-abstract bisimulation for $\alpha$ iff it is a $!_{\mathbb{T}}^\ast$-bisimulation for $\underline\alpha$;
	\item 
		$R$ is a weak timed bisimulation for $\alpha$ iff it is a $\varepsilon_{\mathbb{T}}$-bisimulation for $\underline\alpha$;
	\item
		$R$ is a weak time-abstract bisimulation for $\alpha$ iff it is a $!_{\mathbb{T}^\ast}$-bisimulation for $\underline\alpha$;
	\end{enumerate}
\end{proposition}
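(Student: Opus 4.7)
The plan is to replay the arguments of \cref{thm:tts-q-timed-bisimulation,thm:tts-q-time-abstract-bisimulation,thm:tts-q-weak-timed-bisimulation,thm:tts-q-weak-time-abstract-bisimulation} verbatim, with the only substantive change being the swap of the base category from $\kl(\LTS)$ to $\kl(\UM{\mathcal{Q}})$ and of set-theoretic unions by suprema in $\mathcal{Q}$. For each of the four cases, given an equivalence relation $R$ on the carrier $X$ of $\alpha$, I would take $f \colon X \to X/R$ to be the canonical projection in $\Set$ and show that it extends to a $q$-behavioural morphism for the corresponding $q \in \{id_{\mathbb{T}^\ast},\,!_{\mathbb{T}}^\ast,\,\varepsilon_{\mathbb{T}},\,!_{\mathbb{T}^\ast}\}$ if and only if $R$ satisfies the matching clause of \cref{def:q-ta-behavioural-equivalences}.

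The key technical observation linking $\equiv_R$ to behavioural morphisms in the quantalic setting is the following: for the Kleisli lift $f^\sharp$ of the quotient map and any $\psi \in \mathcal{Q}^{\Sigma_\tau \times X}$, composition in $\kl(\UM{\mathcal{Q}})$ computes
\[
  (f^\sharp \circ \psi)(\sigma, y) = \bigvee_{x \in f^{-1}(y)} \psi(\sigma, x)\text{,}
\]
so that $f^\sharp \circ \phi = f^\sharp \circ \psi$ iff $\phi \equiv_R \psi$. Hence an equation of the shape $f^\sharp \circ \Sigma_q(\underline{\alpha})_{\vec t} = \underline{\beta}_{\vec t} \circ f^\sharp$ holds for a suitable $\underline{\beta}$ iff, for every $x \mathrel{R} x'$, the $\mathcal{Q}$-valued distributions $\Sigma_q(\underline{\alpha})_{\vec t}(x)$ and $\Sigma_q(\underline{\alpha})_{\vec t}(x')$ are $\equiv_R$-related. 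Unfolding $\Sigma_q(\underline{\alpha})$ via \cref{thm:simplified-formula-for-sigma} — which applies since $\kl(\UM{\mathcal{Q}})$ has non-empty joins preserved on both sides — the four choices of $q$ pick out, respectively, single timed steps, time-abstracted single steps, sequences of timed steps of given total duration, and arbitrary timed sequences, matching exactly the four conditions in \cref{def:q-ta-behavioural-equivalences}.

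For the forward direction, I would define $\underline{\beta}$ on $Y = X/R$ pointwise as $\underline{\beta}_{\vec t}(y)(\sigma, y') \triangleq \bigvee_{x \in f^{-1}(y)} \bigvee_{x' \in f^{-1}(y')} \Sigma_q(\underline{\alpha})_{\vec t}(x)(\sigma, x')$; the equation $f^\sharp \circ \Sigma_q(\underline{\alpha})_{\vec t} = \underline{\beta}_{\vec t} \circ f^\sharp$ then holds by construction whenever $R$ is a bisimulation in the sense of \cref{def:q-ta-behavioural-equivalences}. The lax functor laws for $\underline{\beta}$ follow from those of $\Sigma_q(\underline{\alpha})$ together with left distributivity of $\kl(\UM{\mathcal{Q}})$ over non-empty joins, as established in \cref{sec:q-ta-model}. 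The converse direction is more direct: given a refinement system $\underline{\beta}$ witnessing $f^\sharp$ as a $q$-behavioural morphism, reading the equation above componentwise immediately yields the $\equiv_R$-condition required by \cref{def:q-ta-behavioural-equivalences}.

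The main obstacle, and essentially the only novel calculation relative to the TTS case, is verifying that the pointwise definition of $\underline{\beta}$ along $f$ is compatible with the quantale-valued composition in $\kl(\UM{\mathcal{Q}})$ — in particular, that the saturation formula $\Sigma_q(\underline{\alpha})$ descends cleanly along the quotient. This reduces to repeatedly invoking distributivity of $\mathcal{Q}$-suprema over quantalic multiplication (built into the definition of a unital quantale) and hence over composition in $\kl(\UM{\mathcal{Q}})$, which is precisely what guarantees that the framework of \cref{sec:general-saturation-and-equivalences} applies uniformly and that the proof template of the TTS propositions goes through mutatis mutandis.
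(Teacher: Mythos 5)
The paper states \cref{thm:q-ta-behavioural-equivalences} without any proof, implicitly deferring to the arguments given for the non-deterministic case, and your proposal supplies exactly that intended adaptation: the same quotient-map construction as in \cref{thm:tts-q-timed-bisimulation,thm:tts-q-time-abstract-bisimulation,thm:tts-q-weak-timed-bisimulation,thm:tts-q-weak-time-abstract-bisimulation}, with unions replaced by $\mathcal{Q}$-suprema and the observation that $(f^\sharp\circ\psi)(\sigma,y)=\bigvee_{x\in f^{-1}(y)}\psi(\sigma,x)$ identifies $\equiv_R$ with equality after post-composition by $f^\sharp$. This is correct and at the same level of detail as the paper's own proofs for TTSs (the only cosmetic remark is that laxity of the refinement system $\underline{\beta}$ is most cleanly obtained from $f^\sharp$ being split epi rather than from left distributivity), so nothing further is needed.
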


\subsubsection{Language equivalences}
\label{sec:q-ta-language-equivalences}

Languages accepted by a timed quantalic systems are defined as in the non-deterministic case except that acceptance is no longer a boolean property but a weighted one \ie functions from the set of all timed (resp.~untimed) words to the set carrying the given quantale $\mathcal{Q}$. Let $\alpha$ be a timed quantalic system with accepting moves and let $X$ be its state space. Formally, the timed/untimed languages strongly/weakly accepted by a state $x_0$ of $\alpha$ are the functions:
\begin{align*}
	\mathrm{tl}_{\alpha}(x_0)(t_0\,\sigma_1\,t_1\,\sigma_1\dots t_n) \defeq &
	\bigvee\left\{
		\left(\prod_{i=1}^{n} \psi_i(\sigma_{i},x_{i}) \right) \cdot \psi_{n}(\checkmark)
	\,\middle| 
		\begin{array}{l}
		x_1,\dots,x_n \in X\text{ and}\\
		x_{i} \xrightarrow{t_{i}} \psi_{i} \text{ for } i \leq n
		\end{array}
	\right\}
	\\
	\mathrm{utl}_{\alpha}(x_0)(\sigma_1\ldots\sigma_n) \defeq &
	\bigvee\big\{
		\mathrm{tl}_{\alpha}(x_0)(t_0\,\sigma_1\,t_1\dots t_n)
	\,\big|\,
			t_0,\dots,t_n \in \mathbb{T}
	\big\}
	\\
	\mathrm{wtl}_{\alpha}(x_0)(t_0\,\sigma_1\,t_1\,\sigma_1\dots t_n) \defeq &
		\bigvee\left\{
			\left(\prod_{i=1}^{n} \psi_i(\sigma_{i},x_{i}) \right) \cdot \psi_{n}(\checkmark)
		\,\middle| 
			\begin{array}{l}
			x_1,\dots,x_n \in X\text{ and}\\
			x_{i} \xRightarrow{t_{i}} \psi_{i} \text{ for } i \leq n
			\end{array}
		\right\}
	\\
	\mathrm{wutl}_{\alpha}(x_0)(\sigma_1\ldots\sigma_n) \defeq &
		\bigvee\big\{
			\mathrm{wtl}_{\alpha}(x_0)(t_0\,\sigma_1\,t_1\dots t_n)
		\,\big|\,
				t_0,\dots,t_n \in \mathbb{T}
		\big\}
	\text{.}
\end{align*}
\begin{definition}
	\label{def:q-ta-language-equivalences}
	For a timed quantalic system $\alpha$ and an equivalence relation $R$ on its carrier:
	\begin{itemize}
	\item 
		$R$ is a \emph{timed language equivalence} for $\alpha$ if $x \mathrel{R} y$ implies that $\mathrm{tl}_{\alpha}(x) = \mathrm{tl}_{\alpha}(y)$; 
	\item 
		$R$ is a \emph{time-abstract language equivalence} for $\alpha$ if $x \mathrel{R} y$ implies that $\mathrm{utl}_{\alpha}(x) = \mathrm{utl}_{\alpha}(y)$; 
	\item
		$R$ is a \emph{weak timed language equivalence} for $\alpha$ if $x \mathrel{R} y$ implies that $\mathrm{wtl}_{\alpha}(x) = \mathrm{wtl}_{\alpha}(y)$; 
	\item
		$R$ is a \emph{weak time-abstract language equivalence} for $\alpha$ if $x \mathrel{R} y$ implies that $\mathrm{wutl}_{\alpha}(x) = \mathrm{wutl}_{\alpha}(y)$. 
	\end{itemize}
\end{definition}

Quantalic systems with accepting moves are modelled in the lax functors framework by putting $\cat{K}$ to be $\kl(\UAM{\mathcal{Q}})$. Notions of language equivalence for TQSs defined above are all instances of $q$-bisimulation where behavioural morphisms are sourced from $\kl(\mathcal{Q}^{(-)})$.
\begin{proposition}
	\label{thm:q-ta-language-equivalences}
	For a timed quantalic system $\alpha$ with functor model $\underline{\alpha}$ and an equivalence relation $R$ on the carrier of $\alpha$:
	\begin{enumerate}
	\item 
		$R$ is a timed language equivalence for $\alpha$ iff it is a $id_{\mathbb{T}^\ast}$-language equivalence for $\underline\alpha$;
	\item
		$R$ is a time-abstract language equivalence for $\alpha$ iff it is a $!_{\mathbb{T}}^\ast$-language equivalence for $\underline\alpha$;
	\item 
		$R$ is a weak timed language equivalence for $\alpha$ iff it is a $\varepsilon_{\mathbb{T}}$-language equivalence for $\underline\alpha$;
	\item
		$R$ is a weak time-abstract language equivalence for $\alpha$ iff it is a $!_{\mathbb{T}^\ast}$-language equivalence for $\underline\alpha$;
	\end{enumerate}
\end{proposition}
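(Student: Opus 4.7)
\begin{proofatend}
The plan is to mimic the proof of \cref{thm:nd-ta-language-equivalences} but working in the quantalic setting, replacing $\ENA$ with $\UAM{\mathcal{Q}}$ and $\mathcal{P}$ with $\mathcal{Q}^{(-)}$. As in \loccit, all four statements follow the same pattern, so I will focus on explaining the strategy on one representative case (say the weak time-abstract one) and indicate how the others are obtained.

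First, I would unpack what $\Sigma_q(\underline{\alpha})$ computes for each of the four monoid morphisms $q \in \{id_{\mathbb{T}^\ast}, !_{\mathbb{T}}^\ast, \varepsilon_{\mathbb{T}}, !_{\mathbb{T}^\ast}\}$, using \cref{thm:general-saturation} (and its simplified form \cref{thm:simplified-formula-for-sigma} when $\cat{K}=\kl(\UAM{\mathcal{Q}})$ is right distributive, which it is by construction of $\mathcal{Q}^{(-)}$). A direct computation in $\kl(\UAM{\mathcal{Q}})$ using the composition and tensor formulas from \cref{sec:q-ta-model} shows that the $\Sigma_q(\underline{\alpha})$-steps are exactly the saturated timed steps $x \xRightarrow{t} \psi$ (or their time-abstract variants) appearing in the definitions of $\mathrm{tl}_\alpha$, $\mathrm{utl}_\alpha$, $\mathrm{wtl}_\alpha$, $\mathrm{wutl}_\alpha$. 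This is the quantalic analogue of the step $x \xRightarrow{(\sigma,t)} y$ computation in the proof of \cref{thm:nd-ta-language-equivalences}.

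Next, I would use the observation from \cref{subsection:coalgebras_with_internal} that a $\UAM{\mathcal{Q}}$-coalgebra is the same as an endomorphism in $\kl(\UAM{\mathcal{Q}})$, so a $q$-behavioural morphism $f\colon X \to Y$ sourced from $\kl(\mathcal{Q}^{(-)})$ (\ie a morphism $f\colon X \to \mathcal{Q}^Y$) is precisely a coalgebra homomorphism in base $\kl(\mathcal{Q}^{(-)})$ from the saturated endomorphism $\Sigma_q(\underline{\alpha})$ to some $\UAM{\mathcal{Q}}$-coalgebra on $Y$. Following the guidelines of~\cite[Example~7.3]{brengos2015:lmcs} adapted to $\mathcal{Q}$-valued matrices instead of relations, such a homomorphism exists (with $R$ as its kernel pair) if and only if applying $f$ yields equal $\mathcal{Q}$-weighted acceptance on $R$-related states for every finite sequence of saturated steps, which is the content of $\mathrm{tl}_\alpha(x) = \mathrm{tl}_\alpha(x')$ (respectively the other three language definitions).

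The one subtle point, and the step I expect to be the main obstacle, is the correspondence between equivalence relations $R$ on $X$ and morphisms $f$ in $\kl(\mathcal{Q}^{(-)})$ whose kernel pair is $R$: in the boolean case this is automatic via the quotient map (viewed as a relation), but for a general quantale one must use the characteristic $\mathcal{Q}$-valued map $f(x)(C) = 1$ iff $x \in C$ (and $\perp$ otherwise), and then verify that the equality $\mathrm{tl}_\alpha(x) = \mathrm{tl}_\alpha(x')$ rephrased via $\equiv_R$ (as in \cref{sec:q-ta-behavioural-equivalences}) is exactly the equation $f \circ \Sigma_q(\underline{\alpha})_n = \beta_n \circ f$ in $\kl(\UAM{\mathcal{Q}})$ for the natural refinement system $\beta$ on $X/R$. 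Once this correspondence is established, the four cases follow uniformly by combining it with the four saturation computations above.
\end{proofatend}
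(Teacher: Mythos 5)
The paper itself gives no proof of this proposition: it is stated as the quantalic analogue of \cref{thm:nd-ta-language-equivalences}, whose proof computes the relevant saturation, identifies a behavioural morphism sourced from $\kl(\mathcal{P})$ with a coalgebra homomorphism over that base category, and defers to \cite[Example 7.3]{brengos2015:lmcs}. Your sketch reproduces exactly that strategy with $\mathcal{P}$ replaced by $\mathcal{Q}^{(-)}$ and correctly isolates the only genuinely new ingredient (the distributivity of $\kl(\UAM{\mathcal{Q}})$ and the kernel-pair correspondence for the $\mathcal{Q}$-valued quotient map), so it is essentially the same approach as the paper's.
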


It follows from \cref{thm:q-ta-behavioural-equivalences,thm:q-ta-language-equivalences,thm:behavioural-composition-bisimulation,thm:behavioural-inclusion-bisimulation} that notions of bisimulation and language equivalence for TQSs given in \cref{def:q-ta-behavioural-equivalences,def:q-ta-language-equivalences} are organised in the spectrum shown in \cref{fig:timed-equivalences-spectrum}.

\subsection{Segala systems}
\label{sec:s-ta}

Segala systems are systems whose computations express a combination of  non-deterministic and probabilistic aspects. These systems are \citeauthor{segala:phd-thesis} who pioneered their study \cite{segala:phd-thesis}.

\subsubsection{Model}
\label{sec:s-ta-model}

Setting aside for a moment inputs, acceptance, and other aspects typical of automata and focusing on the branching type of Segala systems, the computations expressed by these systems can be ascribed to the functor $\mathcal{P D}$ where $\mathcal{D}$ is the probability distribution monad \cite{sokolova11}. Despite the fact that both functors carry monad structures, their composite does not fail to be a proper monad \cite{jacobs08:cmcs,varaccawinskel2006:mscs}. In order to compose single steps of these machines into sequences and computations, Segala introduced the notion of \emph{combined step} where non-determinism is restricted to those subsets that are closed under convex combinations---hence the term \emph{convex semantics} (of Segala systems). This semantics is captured by the convex combination monad\footnote{
	The convex combinations monad was first introduced in its full generality by \citeauthor{jacobs08:cmcs} in \cite{jacobs08:cmcs} to study trace semantics for combined possibilistic and probabilistic systems. Independently, \citeauthor{brengos2015:lmcs} \cite{brengos2015:lmcs} and \citeauthor{gp:icalp2014} \cite{gp:icalp2014} tweaked Jacobs' construction slightly, so that the resulting monads are more suitable to model  Segala systems and their weak bisimulations. \citeauthor{jacobs08:cmcs}' monad, \citeauthor{brengos2015:lmcs}' monad and \citeauthor{gp:icalp2014}'s monad identify Kleisli categories that are $\Dcpo$-enriched and whose hom-sets admit binary joins. For the purposes of this paper we take the convex combinations monad $\CM\colon \Set\to \Set$ to be that considered in \cite[§8]{brengos2015:lmcs}.
} \cite{jacobs08:cmcs,varaccawinskel2006:mscs,brengos2015:lmcs,gp:icalp2014,ms2017:fi}.

\paragraph{Convex set monad.}
By $[0,\infty)$ we denote the semiring $([0,\infty),+,0,\cdot,1)$ of non-negative real numbers with ordinary addition and multiplication. By a $[0,\infty)$-\emph{semimodule} we mean the commutative monoid with actions $[0,\infty)\times (-)\to (-)$ satisfying axioms listed in \eg \cite{golan:99}. For a set $X$ and define $\mathcal{M} X$ as the set of finitely supported weight functions:
\begin{equation*}
	\mathcal{M} X \defeq \{\phi\colon X\to [0,\infty) \mid \supp(\phi)  \text{ is finite}\}
\end{equation*}
where $\supp(\phi)$ is the \emph{support} of $\phi\colon X \to [0,\infty)$ \ie the set $\{x \mid \phi(x) \neq 0\}$ of elements assigned a non zero value by $\phi$.
We will often denote elements of $\mathcal{M}X$ using the formal sum notation: for $\phi \in \mathcal{M}X$ we write $\sum_x \phi(x) \cdot x$ or, given $\supp(\phi) = \{x_1,\ldots,x_n\}$, simply $\sum_{i=1,\ldots,n} \phi(x_i)\cdot x_i$. Any function $f\colon X\to Y$ induces the action $\mathcal{M} f\colon \mathcal{M} X \to \mathcal{M} Y$ defined as follows:
\begin{equation*}
	 \mathcal{M} (f)(\phi)
	 = \sum \phi(x) \cdot f(x)
	 = \lambda y. \sum_{x\in f^{-1}(y)}\phi(x)
\end{equation*}
It is immediate to check that this data defines the (finite generalised) multiset functor. 
The set $\mathcal{M} X$ carries a monoid structure via pointwise operation of addition, and $[0,\infty)$-action via 
\begin{equation*}
	(a\cdot \phi) (x) \defeq a\cdot \phi(x)\text{,}
\end{equation*}
which turn $\mathcal{M} X$ into a free semimodule over $X$ (see \eg \cite{brengos2015:lmcs, golan:99,jacobs08:cmcs} for details). For a non-empty subset $U\subseteq \mathcal{M} X$ we define its convex closure as the set:
\begin{equation*}
	\overline{U} =
	\left\{ a_1\cdot \phi_1 +\ldots +a_n\cdot \phi_n 
	\;\middle|\; 
	\phi_i\in U,\ a_i\in [0,\infty) \text{ such that } \sum_{i=1}^na_i  =1
	\right\}
	\text{.}
\end{equation*}
Call a subset $U\subseteq \mathcal{M}X$ \emph{convex} provided $\overline{U} = U$ and put
\begin{equation*}
 \CM X = \{U\subseteq \mathcal{M}X \mid U \text{ is convex and non-empty}\}
 \text{.}
\end{equation*}
Addition and $[0,\infty)$-actions are extended to convex sets in a pointwise manner: for $U, V\subseteq \mathcal{M}X$ both convex and $a \in [0,\infty)$ we write $U + V$ for the set $\{\phi + \psi \mid \phi \in U, \psi \in V \}$ and $a \cdot U$ for the set $\{a \cdot \phi \mid \phi \in U \}$.

Every function $f\colon X \to Y$ induces an action $\CM (f)\colon \CM X \to \CM Y$ given on each convex subset $U$ of semimodules on $X$ as follows:
\begin{equation*}
 \CM (f)(U) = \{\mathcal{M}(f)(\phi) \mid \phi \in U\}
 \text{.}
\end{equation*}
The assignment $\CM\colon\Set\to \Set$ is a functor which carries a monadic structure \cite{brengos2015:lmcs} whose multiplication and unit have as component at $X$ the functions:
\begin{equation*}
	\mu_X(U) \defeq \bigcup_{\phi \in U} \sum_{V\in \CM X}\{\phi(V)\cdot \psi \mid \psi \in V\}
	\qquad\text{and}\qquad
	\eta_X(x) \defeq \{1\cdot x\}
	\text{.} 
\end{equation*}
The convex combination monad is equipped with strength, costrength and double strength given on each component as:
\begin{align*}
	\lstr_{X,Y}(x,U) & = \left\{\sum\phi(y)\cdot(x,y)\,\middle|\,\phi \in U\right\}
	\\
	\rstr_{X,Y}(V,y) & = \left\{\sum\psi(x)\cdot(x,y)\,\middle|\,\psi \in V\right\}
	\\
	\dstr_{X,Y}(V,U) & = \left\{\sum(\psi(x)\cdot \phi(y))\cdot(x,y)\,\middle|\,\psi \in V,\phi \in U\right\}	
	\text{.} 
\end{align*}
Composition in $\kl(\CM)$ is given, for each $f$ and $g$ with compatible domains and codomains as:
\begin{equation*}
	(g\circ f)(x) = 
	\bigcup_{\phi \in f(x)}\sum_{y \in \supp(\phi)}
	\left\{ \phi(y)\cdot \psi \mid \psi \in g(y) \right\}
\end{equation*}
where the underlying functions in $\Set$ have type $f\colon X\to \CM(Y)$, $g\colon Y\to \CM(Z)$, and $g\circ f$ $\kl(\CM)$.
The tensor of the symmetric monoidal structure $(\kl(\mathcal{CM}),\liftKl{\times},1)$ takes every pair of morphisms $f\colon X \to Y$ and $f'\colon X'\to Y'$ to the arrow 
\begin{equation*}
	(f \mathbin{\liftKl{\times}} f')(x,x') = \left\{\left\{\sum(\phi(y)\cdot \phi'(y'))\cdot(y,y')\,\middle|\,\phi \in U,\phi' \in U'\right\}\,\middle|\, U \in f(x), U' \in f'(x')\right\}
	\text{.}
\end{equation*}

\paragraph{Unobservable and accepting moves.}
Akin to $\LTS$ and $\ENA$, we extend $\CM$ with unobservable moves and acceptance via the general construction introduced in \cite{brengos2015:lmcs} and introduce the \emph{Segala systems monad} $\ST$ and the \emph{Segala automata monad} $\SA$ over the endofunctors $\CM(\Sigma_\tau \times \Id)$ and $\CM(\Sigma_\tau \times \Id + \{\checkmark\})$, respectively. Since we are interested in the Kleisli categories of these monads, we omit an explicit derivation of their structure and instead we describe directly $\kl(\ST)$ and $\kl(\SA)$. 
The Kleisli category $\kl(\ST)$ has sets as objects and maps $X\to \CM(\Sigma_\tau\times Y)$ as morphisms from $X$ to $Y$. For $f\colon X\to \ST(Y)$ and $g\colon Y\to \ST(Z)$ the composite $g\circ f \in \kl(\ST)$ is the map:
\begin{align*}
	(g\circ f)(x) = {} &
		\bigcup_{\phi \in f(x)}\left(
			\sum_{y \in Y}\{ \phi(\tau,y)\cdot \psi \mid \psi \in g(y)\}
			+{}\right.\\&\left.
			\sum_{\sigma \in \Sigma,\, y \in Y}
			\left\{
				\sum_{i=1}^n \phi(\sigma,y)\cdot r_i \cdot (\sigma,z_i)
			\,\middle|
				\begin{array}{l}
				\sum_{j=1}^{n+m} r_j \cdot (\sigma_j,z_j) 
				\in
				g(y)
				\text{ such that}
				\\
				\sigma_j = \tau \iff j \leq n
				\end{array}
			\right\}\right)
	\\= {} &
		\left\{
			\sum_{\substack{z \in Z, \sigma \in \Sigma_\tau}}
			\left(
			\sum_{y \in Y,\,\{\sigma_1,\sigma_2\} = \{\sigma,\tau\}}\mspace{-12mu}
			\phi(\sigma_1,y)\cdot\psi_y(\sigma_2,z)\right)\cdot(\sigma,z)
		\,\middle|
			\begin{array}{l}
				\phi \in f(x),\\ 
				\{\psi_y\}_{y \in \supp(\phi)},\\
				\psi_y \in g(y)
			\end{array}
		\right\}
	\text{.}
\end{align*}
The above Kleisli category (see \cite{brengos2015:lmcs} for a discussion) lets us consider Segala systems with unobservable moves as endomorphisms and allows their mutual composition. Likewise, the Kleisli category $\kl(\SA)$ has sets as objects and maps $X\to \CM(\Sigma_\tau\times Y + \{\checkmark\})$ as morphisms from $X$ to $Y$. For $f\colon X\to \SA(Y)$ and $g\colon Y\to \SA(Z)$ the composite $g\circ f \in \kl(\SA)$ is the map:
\begin{align*}
	(g\circ f)(x) = {} &
		\bigcup_{\phi \in f(x)}\left(
		\{\phi(\checkmark) \cdot \checkmark\} + 
		\sum_{y \in Y}\{ \phi(\tau,y)\cdot \psi \mid \psi \in g(y)\}
		+{}\right.
		\\ &
		\left.
		\sum_{y \in Y,\, \sigma \in \Sigma}
		\left\{
			\sum_{i=1}^n \phi(\sigma,y)\cdot r_i \cdot (\sigma,z_i)
		\,\middle|
			\begin{array}{l}
			\sum_{j=1}^{n+m} r_j \cdot (\sigma_j,z_j) 
			\in
			g(y)
			\text{ such that}
			\\
			\sigma_i = \tau \iff i \leq n
			\end{array}
		\right\}\right)
	\\ = {}& 
	\left\{
		\sum_{\substack{z \in Z,\ \sigma \in \Sigma_\tau}}
		\left(
			\sum_{y \in Y,\,\{\sigma_1,\sigma_2\} = \{\sigma,\tau\}}\mspace{-12mu}
			\phi(\sigma_1,y)\cdot\psi_y(\sigma_2,z)
		\right) \cdot (\sigma,z)
		+{}\right.\\
		& 
		\left.
		\left(\phi(\checkmark) + 
		\sum_{y \in Y}
		\phi(\tau,y)\cdot\psi_y(\checkmark)\right)\cdot \checkmark
	\,\middle|
		\begin{array}{l}
			\phi \in f(x),\\ 
			\{\psi_y \in g(y)\}_{y \in \supp(\phi)}
		\end{array}
	\right\}
	\text{.}
\end{align*}
Akin to how endomorphisms in $\kl(\ENA)$ model non-deterministic automata, endomorphisms in $\kl(\SA)$ are a model for Segala automata since $\SA$ extends $\ST$ with acceptance mimicking the situation of $\ENA$ and $\LTS$.

\subsubsection{Behavioural equivalences}
\label{sec:s-ta-behavioural-equivalences}

Let $\alpha$ be a timed Segala system with carrier $X$ and inputs from the alphabet $\Sigma_\tau$. We write $x \xrightarrow{t} \psi$ for a timed step in $\alpha$ \ie:
\begin{align*}
	x \xrightarrow{t} \psi \defiff {} & 
		\psi \in \alpha_t(x)\text{.}\\
\intertext{We write $x \xRightarrow{t} \psi$ for a saturated timed step in $\alpha$ \ie:}
	x \xRightarrow{t} \psi \defiff {} & 
		\exists n\, x \xRightarrow{t,n} \psi\\
\intertext{where the relation $\xRightarrow{t,n}$ is given by recursion on the number $n$ of underlying steps as follows:}
	x \xRightarrow{t}_{0} \psi \defiff {} & 
		t = 0 \text{ and } \psi = 1 \cdot (\tau,x)\text{,}\\
	x \xRightarrow{t}_{n+1} \psi \defiff {} &
		\exists\,
		x \xrightarrow{t'} {\textstyle\sum_{i=0}^{m} r_i \cdot (\sigma_i,x'_i)},\ 
		\forall\, i \in \{1,\dots,m\}\,
		\\{} &
		\exists\,
		x'_i \xRightarrow{t - t'}_{n} {\textstyle\sum_{j = 0}^{m_i} r'_{i,j} \cdot (\sigma'_{i,j},x''_{i,j})},\ 
		\forall j \in \{1,\dots,m_i\}\,\exists \sigma''_{i,j} \in \Sigma_\tau
		\\{} &
		\text{ s.t. }
		\{\sigma''_{i,j},\tau\} = \{\sigma_i,\sigma'_{i,j}\}\,\,
		\text{ and }
		\psi =  {\textstyle\sum_{i=0}^{m}\sum_{j = 0}^{m_i} \left(r_i\cdot r'_{i,j}\right) \cdot (\sigma''_{i,j},x''_{i,j})}\text{.}\\
\intertext{For an equivalence relation $R$ on the carrier $X$ of $\alpha$, we write $\equiv_R$ for its $\mathcal{D}(\Sigma_\tau \times \Id)$-extension \ie the equivalence relation on $\mathcal{D}(\Sigma_\tau \times X)$ defined as follows:}
	\psi \equiv_R \phi \defiff {} &
		 \forall C \in X/R,\ \forall \sigma \in \Sigma_\tau\  \sum_{x \in C} \psi(\sigma,x) = \sum_{x \in C} \phi(\sigma,x)\text{.}
\end{align*}
\begin{definition}
	\label{def:s-ta-behavioural-equivalences}
	For a timed Segala system $\alpha$ and an equivalence relation $R$ on its carrier:
	\begin{itemize}
	\item 
		$R$ is a \emph{(strong) timed bisimulation} for $\alpha$ if $x \mathrel{R} x'$ and $x\xrightarrow{t} \psi$ implies that there are $x'$ and $\psi'$ such that $x'\xrightarrow{t} \psi'$ and $\psi \equiv_R \psi'$; 
	\item 
		$R$ is a \emph{(strong) time-abstract bisimulation} for $\alpha$ if $x \mathrel{R} x'$ and $x\xrightarrow{t} \psi$ implies that there are $x'$  and $t'$ such that $x'\xrightarrow{t'} \psi'$ and $\psi \equiv_R \psi'$; 
	\item
		$R$ is a \emph{weak timed bisimulation} for $\alpha$ if $x \mathrel{R} x'$ and $x\xRightarrow{t} \psi$ implies that there are $x'$ and $\psi'$ such that $x'\xRightarrow{t} \psi'$ and $\psi \equiv_R \psi'$; 
	\item
		$R$ is a \emph{weak time-abstract bisimulation} for $\alpha$ if $x \mathrel{R} x'$ and $x\xRightarrow{t} \psi$ implies that there are $x'$  and $t'$ such that $x'\xRightarrow{t'} \psi'$ and $\psi \equiv_R \psi'$;  
	\end{itemize}
\end{definition}

Timed Segala systems are modelled in the lax functors framework by putting $\cat{K}$ to be $\kl(\ST)$. Notions of behavioural equivalence for TSSs defined above are all instances of $q$-bisimulation where behavioural morphisms are sourced from $\Set$.
\begin{proposition}
	\label{thm:s-ta-behavioural-equivalences}
	For a timed Segala system $\alpha$ with functor model $\underline\alpha$ and an equivalence relation $R$ on the carrier of $\alpha$:
	\begin{enumerate}
	\item 
		$R$ is a (strong) timed bisimulation for $\alpha$ iff it is a $id_{\mathbb{T}^\ast}$-bisimulation for $\underline\alpha$;
	\item
		$R$ is a (strong) time-abstract bisimulation for $\alpha$ iff it is a $!_{\mathbb{T}}^\ast$-bisimulation for $\underline\alpha$;
	\item 
		$R$ is a weak timed bisimulation for $\alpha$ iff it is a $\varepsilon_{\mathbb{T}}$-bisimulation for $\underline\alpha$;
	\item
		$R$ is a weak time-abstract bisimulation for $\alpha$ iff it is a $!_{\mathbb{T}^\ast}$-bisimulation for $\underline\alpha$;
	\end{enumerate}
\end{proposition}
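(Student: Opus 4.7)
The strategy is to adapt the proofs of \cref{thm:tts-q-timed-bisimulation,thm:tts-q-time-abstract-bisimulation,thm:tts-q-weak-timed-bisimulation,thm:tts-q-weak-time-abstract-bisimulation} from the non-deterministic setting to the Segala setting by swapping the Kleisli base $\kl(\LTS)$ for $\kl(\ST)$. In the non-deterministic case the key computation was that for the canonical projection $f\colon X\to Y\cong X/R$ viewed as $f^\sharp$ in the Kleisli category, the composite $f^\sharp\circ\underline{\alpha}_t$ in $\kl(\LTS)$ unfolds to the image of the transitions under $f$. The analogous fact for $\kl(\ST)$ is that $(f^\sharp\circ\underline{\alpha}_t)(x)$ equals $\{\mathcal{M}(\mathrm{id}_{\Sigma_\tau}\times f)(\psi)\mid\psi\in\underline{\alpha}_t(x)\}$ (closed under convex combinations), which follows from the explicit formula for Kleisli composition in $\kl(\ST)$ recalled in \cref{sec:s-ta-model} together with the description of $\eta_{\ST}$ as $x\mapsto\{1\cdot(\tau,x)\}$.

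The pivotal observation is that the equivalence $\equiv_R$ is exactly the kernel relation on $\mathcal{M}(\Sigma_\tau\times X)$ induced by the pushforward along $f$; consequently, the Segala bisimulation condition in \cref{def:s-ta-behavioural-equivalences} is equivalent to saying that $xRx'$ implies $(f^\sharp\circ\underline{\alpha}_t)(x)=(f^\sharp\circ\underline{\alpha}_t)(x')$. The same reformulation applies to the saturated relations $\xRightarrow{t}$ once one identifies $\underline{\alpha}_t(x)$ with the set of $\psi$ such that $x\xrightarrow{t}\psi$; indeed, the inductive clauses of $\xRightarrow{t}$ in \cref{sec:s-ta-behavioural-equivalences} match verbatim the recursive unfolding of Kleisli composites that appears in $\Sigma_{\varepsilon_{\mathbb{T}}}(\underline{\alpha})_t$ and $\Sigma_{!_{\mathbb{T}^\ast}}(\underline{\alpha})$ per \cref{thm:simplified-formula-for-sigma}.

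Granted these observations, the forward direction of each of the four equivalences proceeds by taking the quotient $f\colon X\to X/R$, defining the refinement system $\underline{\beta}\in\JFun{\cat{N}}{\kl(\ST)}{J}$ (for $\cat{N}=\cat{T}^\ast,\cat{T},1^\ast,1$ in the respective cases) by $\underline{\beta}_n(f(x))\defeq(f^\sharp\circ\Sigma_q(\underline{\alpha})_n)(x)$, and verifying that well-definedness on equivalence classes is precisely the bisimulation hypothesis, that laxness is inherited from $\Sigma_q(\underline{\alpha})$, and that the required commutation $f^\sharp\circ\Sigma_q(\underline{\alpha})=\underline{\beta}\circ f^\sharp$ holds by construction. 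Conversely, given a $q$-behavioural morphism with underlying quotient map $f$, the commuting square evaluated at the appropriate stage ($t$ for $id_{\mathbb{T}^\ast}$, arbitrary $t'$ with $!_{\mathbb{T}}(t')=!_{\mathbb{T}}(t)$ for $!_{\mathbb{T}}^\ast$, and sequences of time stamps for the saturations involving $\varepsilon$) unfolds via the Kleisli composition formula for $\ST$ into the corresponding clause of \cref{def:s-ta-behavioural-equivalences}.

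The main obstacle is bookkeeping the convex closures. One must verify that the pushforward $\CM(\mathrm{id}_{\Sigma_\tau}\times f)$ commutes with the non-empty joins appearing in the saturation formulas, and that the recursive convex combinations inside $\xRightarrow{t}$ align with the iterated Kleisli composites on the right-hand side of the $\Sigma_{\varepsilon_{\mathbb{T}}}$ formula. Both facts follow from the left and right distributivity of $\kl(\ST)$ with respect to the canonical inclusion from $\Set$, established for the convex combinations monad in \cite{brengos2015:lmcs}; hence no new categorical assumption is required and the argument is a purely mechanical adaptation of the TTS proofs modulo the Kleisli composition formula in \cref{sec:s-ta-model}.
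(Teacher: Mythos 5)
Your proposal is correct and follows exactly the route the paper intends: the paper in fact prints no proof for this proposition, leaving it as the evident adaptation of the arguments given for \cref{thm:tts-q-timed-bisimulation,thm:tts-q-time-abstract-bisimulation,thm:tts-q-weak-timed-bisimulation,thm:tts-q-weak-time-abstract-bisimulation}, which is precisely what you carry out. Your two key observations --- that $f^\sharp\circ(-)$ in $\kl(\ST)$ is the convex pushforward $\CM(\mathrm{id}_{\Sigma_\tau}\times f)$ and that $\equiv_R$ is exactly the kernel of that pushforward (so the clauses of \cref{def:s-ta-behavioural-equivalences} become $x\mathrel{R}x'\implies(f^\sharp\circ\Sigma_q(\underline\alpha)_n)(x)=(f^\sharp\circ\Sigma_q(\underline\alpha)_n)(x')$, using symmetry of $R$) --- are the right reductions, and your explicit attention to the alignment of the convex closures in $\xRightarrow{t}$ with the joins in the saturation formulas, justified by affinity of the pushforward and the distributivity of $\kl(\ST)$ from \cite{brengos2015:lmcs}, addresses the only point where the Segala case genuinely differs from the non-deterministic one.
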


\subsubsection{Language equivalences}
\label{sec:s-ta-language-equivalences}

A (timed) language accepted by a (timed) Segala system is a convex set of distributions over (timed) words.
Let $\alpha$ be a timed Segala system with accepting moves and let $X$ be its state space. Formally, the timed/untimed languages strongly/weakly accepted by a state $x$ of $\alpha$ are the convex sets defined below.
\begin{align*}
	\textstyle \sum_{i=1}^{m} r_i\cdot w_i \in \mathrm{tl}_{\alpha}(x) \defiff {}&
		\forall i\ \exists a_0,\dots,a_p,u_0,\dots,u_p \text{ s.t. }  
		\textstyle \sum a_j = 1 \text{, }
		\textstyle \sum a_j u_j = r_i \text{, and}
		\\&
		\text{if } w_i = t \text{ then }
		\forall j \leq p\, \exists \phi \in \ST(X) \text{ s.t. } 
		x \xrightarrow{t} u_j \cdot \checkmark + \phi
		\\&
		\text{if } w_i = t\sigma w' \text{ then } \forall j\leq p\,
		\exists \phi,y_0,\dots,y_q,\psi_0,\dots,\psi_q
		\text{ s.t. }
		\\& \qquad\textstyle
		\forall k\leq q\ \psi_k \in \mathrm{tl}_{\alpha}(y_k)\text{, }
		\forall z \notin \{y_0,\dots,y_q\}\ \phi(\sigma,z) = 0\text{, }
		\\& \qquad\textstyle
		u_j = \sum_{k=0}^{q} \phi(\sigma,y_k)\psi_k(w') \text{, and }
		x \xrightarrow{t} \phi
	\\
	\textstyle \sum_{i=1}^{m} r_i\cdot w_i \in \mathrm{utl}_{\alpha}(x) \defiff {}&
			\forall i\ \exists t_0\sigma_1t_1\dots \sigma_nt_n \in \mathrm{tl}_{\alpha}(x) \text{ s.t. }  
			w_i = \sigma_1\dots\sigma_{n}
	\\
	\textstyle \sum_{i=1}^{m} r_i\cdot w_i \in \mathrm{wtl}_{\alpha}(x) \defiff {}&
		\forall i\ \exists a_0,\dots,a_p,u_0,\dots,u_p \text{ s.t. }  
		\textstyle \sum a_j = 1 \text{, }
		\textstyle \sum a_j u_j = r_i \text{, and}
		\\&
		\text{if } w_i = t \text{ then }
		\forall j \leq p\, \exists \phi \in \ST(X) \text{ s.t. } 
		x \xRightarrow{t} u_j \cdot \checkmark + \phi
		\\&
		\text{if } w_i = t\sigma w' \text{ then } \forall j\leq p\,
		\exists \phi,y_0,\dots,y_q,\psi_0,\dots,\psi_q
		\text{ s.t. }
		\\& \qquad\textstyle
		\forall k\leq q\ \psi_k \in \mathrm{tl}_{\alpha}(y_k)\text{, }
		\forall z \notin \{y_0,\dots,y_q\}\ \phi(\sigma,z) = 0\text{, }
		\\& \qquad\textstyle
		u_j = \sum_{k=0}^{q} \phi(\sigma,y_k)\psi_k(w') \text{, and }
		x \xRightarrow{t} \phi
	\\
	\textstyle \sum_{i=1}^{m} r_i\cdot w_i \in \mathrm{utl}_{\alpha}(x) \defiff {}&
			\forall i\ \exists t_0\sigma_1t_1\dots \sigma_nt_n \in \mathrm{wtl}_{\alpha}(x) \text{ s.t. }  
			w_i = \sigma_1\dots\sigma_{n}
\end{align*}
\begin{definition}
	\label{def:s-ta-language-equivalences}
	For a timed Segala system $\alpha$ and an equivalence relation $R$ on its carrier:
	\begin{itemize}
	\item 
		$R$ is a \emph{timed language equivalence} for $\alpha$ if $x \mathrel{R} y$ implies that $\mathrm{tl}_{\alpha}(x) = \mathrm{tl}_{\alpha}(y)$; 
	\item 
		$R$ is a \emph{time-abstract language equivalence} for $\alpha$ if $x \mathrel{R} y$ implies that $\mathrm{utl}_{\alpha}(x) = \mathrm{utl}_{\alpha}(y)$; 
	\item
		$R$ is a \emph{weak timed language equivalence} for $\alpha$ if $x \mathrel{R} y$ implies that $\mathrm{wtl}_{\alpha}(x) = \mathrm{wtl}_{\alpha}(y)$; 
	\item
		$R$ is a \emph{weak time-abstract language equivalence} for $\alpha$ if $x \mathrel{R} y$ implies that $\mathrm{wutl}_{\alpha}(x) = \mathrm{wutl}_{\alpha}(y)$. 
	\end{itemize}
\end{definition}

Timed Segala systems with accepting moves are modelled in the lax functors framework by putting $\cat{K}$ to be $\kl(\SA)$. Notions of language equivalence for TSSs defined above are all instances of $q$-bisimulation where behavioural morphisms are sourced from $\kl(\CM)$.
\begin{proposition}
	\label{thm:s-ta-language-equivalences}
	For a timed Segala system $\alpha$ with functor model $\underline\alpha$ and an equivalence relation $R$ on the carrier of $\alpha$:
	\begin{enumerate}
	\item 
		$R$ is a timed language equivalence for $\alpha$ iff it is a $id_{\mathbb{T}^\ast}$-language equivalence for $\underline\alpha$;
	\item
		$R$ is a time-abstract language equivalence for $\alpha$ iff it is a $!_{\mathbb{T}}^\ast$-language equivalence for $\underline\alpha$;
	\item 
		$R$ is a weak timed language equivalence for $\alpha$ iff it is a $\varepsilon_{\mathbb{T}}$-language equivalence for $\underline\alpha$;
	\item
		$R$ is a weak time-abstract language equivalence for $\alpha$ iff it is a $!_{\mathbb{T}^\ast}$-language equivalence for $\underline\alpha$;
	\end{enumerate}
\end{proposition}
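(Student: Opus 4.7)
The proof proceeds in parallel with Proposition~\ref{thm:nd-ta-language-equivalences}, upgrading from the non-deterministic setting to the Segala one. The base category $\cat{K}$ is now $\kl(\SA)$ and the category of behavioural morphisms $\cat{J}$ is $\kl(\CM)$, so a behavioural morphism for $\underline\alpha$ is an arrow $f\colon X \to \CM Y$ together with a lax functor $\underline\beta$ making the saturation square of \cref{def:q-behavioural-equivalence} commute. Given an equivalence $R$ on the carrier $X$, the plan is to take $Y = X/R$ and lift the set-level quotient $q_R\colon X \to X/R$ to the $\kl(\CM)$-morphism $f = (q_R)^\sharp$ sending $x$ to $\{1\cdot[x]_R\}$; this $f$ is the candidate behavioural morphism whose kernel pair in $\kl(\CM)$ is $R$.

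For each item I identify the shape of the corresponding saturation $\Sigma_q(\underline\alpha)$ in $\kl(\SA)$ using \cref{thm:simplified-formula-for-sigma} and verify that commutativity of the square $f \circ \Sigma_q(\underline\alpha) = \underline\beta \circ f$ unfolds, through $\kl(\SA)$-composition, into the sum-and-convex-combination structure defining $\mathrm{tl}_\alpha$ (resp.\ $\mathrm{utl}_\alpha$, $\mathrm{wtl}_\alpha$, $\mathrm{wutl}_\alpha$). Concretely, when $q = id_{\mathbb{T}^\ast}$ saturation acts as the identity, so compound timed steps are exactly the concatenations $\underline\alpha_{t_1}\circ\cdots\circ \underline\alpha_{t_n}$; when $q = \varepsilon_{\mathbb{T}}$ saturation identifies all such sequences with a common total duration, matching the fixed-point definition of $\xRightarrow{t}$ used to define $\mathrm{wtl}_\alpha$; and when $q = !_{\mathbb{T}}^\ast$ or $q = !_{\mathbb{T}^\ast}$ the saturation additionally forgets time stamps, yielding the untimed versions.

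The main obstacle is the interplay between the set-level quotient and the convex closure characterising $\CM$: a single timed step in a Segala system produces a convex set of distributions, so demanding that $R$-related states evaluate to the same convex set of distributions on $X/R$ is properly stronger than the pointwise identification $\equiv_R$ used in \cref{def:s-ta-behavioural-equivalences}. This is precisely the distinction between behavioural morphisms sourced from $\Set$ (bisimulations) and from $\kl(\CM)$ (language equivalences), and it is witnessed by the standard determinisation trick of \cite{brengos2015:lmcs,jacobs08:cmcs}. Once this construction is in place, as in the proof of \cref{thm:nd-ta-language-equivalences} it suffices to establish the equivalence for the weakest case (item~4) by applying the guidelines of \cite[Example~7.3]{brengos2015:lmcs} to the monad $\SA$; items~1--3 then follow by replacing $\Sigma_{!_{\mathbb{T}^\ast}}$ with $\Sigma_{id_{\mathbb{T}^\ast}}$, $\Sigma_{!_{\mathbb{T}}^\ast}$, and $\Sigma_{\varepsilon_{\mathbb{T}}}$ respectively, using \cref{thm:general-saturation-compsition} to stage the change of saturation.
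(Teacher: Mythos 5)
The paper states this proposition without any proof: of the instantiation results only the non-deterministic TTS cases are proved, and even the proof of \cref{thm:nd-ta-language-equivalences} handles only the last item explicitly and defers the computational core to \cite[Example~7.3]{brengos2015:lmcs}. Your sketch follows exactly the route the paper intends --- put $\cat{K}=\kl(\SA)$, source behavioural morphisms from $\kl(\CM)$, compute $\Sigma_q(\underline\alpha)$ via \cref{thm:simplified-formula-for-sigma}, and unfold the naturality square through $\kl(\SA)$-composition into the convex-combination structure of $\mathrm{tl}_{\alpha}$, $\mathrm{utl}_{\alpha}$, $\mathrm{wtl}_{\alpha}$, $\mathrm{wutl}_{\alpha}$ --- so it is at least as detailed as anything the paper offers for this statement.

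Two points to tighten. First, items 1--3 do not ``follow'' from item~4: \cref{thm:general-saturation-compsition} together with \cref{thm:behavioural-composition-bisimulation} only lets you pass from finer saturations to coarser ones, so establishing the coarsest case $!_{\mathbb{T}^\ast}$ cannot recover the finer equivalences. As in the paper's proof of \cref{thm:nd-ta-language-equivalences}, the other three items need separate, analogous instantiations with $\Sigma_{id_{\mathbb{T}^\ast}}$, $\Sigma_{!_{\mathbb{T}}^\ast}$ and $\Sigma_{\varepsilon_{\mathbb{T}}}$; if that is what you mean by ``replacing'', drop the appeal to compositionality of saturation, which suggests a derivation that is not there. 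Second, your ``main obstacle'' paragraph is aimed at the wrong target: $\equiv_R$ belongs to \cref{def:s-ta-behavioural-equivalences} (bisimulations, behavioural morphisms from $\Set$), whereas for \cref{def:s-ta-language-equivalences} the concrete side is equality of the accepted convex sets of weighted words, and the substantive remaining work --- absent from both your sketch and the paper --- is verifying that $f^\sharp$ post-composed with the iterates of $\Sigma_q(\underline\alpha)$ in $\kl(\SA)$ computes exactly those convex sets, i.e.\ carrying out the Segala analogue of \cite[Example~7.3]{brengos2015:lmcs}.
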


It follows from \cref{thm:s-ta-behavioural-equivalences,thm:s-ta-language-equivalences,thm:behavioural-composition-bisimulation,thm:behavioural-inclusion-bisimulation} that notions of bisimulation and language equivalence for TSSs given in \cref{def:s-ta-behavioural-equivalences,def:s-ta-language-equivalences} are organised in the spectrum shown in \cref{fig:timed-equivalences-spectrum}.

\subsection{Fully probabilistic processes and beyond}\label{sec:fully-probabilistic-processes-and-beyond}
The aim of this subsection is to show how probabilistic systems (or more generally, weighted systems \cite{brengos2015:jlamp}) fit into the framework presented in this paper. Weighted systems, unlike other systems presented in this section tend not to satisfy one of the main assumptions of Theorem \ref{thm:general-saturation}, namely, left distributivity. There are two (essentially equivalent) workarounds to this problem known in the literature. One is to consider the continuous continuation monad from \cite{gp:icalp2014} which is a natural extension of the weighted systems type monad and whose Kleisli category satisfies left distributivity (called \emph{algebraicity} in \cite{gp:icalp2014}). The idea behind the second solution, which is given in \cite{brengos2015:jlamp}, is similar and formulated on the categorical level: given a suitably order enriched category $\cat{K}$ which does not satisfy left distributivity we can embed it into $\widehat{\cat{K}}$, a category with left distributivity (which admits saturation). Although, the latter proposal seems to be more general than the former, the first solution preserves the coalgebraic nature of systems, \ie systems of a type $X\to TX$ are extended to systems of a type $X\to T'X$, where $T'$ is the new, richer monad. The second embedding, if applied to $\cat{K}=\kl(T)$ does not necessarily yield a category $\widehat{\cat{K}}$ which is a Kleisli category for some monad $T'$. However, as shown in \cite{brengos2015:jlamp}, if we adopt the general embedding $\cat{K}\hookrightarrow \widehat{\cat{K}}$ approach for saturation admittance, weak behavioural equivalence for systems in $\cat{K}$ (as defined in \loccit) remains the same regardless of the choice of the embedding. In other words, it is enough to embed $\cat{K}$ into \emph{a} left distributive category (with saturation) and not into any particular choice of the left distributive category\footnote{This also means that if $\cat{K}$ is left distributive and satisfies all other required properties then we can embed it into itself. Hence, using the guidelines of \cite{brengos2015:jlamp}, it can be easily shown that the setting we present below in \cref{subsection:embedding_presheafs} generalizes the one from \cref{sec:general-saturation-and-equivalences}. }. See \cite{brengos2015:jlamp} for a discussion.

We will recall some of the ingredients of the construction of a left distributive category $\widehat{\cat{K}}$ from \cite{brengos2015:jlamp} and elaborate more on how can this be used in order to define $q$-saturations and $q$-behavioural morphisms. We start off with a general construction and then, in the next paragraph, we instantiate it on probabilistic processes.
\subsubsection{The embedding}\label{subsection:embedding_presheafs}
The original construction of $\widehat{\cat{K}}$ was presented in the context of $\Cpoj$-enriched categories. By $\Cpoj$ we denote the category of posets which admit binary joins and countable joins $\bigvee_{n<\omega}x_n$ of $\omega$-chains $x_1\leq x_2\leq \ldots$ with morphisms preserving joins of $\omega$-chains. Hence, a category is $\Cpoj$-enriched if each hom-set is a poset which admits binary joins and joins of $\omega$-chains, and the composition preserves the latter.  Let $\cat{K}$ be a small\footnote{Smallness of $\cat{K}$ is required to guarantee $\widehat{\cat{K}}$ is locally small \cite{brengos2015:jlamp}. However, our aim is to apply this approach for $\cat{K}=\mathcal{K}l(T)$ where $T$ is a monad for weighted systems. Unfortunately, it is not a \emph{small} category. The simplest solution in this case is to take $\cat{K}$ as a suitable full subcategory of
$\mathcal{K}l(T)$ that meets our requirements. For example, if we are
interested in $T$-coalgebras whose base category is $\cat{Set}$ and
which have a carrier of cardinality below $\kappa$ then we can take 
\cat{K} to have exactly one set of cardinality $\lambda$ for any $\lambda<\kappa$. 
In particular, if $\kappa = \omega$ then \cat{K} is the dual category to the
Lawvere theory for $T$ (see \eg \cite{hyland:power:2007}). See \cite{brengos2015:jlamp} for a thorough discussion.
} $\Cpoj$-enriched category. Following \cite{brengos2015:jlamp} we define $\widehat{\cat{K}}$ to be the opposite of the category $[\cat{K}, \Cpoj]^{\Cpoj}$  of lax functors and oplax natural transformations between them. Taking any object $X$ to its representable functor $\widehat{X}\defeq \cat{K}(X,-)$ and any morphism $f\colon X\to Y$ to $\widehat{f}\defeq \cat{K}(f,-)$ defines the embedding $\widehat{(-)}\colon \cat{K}\to \widehat{\cat{K}}$ into an $\Cpoj$-enriched category which is left distributive \cite{brengos2015:jlamp}. If, in the above, we replace $\Cpoj$ with $\Dcpoj$, \ie the category of complete orders which admit binary joins with directed suprema preserved by the morphisms, then we also get a true statement (\emph{conf. loc. cit.}). In this case, as noted in the footnote under Theorem \ref{thm:general-saturation}, the category $\widehat{\cat{K}}$ satisfies the assumptions of this theorem. 

Here, we need to point out an important property of the embedding $\widehat{(-)}\colon \cat{K}\to\widehat{\cat{K}}$ above. Namely, for fixed objects $X,Y\in \cat{K}$ the hom-poset restriction $\widehat{(-)}\colon \cat{K}(X,Y)\to \widehat{\cat{K}}(\widehat{X},\widehat{Y})$ of $\widehat{(-)}\colon \cat{K}\to \widehat{\cat{K}}$ admits a left order adjoint $\Theta\colon \widehat{\cat{K}}(\widehat{X},\widehat{Y})\to \cat{K}({X},{Y})$ which is given by \cite{brengos2015:jlamp}
\[\Theta(\phi:\widehat{X}\to \widehat{Y}\in \widehat{\cat{K}})=\Theta(\phi:\cat{K}(Y,-)\to \cat{K}(X,-))\defeq \phi_Y(id_Y)\text{.}\]
  
We are now ready to define $q$-bisimulations for $\pi\in [\cat{M},\cat{K}]^J$.  
\subsubsection{$q$-saturations and $q$-bisimulations}
Assume $\cat{K}$ is $\Dcpoj$-enriched and $\cat{J}$ is a subcategory of $\cat{K}$. Take $q\colon M\to N$ an onto monoid homomorphism, and consider $\widehat{J}\colon \cat{J}\to \cat{K}\to \widehat{\cat{K}}$. By Theorem \ref{thm:general-saturation} and the above remarks, $\widehat{\cat{K}}$ admits $q$-saturation \wrt $\widehat{J}$:
	\begin{equation*}
		\begin{tikzpicture}[diagram]
			\node (n0) {$[\cat{M},\widehat{\cat{K}}]^{\widehat{J}}$};
			\node[right=6ex of n0] (n1) {$[\cat{N},\widehat{\cat{K}}]^{\widehat{J}}$};
			\draw[->,bend left] (n0.north east) to node {$\Sigma_q^{\widehat{J}}$} (n1.north west);
			\draw[->,bend left] (n1.south west) to node{$[q,\widehat{\cat{K}}]^{\widehat{J}}$} (n0.south east);
			\node[rotate=-90] at ($(n0.east)!.5!(n1.west)$) {$\dashv$};
		\end{tikzpicture}
	\end{equation*} 
  Assume $\pi \in [\cat{M},\cat{K}]^J$. A \emph{$q$-behavioural morphism} on $\pi$ is any morphism $f$ in $\cat{J}$ with domain $\Sigma_q^{\widehat{J}} (\widehat{\pi})(\ast)$ such that there is $\pi'_n \in [\cat{N},\cat{K}]^J$ making
\begin{equation}
\Theta\left (\widehat{J(f)}\circ \Sigma_q^{\widehat{J}}(\widehat{\pi})_n\right)  = \Theta(\widehat{\pi'}_n\circ \widehat{J(f)})=\pi_n'\circ J(f).\label{equation:strong_equ_morphism}
\end{equation}
A \emph{$q$-bisimulation} on $\pi$ is a kernel pair of a $q$-behavioural morphism whose domain is $\pi$. 

This generalization goes along the lines of the generalization of weak behavioural equivalence via saturation proposed by us in \cite{brengos2015:jlamp}. The remaining part of this subsection will focus on instantiating it on Markov chains. 

\subsubsection{Example: Markov chains}
At first, we present basic categorical ingredients needed to model Markov chains in the lax functorial setting. 
Consider the $\Set$ endofunctor $\mathcal F_{[0,\infty]}$ given on any set $X$ and on any function $f \colon X \to Y$ as
follows:
\[
	\mathcal F_{[0,\infty]}X \defeq
	\{\phi \colon X \to {[0,\infty]} \mid |\{x \mid \phi(x) \neq 0\}| \leq \omega \}
	\qquad
	\mathcal F_{[0,\infty]}f(\phi)(y) \defeq \sum_{x \in f^{-1}(y)\}} \phi(x) \text{.}
\]
This functor extends to a monad whose multiplication
$\mu$ and unit $\eta$ are given on their components by:
\[
	\mu_X(\phi)(x) \defeq \sum_{\psi \in \mathcal F_{[0,\infty]} X} \phi(\psi) \cdot \psi(x)	
	\qquad
	\eta_X(x)(x') \defeq
	\begin{cases}
		1 & \text{ if $x = x'$} \\
		0 & \text{ otherwise.}
	\end{cases}
\]
For any $X,Y \in \Set$ and $f,g\in \kl(\mathcal{F}_{[0,\infty]})(X,Y)$ define:
\[
	f \leq g \defiff  f(x)(y) \leq g(x)(y) \text{ for any } x\in X,\ y \in Y.
\]
As pointed out in \cite{brengos2015:jlamp} the category $\kl(\mathcal{F}_{[0,\infty]})$ is $\Cpoj$-enriched. It is \emph{not} left distributive but it is $\Set$-right distributive \wrt countable  suprema \cite{brengos2015:jlamp}. The category $\widehat{\cat{K}}$ we embed (a full subcategory of) $\kl(\mathcal{F}_{[0,\infty]})$  into by following the guidelines of \cite{brengos2015:jlamp} admits countable suprema, is left distributive and $\cat{Set}$-right distributive \wrt them. However, it does not admit \emph{arbitrary} non-empty suprema. Yet, if we carefully analyse the proof of Theorem \ref{thm:general-saturation} then it turns out that the above assumptions suffice for $!_{\mathbb{N}}\colon \mathbb{N}\to 1$-saturation $\Sigma_{!_\mathbb{N}}: [\mathbb{N},\widehat{\cat{K}}]^{\Set}\to [1,\widehat{\cat{K}}]^{\Set}$ to exist\footnote{Indeed, in this case only countable suprema are considered in the definition of $\Sigma_{!_\mathbb{N}}$.}.

Consider any ordinary functor $\pi\colon \mathbb{N}\to \kl(\mathcal{F}_{[0,\infty]})$ in $[\mathbb{N},\kl(\mathcal{F}_{[0,\infty]})]^{\Set}$ and a $\Set$-map $f$ whose domain is $\pi(\ast)$.  We have the following statement (see also \cite{brengos2015:corr}).
\begin{lemma}
\begin{equation}
\Theta(\widehat{f^\sharp}\circ \Sigma_{!_{\mathbb{N}}}(\widehat{\pi})) = \mu x. (f^\sharp \vee x\circ \pi_1),\label{equation:simpler}
\end{equation} 
with the least fixpoint calculated in $\kl(\mathcal{F}_{[0,\infty]})$ and $(-)^\sharp:\cat{Set}\to \kl(\mathcal{F}_{[0,\infty]})$ the inclusion functor. 
\end{lemma}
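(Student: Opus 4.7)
My plan is to unfold the left-hand side using the explicit saturation formula, transport the resulting join across $\Theta$, and then recognise the outcome as the Kleene colimit of the advertised fixpoint equation.

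First, since $\pi \in [\mathbb{N},\kl(\mathcal{F}_{[0,\infty]})]^{\Set}$ is an ordinary (strict) functor we have $\pi_n = \pi_1^n$ for every $n$. The homomorphism $!_{\mathbb{N}} \colon \mathbb{N}\to 1$ is surjective, and because the target monoid is trivial every composition-decomposition of the sole element collapses to iterated compositions of $\Pi_{\ast,0}$, which is already idempotent (by left distributivity of $\widehat{\cat{K}}$ and $\widehat{\pi_1^m} \circ \widehat{\pi_1^{m'}} = \widehat{\pi_1^{m+m'}}$). Hence the inductive layers of \cref{thm:general-saturation} stabilise at $\Pi_{\ast,0}$, yielding
\[
  \Sigma_{!_{\mathbb{N}}}(\widehat{\pi})_\ast = \bigvee_{n \geq 0} \widehat{\pi_1^n}\text{.}
\]
Composing with $\widehat{f^\sharp}$, using left distributivity of $\widehat{\cat{K}}$ and the fact that $\widehat{(-)}$ is a functor, gives
\[
  \widehat{f^\sharp} \circ \Sigma_{!_{\mathbb{N}}}(\widehat{\pi})_\ast = \bigvee_{n\geq 0} \widehat{f^\sharp\circ \pi_1^n}\text{.}
\]

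Next I would apply $\Theta$. By construction $\Theta$ is the left order adjoint of the hom-poset embedding induced by $\widehat{(-)}$, hence preserves joins; moreover $\Theta(\widehat{g}) = \widehat{g}_Y(id_Y) = id_Y \circ g = g$ by direct unfolding of the definition. Combining these two observations,
\[
  \Theta\bigl(\widehat{f^\sharp}\circ \Sigma_{!_{\mathbb{N}}}(\widehat{\pi})\bigr) = \bigvee_{n\geq 0} f^\sharp\circ \pi_1^n\text{.}
\]

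Finally, I would identify $\bigvee_{n\geq 0} f^\sharp\circ \pi_1^n$ with $\mu x.(f^\sharp\vee x\circ\pi_1)$ by Kleene iteration. Let $\Phi \defeq \lambda x.(f^\sharp \vee x\circ \pi_1)$; the element $f^\sharp$ is already a pre-fixpoint, and a straightforward induction on $n$ shows $\Phi^n(f^\sharp) = \bigvee_{k=0}^{n} f^\sharp\circ \pi_1^k$. Since $\kl(\mathcal{F}_{[0,\infty]})$ is $\Cpoj$-enriched, composition (and hence $\Phi$) preserves suprema of $\omega$-chains, so the colimit $\bigvee_{n\geq 0} f^\sharp\circ \pi_1^n$ is a fixpoint of $\Phi$; and any fixpoint $x$ must dominate each $f^\sharp\circ \pi_1^k$ by induction on $k$, making this colimit the least one. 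The main subtlety is in step one: \emph{carefully} checking that the merely countable join structure of $\widehat{\cat{K}}$ still collapses the layered expression of \cref{thm:general-saturation} to the single countable join displayed above, which in turn relies on the fact that only $\omega$-indexed joins appear when the domain monoid is $\mathbb{N}$.
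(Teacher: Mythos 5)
There is a genuine gap, and it is the same error made twice: you distribute composition over joins on the \emph{right}, in categories where only left distributivity (and $\Set$-right distributivity) is available. First, your collapse of the saturation to the single layer $\bigvee_{n}\widehat{\pi_1^n}$ requires $(a\vee b)\circ c=(a\circ c)\vee(b\circ c)$ for $c=\widehat{\pi_1^m}$, which is not in the image of $\Set$; the category $\widehat{\cat{K}}$ used here is left distributive and only $\Set$-right distributive, so the layered formula of \cref{thm:general-saturation} does \emph{not} stabilise at $\Pi_{\ast,0}$, and the correct value is $\bigvee_l\bigl(\bigvee_n\widehat{\pi}_n\bigr)^l$, in general strictly above $\bigvee_n\widehat{\pi_1^n}$. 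Second, the same problem reappears in your Kleene iteration: already $\Phi^2(f^\sharp)=f^\sharp\vee\bigl(f^\sharp\vee f^\sharp\circ\pi_1\bigr)\circ\pi_1$, and splitting the outer $(-)\circ\pi_1$ over the join needs right distributivity over $\pi_1$ in $\kl(\mathcal{F}_{[0,\infty]})$, which fails there (this is precisely why the whole embedding $\widehat{(-)}$ is introduced in this section). Hence $\Phi^n(f^\sharp)\geq\bigvee_{k\leq n}f^\sharp\circ\pi_1^k$ with possibly strict inequality, $\bigvee_k f^\sharp\circ\pi_1^k$ is not shown to be a fixpoint, and the least fixpoint can be strictly larger than your candidate. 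You end up proving $A'=B'$ for quantities $A'\leq A$ and $B'\leq B$ that may each be strictly below the two sides of \eqref{equation:simpler}; the two errors do not cancel.

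The paper's proof is structured exactly to avoid these two steps. It keeps the layered expression $\bigvee_l\Pi^l$ with $\Pi=\bigvee_n\widehat{\pi}_n$, pushes it through $\Theta$ using only that $\Theta$ preserves suprema (being a left order adjoint) and that $\Theta(A\circ\widehat{g})=\Theta(A)\circ g$, which yields the iterates of $G(x)=f^\sharp\vee\bigvee_n x\circ\pi_n$ in $\kl(\mathcal{F}_{[0,\infty]})$; it then sandwiches $\bigvee_l G^l(f^\sharp)$ between iterates of $H(x)=f^\sharp\vee x\circ\pi_1$ using the truncations $G_m$ and the inequality $G_m^n(f^\sharp)\leq H^{m\cdot n}(f^\sharp)$, which relies only on monotonicity of composition, never on distributing a join over a right-hand composition. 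If you want to salvage your argument, you would need to either justify the missing distributivity in this particular example (which contradicts the stated properties of $\mathcal{F}_{[0,\infty]}$) or replace both collapsing steps with a monotonicity sandwich of the kind the paper uses.
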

\begin{proof}
Put  $\Pi = \bigvee_{n<\omega} \widehat{\pi}_n$. By Theorem \ref{thm:general-saturation} we have:
\begin{align*}
&(\widehat{f^\sharp}\circ \Sigma_{!_{\mathbb{N}}}(\widehat{\pi}))= \Theta(\widehat{f^\sharp}\circ \bigvee_l (\Pi)^l) = \Theta(\bigvee_l \widehat{f^\sharp}\circ  (\Pi)^l) \stackrel{\dagger}{=}\bigvee_l \Theta(\widehat{f^\sharp}\circ  (\Pi)^l) \stackrel{\diamond}{=}\\ &\bigvee_l G^l(f^\sharp) \stackrel{\Box}{=} \mu x.(f^\sharp\vee x\circ \pi_1).
\end{align*}
The equation $(\dagger)$ follows from the fact that $\Theta$ preserves arbitrary suprema (as it is a left adjoint). If we put  $G(x) = f^\sharp\vee \bigvee_{n} x\circ \pi_n$ then $(\diamond)$ holds, which follows by induction. Indeed, for $l=0$ we vacuously have: $G^0(f^\sharp) = \Theta(\widehat{f^\sharp}\circ (\Pi)^0)$. Assume the identity $G^l(f^\sharp) = \Theta(\widehat{f^\sharp}\circ (\Pi)^l)$ holds for $l$ and consider:
\begin{align*}
&\Theta(\widehat{f^\sharp}\circ (\Pi)^{l+1})=\Theta(\widehat{f^\sharp}\circ (\Pi)^{l}\circ \Pi ) = \Theta(\widehat{f^\sharp}\circ (\Pi)^{l}\circ \bigvee_{n<\omega} \widehat{\pi_n} ) = \bigvee_{n<\omega} \Theta(\widehat{f^\sharp}\circ (\Pi)^{l}\circ  \widehat{\pi_n} )= \\
&\bigvee_{n<\omega} \Theta(\widehat{f^\sharp}\circ (\Pi)^{l}) \circ {\pi_n}=\bigvee_n G^l(f^\sharp)\circ \pi_n = f^\sharp\vee \bigvee_n G^l(f^\sharp)\circ \pi_n = G^{l+1}(f^\sharp). 
\end{align*}
Finally, in order to see $(\Box)$ holds consider $H(x)= f^\sharp\vee x\circ \pi_1$.  Obviously $H(x)\leq G(x)$. However, since $\pi$ is an ordinary functor, we also have: 
\[
G(x)=f^\sharp\vee \bigvee_n x\circ (\pi_1)^n\text{.}
\]
Hence, if we take $G_m(x) =f^\sharp\vee \bigvee_{n=1}^m x\circ (\pi_1)^n$ then by induction $G_m^n(f^\sharp) \leq H^{m\cdot n}(f^\sharp)$.  Therefore,
\[\bigvee_{n,m} G^n_m(f^\sharp)\leq \bigvee_n H^n(f^\sharp)\leq \bigvee_n G^n(f^\sharp)\text{.}\]
Since $\bigvee_n H^n(f^\sharp) = \mu x. (f^\sharp\vee x\circ \pi_1)$, as our category is $\Cpoj$-enriched, this proves the assertion.
\end{proof}
\noindent Hence, the identity (\ref{equation:strong_equ_morphism}) becomes:
$
 \mu x. (f^\sharp\vee x\circ \pi_1) = \pi'_1\circ f^\sharp, 
$
where the least fix point and the composition are calculated in $\kl(\mathcal{F}_{[0,\infty]})$.

We are ready to elaborate more on a Markov chain example.  We will now recall some notions from Markov chain theory. The reader is referred to \eg \cite{books/daglib/0095301} for basic definitions and properties.

Let $(X_n)_{n\leq \omega}$ be an Markov chain (or MC in short). We call the chain $(X_n)$ \emph{homogeneous} whenever 
$\mathbb{P}(X_n = j \mid X_m=i) = \mathbb{P}(X_{n-m}=j \mid X_0=i)$.
Any homogeneous MC $(X_n)$ on a finite state space $S$ gives rise to its \emph{transition matrix family}, \ie a family  $\{P(n):S^2\to [0,1]\}_{n\geq 0}$ whose $ij$-th entry $p_{ij}(n)=P(n)(i,j)$ describes the conditional \emph{transition probabilities}:
\[
p_{ij}(n) = \mathbb{P}(X_n = j \mid X_0=i)\text{.}
\] 
The transition matrix family satisfies $P(0)=I$ and $P(m+n)=P(m)\cdot P(n)$, where $I$ is the identity matrix and $\cdot$ is the matrix multiplication. Hence, $P(n) = P(1)^n$. The family $\{P(n)=P(1)^n\}_{n < \omega}$ yields an assignment $\pi:\mathbb{N}\to \kl(\mathcal{F}_{[0,\infty]})$ given for any $n\in \mathbb{N}$ by:
\[
\pi(\ast) = S, \quad \pi_n\colon S\to \mathcal{F}_{[0,\infty]} S; \pi_n(i)(j)= p_{ij}(n)
\text{.}\]
The assignment $\pi = (\pi_n)$ is an ordinary functor $\mathbb{N}\to \kl(\mathcal{F}_{[0,\infty]})$ and, hence, is a member of $[\mathbb{N},\kl(\mathcal{F}_{[0,\infty]})]^\Set$ and will be referred to as the \emph{transition functor} of the  chain $(X_n)$. 
  
Consider an equivalence relation $R$ on the state space $S$. For an abstract class $C$ of $R$ let us denote:
\[
p_{i,C}^n=\mathbb{P}(X_m\in C \text{ for some } m\geq n\mid X_0=i) \text{ and } p_{i,C} = p_{i,C}^0.
\]
\begin{lemma}\label{lemma:homogen_chain}
The family $\{p_{i,C}\}_{i\in S}$ satisfies:
\begin{align}
p_{i,C}= \left \{\begin{array}{cc} 1 & \text{ if } i\in C,\\ \sup_{n\geq 0} \sum_{j\in S} p_{j,C}\cdot p_{i,j}(n) & \text{ otherwise.}  \end{array}\right.
\end{align}
\end{lemma}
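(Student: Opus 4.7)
The plan for the proof splits along the two cases of the definition. The first case, $i\in C$, is immediate from the definition: starting from $X_0=i\in C$ we have $\mathbb{P}(X_0\in C\mid X_0=i)=1$, and since $\{X_0\in C\}\subseteq\{\exists m\geq 0,\,X_m\in C\}$, this already forces $p_{i,C}=p_{i,C}^0=1$.

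For the case $i\notin C$, my plan is first to derive the $n$-step identity
\[
  p_{i,C}^n \;=\; \sum_{j\in S} p_{ij}(n)\cdot p_{j,C}
\]
valid for every $n\geq 0$, and then to conclude by monotonicity in $n$. I would obtain the identity by conditioning on the value of $X_n$ and applying the Markov property together with the time-homogeneity assumption on $(X_n)$: the conditional probability $\mathbb{P}(\exists m\geq n,\,X_m\in C\mid X_n=j,X_0=i)$ reduces via Markov to $\mathbb{P}(\exists m\geq n,\,X_m\in C\mid X_n=j)$, and via homogeneity to $\mathbb{P}(\exists k\geq 0,\,X_k\in C\mid X_0=j)=p_{j,C}$; multiplying by $\mathbb{P}(X_n=j\mid X_0=i)=p_{ij}(n)$ and summing over $j\in S$ yields the identity.

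To finish, I would observe that the sequence $(p_{i,C}^n)_{n\geq 0}$ is non-increasing, since $\{\exists m\geq n{+}1,\,X_m\in C\}\subseteq\{\exists m\geq n,\,X_m\in C\}$. Hence its supremum is attained at $n=0$, giving
\[
  \sup_{n\geq 0}\sum_{j\in S} p_{j,C}\cdot p_{ij}(n) \;=\; \sup_{n\geq 0} p_{i,C}^n \;=\; p_{i,C}^0 \;=\; p_{i,C}\text{,}
\]
as required. It is worth noting that the hypothesis $i\notin C$ plays no essential role in this chain of equalities; it is really the presentation of the statement in the shape of the fixpoint expression (\ref{equation:simpler}) that motivates separating it from the case $i\in C$, which is dealt with directly. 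The only delicate step is the careful use of the Markov property together with homogeneity in obtaining the $n$-step identity; the remainder is a routine monotonicity argument and I do not anticipate any genuine obstacle.
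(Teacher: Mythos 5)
Your proof is correct and follows essentially the same route as the paper's: the law of total probability conditioning on $X_n$, the Markov property, and homogeneity yield $p_{i,C}^n=\sum_{j\in S}p_{ij}(n)\cdot p_{j,C}$, exactly as in the steps marked $(\dagger)$ and $(\dagger\dagger)$ in the paper. The only (harmless) difference is that you make explicit, via monotonicity of $(p_{i,C}^n)_n$, why $\sup_{n\geq 0}p_{i,C}^n=p_{i,C}$, a fact the paper uses without comment in its first equality; your remark that the hypothesis $i\notin C$ is not actually needed for the supremum formula is also accurate.
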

\begin{proof}
It is clear that if $i\in C$ then $p_{i,C}=1$. For $i\notin C$ we have:
\begin{align*}
&p_{i,C} = \sup_{n\geq 0} p_{i,C}^n = \sup_{n\geq 0} \sum_{j\in S}\mathbb{P}(X_m\in C, m\geq n\mid X_n=j,X_0=i) \cdot \mathbb{P}(X_n=j\mid X_0=i) \stackrel{\dagger}{=}\\
&\sup_{n\geq 0} \sum_{j\in S}\mathbb{P}(X_m\in C, m\geq n\mid X_n=j) \cdot \mathbb{P}(X_n=j\mid X_0=i) \stackrel{\dagger\dagger}{=}\\
&\sup_{n\geq 0} \sum_{j\in S}\mathbb{P}(X_m\in C, m\geq 0\mid X_0=j) \cdot \mathbb{P}(X_n=j\mid X_0=i) =\sup_{n\geq 0} \sum_{j\in S}p_{j,C}\cdot p_{i,j}(n).
\end{align*}
The identity $(\dagger)$ follows by $(X_n)$ being markovian and $(\dagger\dagger)$ by homogeneity of the given process.
\end{proof}

\begin{theorem}
 The relation $R\subseteq S\times S$ is a $!\colon \mathbb{N}\to 1$-bisimluation on the transition functor $\pi$  of a homogeneous MC $(X_n)_{t\geq 0}$ provided that for any $(i,j)\in R$ and any  abstract class $C$ of $R$ we have:
\begin{equation}
\mathbb{P}(X_n\in C \text{ for some } n\geq 0\mid X_0=i) = \mathbb{P}(X_n\in C \text{ for some } n\geq 0\mid X_0=j).\label{id:equal_probab}
\end{equation}
\end{theorem}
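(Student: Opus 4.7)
My plan is to exhibit $R$ as the kernel pair of an explicit $!_{\mathbb N}$-behavioural morphism, namely the canonical quotient $f\colon S \to S/R$ in $\Set$. Its kernel pair in $\Set$ is $R$ by construction, so the task reduces to verifying that $f$ satisfies equation~\eqref{equation:strong_equ_morphism} for a suitable lax functor $\pi'\in[1,\kl(\mathcal F_{[0,\infty]})]^{\Set}$. By the simplification~\eqref{equation:simpler}, this is equivalent to producing an endomorphism $\pi'_{\ast}\colon S/R \to \mathcal F_{[0,\infty]}(S/R)$ with
\[
  \mu x.(f^\sharp \vee x\circ \pi_1) \;=\; \pi'_{\ast}\circ f^\sharp
  \quad\text{in}\quad \kl(\mathcal F_{[0,\infty]}).
\]
The natural candidate is $\pi'_{\ast}([i])(C)\defeq p_{i,C}$. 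The hypothesis~\eqref{id:equal_probab} is precisely the statement that this assignment is independent of the choice of representative $i\in[i]$, so $\pi'_{\ast}$ is well defined.

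The central calculation is to identify the least fixpoint $\mu H\defeq \mu x.(f^\sharp\vee x\circ\pi_1)$ pointwise. Since $\kl(\mathcal F_{[0,\infty]})$ is $\Cpoj$-enriched, $\mu H$ is the $\omega$-supremum of the Picard iterates starting from $\bot$ and hence the least non-negative solution of the Bellman-type identity
\[
  v(i)(C) \;=\; \max\!\Bigl(\mathbf 1_C(i),\ \textstyle\sum_{j\in S}p_{ij}(1)\,v(j)(C)\Bigr).
\]
The hitting-probability family $(i,C)\mapsto p_{i,C}$ is a non-negative solution of this recursion---for $i\in C$ it equals $1$, while for $i\notin C$ the identity is exactly Lemma~\ref{lemma:homogen_chain} specialised to $n=1$---and a standard minimality argument for hitting probabilities identifies it as the least such solution. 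This gives $\mu H(i)(C)=p_{i,C}$, which combined with the hypothesis yields the required equation $\mu H=\pi'_{\ast}\circ f^\sharp$.

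The hard part is the identification $\mu H(i)(C)=p_{i,C}$. The Kleisli category is only $\Set$-right distributive with respect to countable joins and is not left distributive, so one cannot unfold the fixpoint naively as $\bigvee_n f^\sharp\circ\pi_n$; that expression computes only $\sup_n\mathbb P(X_n\in C\mid X_0=i)$, which is strictly smaller than $p_{i,C}$ in general. Lemma~\ref{lemma:homogen_chain} is precisely the bridge that matches the two recursive characterisations, bypassing the missing left distributivity. Once this identification is settled, verifying the lax-functor obligations on $\pi'$ (with $id_{S/R}\leq \pi'_{\ast}$ immediate from $p_{i,[i]}=1$) completes the argument, and $R$ is realised as the kernel pair of the $!_{\mathbb N}$-behavioural morphism $f$.
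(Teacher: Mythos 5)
Your proof follows the same route as the paper's: both exhibit the canonical quotient $f\colon S\to S/R$ as the behavioural morphism, identify $\mu x.(f^\sharp\vee x\circ\pi_1)(i)(C)$ with the hitting probability $p_{i,C}$, and use hypothesis~\eqref{id:equal_probab} only to make the refinement map $[i]\mapsto p_{i,(-)}$ well defined on the quotient. The only difference is one of detail: the paper obtains the fixpoint identification by citing \cref{lemma:homogen_chain} together with~\eqref{equation:simpler}, whereas you spell out the underlying minimality argument (Picard iterates versus the minimal non-negative solution of the hitting recursion) that this citation leaves implicit.
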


\begin{proof}
Let $f\colon S\to S_{/R}; i\mapsto [i]_{/R}$. By \cref{lemma:homogen_chain} and (\ref{equation:simpler}) we have: $\mu x.(f^\sharp \vee  x\circ \pi_1)(i)(C) = p_{i,C}$. Satisfaction of the identity (\ref{id:equal_probab}) is equivalent to existence of a map $\beta\colon S_{/R}\to \mathcal{F}_{[0,\infty]}S_{/R}$ which makes $\mu x.(f^\sharp \vee  x\circ \pi_1) = \beta \circ f^\sharp$ hold. This proves the assertion.
\end{proof}

\section{Conclusion}
\label{sec:conclusion}

\looseness=-1
In this paper we introduced a general definition of behavioural models with explicit time flow and their behavioural theory. The framework is based on (lax) functors over order-enriched categories, typically Kleisli categories. This approach allows us to encode in the index category how computations are observed (\eg if time durations are associated to single steps or entire computations) while abstracting from other computational aspects which are modelled in the base category. 
A key advantage of this separation is that we can fit into our setting many models of interest using Kleisli categories for standard monads like: the powerset monad, the quantalic monad, the convex set monad, the generalised multiset monad, and many more (see \cite{brengos2015:jlamp,brengos2015:lmcs} for more examples of compatible monads).

Although the categories induced by these monads do not necessarily satisfy our main assumptions, namely left distributivity, our framework is still applicable: we have shown that one only needs an embedding into category a category that satisfies our main assumptions.
As an example we applied this approach to weighted transition systems thus covering fully probabilistic systems and discrete Markov chains, among others.
This technique builds on our previous work on saturation and weak bisimulation: in \cite{brengos2015:jlamp} we provided a general method for constructing such embeddings and identified the class embeddings compatible with saturation.
Notably, these results apply also to categories different from $\Set$; in \loccit we have also considered presheaves, compact Hausdorff spaces and measurable spaces.

Our results are built on lax functors and the theory of saturation which we introduced and developed in a series of works \cite{brengos2015:corr,brengos2015:jlamp,brengos2015:lmcs} that provided a general coalgebraic characterisation of weak behavioural equivalences covering many types of systems of interest.
In this paper (and its conference version, \cite{brengos2016:concur}) we extended the theory of saturation and saturation-based behavioural equivalence developing the notion of \emph{general saturation} for lax functors on a monoid category and \emph{$q$-behavioural equivalence}.

Our framework provides a rich behavioural theory that encompasses wide range of behavioural equivalences found in the literature: we have shown that timed bisimulation, timed language equivalence, as well as their weak and time-abstract counterparts, are all instances of $q$-behavioural equivalence.
Moreover, we proved that all these notions of equivalences are naturally organised by their discriminating power to form a spectrum (\cref{fig:timed-equivalences-spectrum}) and that this result does not depend on the type of the systems under scrutiny. 

This is not the first work to consider timed behavioural models and their behavioural theory from a categorical perspective. 
In \cite{gribovskaya:tamc2010}, \citeauthor{gribovskaya:tamc2010} present a categorical view of timed weak bisimulation. Their approach is based on open-maps bisimulation and is limited to non-deterministic timed transition systems.
In \cite{kick:phdthesis}, \citeauthor{kick:phdthesis} presents a coalgebraic framework for behavioural models that combine timed transitions with discrete ones: time-dependent computations are modelled by a suitable comonad over \Set and then combined with other behavioural aspects by means of comonad products.
The approach is inherently limited to systems that can be modelled in \Set, that distinguish between timed and discrete actions (thus excluding timed automata and timed CSP), and to strong timed bisimulation.
The last observation is the main distinguishing point between \loccit and this work: because of technical difficulties associated with comonad products, this approach appears less flexible then ours when behavioural equivalences beside strong timed bisimulation are considered. 

The categorical characterization of timed behavioural models paves the way for further interesting lines of research. 
One line is to extend our framework to support other index categories besides monoids. This would allow us to study more structured state spaces and computations like \eg those found in alternating games.
Another line is to extend the framework with results from the rich theory of coalgebras such as $\omega$-behaviours \cite{brengos2018:concur}, minimization \cite{bonchi2014:tcl-mini}, determinisation \cite{bonchi2013:lmcs-det}, and up-to techniques \cite{bonchi2014:lics-upto}.

\bibliography{biblio}


\appendix

\omittedproofs[\section{Omitted proofs}\label{proof:proofs}]

\end{document}